\newtheorem{thm}{Theorem}
\newtheorem{prop}{Proposition}
\theoremstyle{definition}
\newtheorem{example}{Example}
\newcommand{\calS}{\mathcal{S}}
\newcommand{\calL}{\mathcal{L}}
\newcommand{\calI}{\mathcal{I}}
\newcommand{\calR}{\mathcal{R}}
\newcommand{\calD}{\mathcal{D}}
\newcommand{\calU}{\mathcal{U}}
\newcommand{\calQ}{\mathcal{Q}}
\newcommand{\calH}{\mathcal{H}}
\newcommand{\calW}{\mathcal{W}}
\newcommand{\haschild}{\rightsquigarrow}
\newcommand{\total}{\text{total}}
\newcommand{\one}{\mathbf{1}}
\DeclareMathOperator{\sgn}{sgn}
\newcommand{\remove}[1]{}
\begin{document}

\title{\bf Energy-aware tree network formation \\ among computationally weak nodes\thanks{
A preliminary version of this paper entitled ``Peer-to-peer energy-aware tree network formation'' appeared in {\em Proceedings of the 16th ACM International Symposium on Mobility Management and Wireless Access (MOBIWAC)}, pages 1--8, 2018~\citep{MNTV18}. 
This full version extends the conference one in multiple ways. It contains all missing proofs of the theoretical statements and additional details on the correctness and efficiency of our methods for the lossless case. 
Besides presenting protocols only for the formation of arbitrary and binary tree networks, we now further present protocols for the formation of $k$-ary tree networks, for any integer $k\geq 2$, building upon ideas used for binary trees. For such networks, we also show that it is possible to achieve particular energy distributions when there is no energy loss, by designing new protocols for these cases as well. 
 The evaluation of our methods in the lossy case has been extended to compare the efficiency of the protocols in terms of the energy that has been lost until a stable energy distribution has been reached.  
Our discussion on related work, generalizations of our model and open problems has also been extended significantly. Moreover, many examples have been included throughout the paper to illustrate how all proposed protocols work in particular cases. 
We would like to thank Dimitrios Tsolovos for fruitful discussions at early stages of this work that led to the publication of the preliminary conference version of this paper. 
This work has been partially supported by the Greek State Scholarships Foundation (IKY), and by the Alexander S. Onassis Public Benefit Foundation.}}

\author{Adelina Madhja \quad Sotiris Nikoletseas}
\affil{\em Department of Computer Engineering and Informatics, University of Patras, Greece \\
\em Computer Technology Institute and Press ``Diophantus" (CTI), Greece}

\author{Alexandros A. Voudouris}
\affil{\em Department of Computer Science, University of Oxford, UK}

\date{}

\maketitle 

\begin{abstract}
We study the fundamental problem of distributed network formation among mobile agents of limited computational power that aim to achieve energy balance by wirelessly transmitting and receiving energy in a peer-to-peer manner. Specifically, we design simple distributed protocols consisting of a small number of states and interaction rules for the formation of arbitrary and $k$-ary tree networks. Furthermore, we evaluate (theoretically and also using computer simulations) a plethora of energy redistribution protocols that exploit different levels of knowledge in order to achieve desired energy distributions among the agents which require that every agent has exactly or at least twice the energy of the agents of higher depth, according to the structure of the network. Our study shows that without using any knowledge about the network structure, such energy distributions cannot be achieved in a timely manner, meaning that there might be high energy loss during the redistribution process. On the other hand, only a few extra bits of information seem to be enough to guarantee quick convergence to energy distributions that satisfy particular properties, yielding low energy loss.  
\end{abstract}

\section{Introduction}
During the last few decades, the problem of how to carefully exploit the technology of Wireless Power Transfer (WPT)~\citep{cannon09,magnetic,battery,IH11,WCS04} in (mobile) ad hoc networks has been one of the main subjects of study in the related literature on sensor networks. In particular, special nodes can be deployed in the network area and exploit WPT in order to replenish the energy reserves of the network nodes (which drain their batteries while performing sensing and communication tasks), allowing them to sustain their normal operation for longer periods of time, and thus considerably extending the network lifetime; see~\citep{nikole} for an introduction to this topic. 

Further recent developments on WPT-related technologies offer mobile devices the capability to achieve bi-directional energy transfer, and enable peer-to-peer energy exchanges between network nodes \citep{dPCGCMP15, SCP15}, thus motivating new applications, like energy sharing between electric vehicles and portable devices.
Towards this goal, \citet{WKFP16} recently designed PowerShake, a WPT-based power sharing system between mobile devices, that allows the users to have control over the energy in their personal devices and {\em trade it} with others on demand. Such a system can be applied in vehicular ad hoc networks (VANETS)~\citep{vanet,vanet2,vehicle1,vehicle2}, especially with the very recent development and massive demand of electric vehicles.

Of course, one can think of many other possible and more critical applications of this technology. For instance, imagine a number of tiny medical devices, equipped with the appropriate sensors, which are injected into a patient's body in order to monitor her medical status. Due to their size, such devices would have to be computationally weak, with limited memory and energy reserves. However, by carefully exploiting their communication capabilities, these devices can coordinate with each other in order to form complex network structures with the goal of performing more advanced computations and also help each other survive by exchanging energy. 

\citet{MS14} initiated the study of network formation among populations of computationally weak agents. Their model is inspired by the population protocol model of \citet{AAE07} and the mediated population protocol of \citet{MCS11}, and assumes that the agents do not share memory or exchange messages unless they interact, in which case they can connect to each other and form particular network structures. The agents are required to collectively {\em converge} to a stable state, even though they cannot grasp the status of the entire population. The authors design, prove the correctness, and analyze the time complexity of a series of protocols for the formation of many interesting networks focusing, among others, to stars, lines and rings. Moreover, they design generic protocols that are capable of simulating Turing Machines in order to construct large classes of networks.   

Achieving energy balance over the network nodes using peer-to-peer WPT methods can ensure the stable operation of the network and the extension of its lifetime. Recently, \citet{NRR17} showed that energy balance can be achieved by populations of devices (agents) that interact in an opportunistic manner. Whenever two agents happen to come in close proximity they can decide whether to exchange energy based on their current configurations (energy levels, memory, etc). The authors simulate the interactions between pairs of nodes by assuming the existence of a probabilistic scheduler whose purpose is to decide which pairs of nodes will interact in every step of time. Moreover, they assume that during an energy exchange, a {\em constant} fraction of the transmitted energy is lost.

\citet{madhja_mswim2016} studied the problem of energy-aware network formation among populations of computationally weak agents, a problem that combines elements from the models of \citet{MS14} and \citet{NRR17}. The goal is to design distributed protocols so that the nodes connect with each other in order to form a {\em star} network and achieve a particular energy distribution, where the central node stores half of the network energy, while the remaining energy is evenly distributed among the rest of the nodes. Another critical difference of the model of \citet{madhja_mswim2016} from that of \citet{NRR17} is that the energy that is lost during any exchange {\em varies} from interaction to interaction; specifically, in their simulations, \citet{madhja_mswim2016} treat the loss of energy as a random variable following a normal probability distribution with given expected value and variance. 

\subsection{Our contribution}
In this paper, we adopt the model of \citet{madhja_mswim2016} and consider scenarios involving a population of computationally weak mobile agents aiming to (1) distributively form a {\em tree} network structure, and (2) achieve some sort of energy balance. In particular, we focus on the construction of arbitrary and $k$-ary tree networks (for integer $k \geq 2$), and energy distributions with the property that every node has exactly or at least twice the energy of each of its children. 

We start by designing, proving the correctness, and analyzing the time complexity of two simple protocols for the construction of arbitrary and binary tree networks; these protocols require only a few states (four for arbitrary trees and six for binary) and interaction rules (five and eight, respectively). 
Then, we extend the second protocol (used for the construction of binary trees), so that the agents can form more general $k$-ary tree networks.
We also present two simple interaction rules that allow the nodes to locally infer useful information about their own depth (distance from the root node) and the height of the tree. 

To achieve the desired energy distributions, which depend on the network structure, we present of series of energy redistribution protocols, which differ on the amount of energy that is exchanged among the agents during their interactions, and on the knowledge that the agents need to have about the network structure in order to achieve the distributions. Our goal is to test the limits of what type of energy distributions are possible to achieve using devices with different computational capabilities. 
 
\begin{itemize}
\item 
For the {\em lossless} case, where no energy is lost during the interactions among the agents, we theoretically showcase the properties of our protocols; this assumption allows us to argue whether, even under unrealistically perfect conditions, our protocols are able to converge to the desired energy distributions. In particular, we show that one of our protocols is able to achieve an energy distribution according to which each node has {\em exactly} twice the energy of its children. However, since this protocol requires complete information about the structure of the network, we then focus on protocols that are oblivious to such information or utilize some of it, and show that they can achieve relaxed energy distributions (where each node has at least twice the energy of its children, with the possible exception of the root node of the tree) in some cases.

\item
For the {\em lossy} case, where energy is lost during the interactions between the nodes, we conduct simulations in order to fine-tune several parameters used by our protocols, and evaluate their performance on several metrics, which aim to measure the convergence time of the protocols and the quality of the outcome energy distribution compared to the ideal one. Surprisingly, we observe that two of our oblivious protocols (that do not require any global information about the network) match, and sometimes outperform, our stronger protocols, which require some more concrete knowledge about the network. 
\end{itemize}

\subsection{Other related Work}
Wireless Power Transfer has been extensively studied in the context of (mobile) ad hoc networks. In most of these studies, powerful chargers are used with the sole purpose of replenishing the energy of the network nodes.
For instance, \citet{MNR15} consider multiple mobile chargers with limited energy and design efficient traversal and coordination strategies with the goal of extending the network lifetime of static sensor networks. In contrast, \citet{ABER15} consider mobile ad hoc networks and a single mobile charger with infinite energy that traverses the network in order to recharge the agents as required. Another variation of this flavor, is the recent work of \citet{MNV18} who consider mobile ad hoc networks and a single static charger that has the ability to adapt its charging power in order to balance the trade-offs between recharging the mobile agents and saving energy for future interactions.
\citet{nikoletseas2017wireless} experimentally show that WPT is a viable option, by performing experiments using real devices that aim to charge the sensors and keep the energy loss low. They also propose a protocol that achieves energy balance over the chargers. 
For a more extensive overview of WPT techniques as well as some of the key elements that make WPT possible, we refer the interested reader to the work of \citet{bi2016wireless} and the book by \citet{nikole}.

\citet{collaborative}, \citet{comnet_madhja2016} and \citet{collaborative2} considered collaborative WPT schemas and assumed some aspects of peer-to-peer energy exchange between the chargers. Specifically, the authors proposed protocols that allow the chargers, besides charging the networks nodes as usual, to also cooperate and charge each other. 
\citet{BS17} studied energy sharing in mobile social networks, by taking into account the charging patterns of devices and the social interactions between their owners. They also proposed an energy sharing model according to which the nodes of the mobile networks are paired into power buddies.
\citet{dhungana2018charging} studied a model according to which the mobile devices can be charged either by using a charging cable or in a peer-to-peer manner based on their interactions with other devices and the goal is to minimize the number of times a device needs to be charged via the cable. 
\citet{bulut2018crowdcharging} studied the potential of crowdcharging. They discussed its feasibility, the software and hardware challenges that emerge for its use, and also developed an application that builds a social network among the users and manages the entire process of power sharing between the mobile devices. 

\subsection{Roadmap}
The rest of the paper is structured as follows.
In Section~\ref{sec:prelim} we give necessary preliminary definitions and examples involving the tree network structures and the energy distributions that we aim to achieve. In Section~\ref{sec:tree-protocols} we present and analyze various tree network formation protocols as well as protocols that allow the agents to locally learn useful network characteristics. In Section~\ref{sec:energy-protocols} we present simple redistribution protocols and theoretically study their properties when there is no energy loss. Then, in Section~\ref{sec:experiments}, using computer simulations, we study the properties of our protocols even when there is random energy loss. We conclude in Section~\ref{sec:conclusion} with a short discussion on possible extensions of our model and many interesting open problems. 

\section{Preliminaries}\label{sec:prelim}
We consider a population of $n$ mobile agents (or {\em nodes}) $V=\{v_1, v_2, ..., v_n\}$ of limited computational power and memory, who move around in a bounded networking area. When two nodes come in close proximity and enter the communication range of each other, they interact according to an {\em interaction protocol} which essentially defines the information that the nodes must exchange, and how they should update their configurations. The objective of the interaction protocol is the nodes to eventually distributively form a desired network structure and achieve an energy distribution. In the following, we give necessary definitions and terminology regarding node configuration, interactions, tree formation, energy distributions, exchanges and loss. We also present a few examples to help the reader fully grasp the details of the setting.  
 
\subsection{Configurations and interactions} 
We assume that the time runs in discrete steps. For every time step $t \in \mathbb{N}_{\geq 0}$, each node $v$ is in a state $q_v(t)$ from a set $\calQ$ of possible states, has energy $E_v(t)$, memory $M_v(t)$, and network connections $N_v(t)$; the tuple $c_v(t) = [q_v(t),E_v(t),M_v(t),N_v(t)]$ is the {\em configuration} of node $v$ at time $t$. During each time step $t$, a pair of nodes $(u,v)$ interacts, and the nodes update their configurations according to the rules of the interaction protocol, which define {\em transitions} of the form
$$(c_u(t),c_v(t),\text{\em condition}) \rightarrow (c_u(t+1),c_v(t+1)).$$
Such a transition indicates that, if a specific condition regarding the nodes or the network is met, then the configurations of the two participating nodes $u$ and $v$ are updated from $c_u(t)$ and $c_v(t)$ at time $t$ to $c_u(t+1)$ and $c_v(t+1)$ at time $t+1$, respectively. 

In general, the movement of the nodes can be highly arbitrary as they may correspond to smart devices that are carried around by humans or robots following possibly unpredictable movement patterns performing various tasks. Following previous work~\citep{madhja_mswim2016,MS14,NRR17}, we abstract the movement of the nodes by assuming that all interactions are planned by a {\em fair scheduler}, which satisfies the property that all possible interactions will eventually occur. Specifically, we consider the existence of a {\em fair probabilistic scheduler}~\citep{AAE07} according to which, during every time step, a single pair of nodes is selected independently and uniformly at random among all possible pairs of nodes in the population, and the selected nodes then interact following the rules of the interaction protocol. Observe that, due to the behavior of the scheduler, every pair of nodes will most probably interact in a span of $\Theta(n^2)$ time steps.

We should emphasize here that the purpose of the scheduler is to {\em indicate} which nodes are close to each other at each time step and can therefore interact. Essentially, the scheduler hides the movement and location characteristics of the nodes and allows us to avoid defining any particular mobility model to specify how the location, speed, and direction of the nodes changes over time.

\subsection{Tree formation}
Our goal in this paper is to define interaction protocols that can construct tree networks. Hence, we assume that the nodes can form parent-child connections when they interact with each other and are allowed to do so according to the rules of the protocol, depending on their configurations. 
When such a connection is established between two nodes $u$ and $v$, a {\em directed} edge from $u$ to $v$, denoted as $u \haschild v$, is formed in the network. 
The direction indicates that $u$ is the parent node and $v$ is the child node, and helps both nodes realize who of them is the parent when they communicate again in future interactions. 
Each node can have at most one parent, but multiple children, depending on the tree we are aiming for. 

We are interested in rooted spanning tree structures consisting of {\em all} nodes in the network; note that the exact number of the nodes in the network is considered unknown and our network formation protocols do not rely on such global information.  
Our goal is to construct {\em arbitrary} trees where each node may have any number of children, and {\em $k$-ary} trees (for any integer $k \geq 2$) where every node can have at most $k$ children. 
We call a node $v$ {\em isolated} if it is not connected to any other node in the network, {\em leaf} if it has a parent but no children, {\em internal} if it has a parent and at least one child, and {\em root} if it does not have a parent but has children.

\subsection{Energy distributions}\label{sec:def-distributions}
Another goal of this paper is to define protocols that can achieve desired energy distributions over the nodes of the network. We define three interesting types of energy distributions. All of them demand that nodes of lower depth (closer to the root of the tree) have more energy than nodes of higher depth. This is intuitive in applications related to {\em data propagation} and {\em energy-efficient routing}, where the tree is used as the communication network according to which the nodes can only communicate with the nodes they are connected to (their parents and children; the only nodes they aware of). It is intuitive that nodes that are closer to the data sink (according to the network) are involved in almost all propagations and need plenty of energy to function properly for long periods of time, while distant nodes rarely communicate that much and they can sustain normal operation using less energy~\citep{survey1,survey2,survey3,survey4,routing1}. 
\footnote{We remark that most energy-efficient routing protocols in wireless sensor networks aim to improve metrics such as the network lifetime. While the particular energy distributions we define here can also be thought of as aiming to extend the network lifetime, our primary goal is to achieve energy balance among the nodes that is consistent with the type of network that is formed, which in this paper is a tree.}

The first energy distribution that we consider is the strongest one (in terms of the condition that must be satisfied) and requires that every node has {\em exactly} twice the energy of each of its children. We refer to such an energy distribution as {\em exact}. In particular, we say that a parent-child pair of nodes $(p,c)$ is in an {\em exact energy equilibrium} at time $t$ if $E_p(t) = 2 E_c(t)$. Then, a tree network {\em converges} to an exact energy distribution at time $t$ if every parent-child pair of nodes $(p,c)$ is in an exact energy equilibrium. 

The second type of energy distributions is a relaxation of the first one and requires every node to have {\em at least} twice the energy of each of its children. We refer to such distributions as {\em relaxed}; observe that there are infinitely many relaxed energy distributions, in contrast to the exact distribution which is unique. In particular, we say that a parent-child pair of nodes $(p,c)$ is in a {\em relaxed energy equilibrium} at time $t$ if $E_p(t) \geq 2 E_c(t)$. Then, a tree network converges to a relaxed energy distribution at time $t$ if every parent-child pair of nodes $(p,c)$ is in a relaxed energy equilibrium. 

Finally, the third type we consider is yet another relaxation of the exact energy distribution, which is however more restrictive than the relaxed one. According to this type of energy distributions, we require that every node has exactly twice the energy of each of its children {\em except} possibly for the root of the tree, which can have more or less than twice the energy of its children; essentially the root assists the other nodes of the network achieve an exact distribution. We refer to such an energy distribution as {\em exact up to the root}. 

\subsection{Energy exchange and loss}
To achieve the aforementioned energy distributions, the nodes must exchange energy when they meet and interact (according to the choices of the scheduler of course). A {\em redistribution protocol} defines the conditions under which such exchanges take place as well as the amount of energy that one node has to transfer to another during their interaction. Energy transfer in such a peer-to-peer manner is possible by carefully utilizing the WPT technology, which we here use as a black box.

Due to the nature of wireless energy technology, any energy exchange may induce some irrevocable {\em energy loss}. Following~\citep{madhja_mswim2016,NRR17}, we assume that whenever a node $v$ is supposed to transfer an amount $x$ of energy to another node $u$, a fraction $\beta$ of $x$ is lost, and $u$ actually receives $(1-\beta) x$ units of energy. In general, $\beta$ is unknown and may vary from interaction to interaction (during which there is indeed energy exchange among the participating nodes -- not all interactions lead to energy exchange). When we study the theoretical properties of our protocols, we mainly focus on the {\em lossless} case where $\beta=0$. However, in our simulations, we additionally consider the {\em lossy} case, where $\beta$ is a random variable following some normal distribution. 
We remark that we do not account for the possible energy loss due to movement or other activities of the nodes, as this can be supplied by other sources: the nodes may be carried around by humans, or robots which can use gas (or some other type of energy) in order to move. 

\subsection{An example}
Let us present a quick example to fully understand the tree network structures and the energy distributions we are aiming for. 

\begin{example}\label{example:trees-distributions}
Consider an instance with $n=7$ nodes such that they initially have a total of $100$ units of energy (for the purposes of the example, it is irrelevant how this total energy is split among the nodes), and assume that $\beta=0$ (no energy is lost when energy is exchanged). Figure~\ref{fig:example} depicts three possible tree networks and three different energy distribution that can be formed while the nodes interact with each other.
\begin{itemize}
\item The first network is an arbitrary tree consisting of one root node and $6$ leaf nodes; this special type of a tree is known as a star: the root is the center and the leaves are peripherals. The energy distribution is exact since the root node stores twice the energy of each of the other nodes. 
\item The second network is a binary tree where every node has at most two children, and the energy distribution is relaxed since every node has at least twice the energy of each of its children.
\item The third network is a $3$-ary tree where every node has at most three children, and the energy distribution is exact up to the root since only the root does not have twice the energy of each of its children.  
\end{itemize} 
Observe that the exact energy distribution is unique and all nodes at the same depth (distance from the root) end up having exactly the same energy (in our example this is equal to $12.5$). This is not true for relaxed and exact up to the root energy distributions as one can observe from our example, where nodes of the same depth might end up storing different amounts of energy (for instance, in the second network the root's children have $15$ and $21$ units of energy, respectively).
\hfill
\qed 
\end{example}

\begin{figure}[t!]
\centering
   \frame{\includegraphics[page=1,scale=1.1]{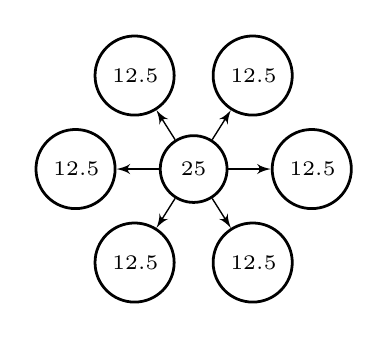}}
   \ \ \
   \frame{\includegraphics[page=2,scale=1.1]{example}}
   \ \ \
   \frame{\includegraphics[page=3,scale=1.1]{example}}
\caption{The tree networks and energy distributions of Example~\ref{example:trees-distributions}; the number in each node represents the units of energy that the node stores so that the sum over all nodes is equal to $100$.
(a) the left network is an arbitrary tree (star) and the nodes achieve an exact energy distribution (the root node has exactly twice the energy of every other node). 
(b) the middle network is a binary tree and the nodes achieve a relaxed energy distribution (all nodes have at least twice the energy of their children). 
(c) the right network is a $3$-ary tree and the nodes achieve an exact up to the root energy distribution (all nodes but the root have exactly twice the energy of their children).}
\label{fig:example}
\end{figure}

\section{Tree network formation}\label{sec:tree-protocols}
In this section, we present distributed protocols for the formation of tree networks. We start with arbitrary trees and binary trees, and then extend the ideas used for binary trees to the more general case of $k$-ary trees, for any integer $k \geq 2$. Finally, we present two interaction rules that can be used by the nodes in order to learn their own depth  and the total height of the tree. 

\subsection{Arbitrary trees}\label{sec:arbitrary-tree}
For the formation arbitrary trees, we consider the protocol {\bf TreeConstructor} which uses $4$ states and the interaction rules of Table~\ref{TreeConstructor-rules}; see also Figure~\ref{fig:tree_rules} for a graphical explanation of the rules of the protocol. In particular, we say that a node $v$ is in state 
\begin{itemize}
\item $S$ if $v$ is isolated,
\item $L$ if $v$ is a leaf,
\item $I$ if $v$ is internal, and
\item $R$ if $v$ is a root.
\end{itemize}
According to the interaction rules, when two nodes $u$ and $v$ interact with each other, node $v$ becomes a child of $u$ when $v$ is isolated, or when both $u$ and $v$ are root nodes. Essentially, we allow different trees to get connected only when their roots interact (viewing isolated nodes as $1$-node trees).

\begin{table}[t]
\centering
\begin{tabular}{cc}
\noalign{\hrule height 1pt}\hline
rule & transition \\ 
\noalign{\hrule height 1pt}\hline
$\calS\calS$ & $(S,S) \rightarrow (R,L,\haschild)$ \\
$\calR\calS$ & $(R,S) \rightarrow (R,L,\haschild)$ \\
$\calI\calS$ & $(I,S) \rightarrow (I,L,\haschild)$ \\
$\calL\calS$ & $(L,S) \rightarrow (I,L,\haschild)$ \\
$\calR\calR$ & $(R,R) \rightarrow (R,I,\haschild)$ \\
\noalign{\hrule height 1pt}\hline
\end{tabular}
\caption{The interaction rules of {\bf TreeConstructor}. The rule $\calS\calS$ implies that when two isolated nodes (in state $S$) interact, they form a parent-child connection, one of them becomes the root (its state changes to $R$) of the newly formed $2$-node tree, and the other becomes the leaf (its state changes to $L$). Similarly, the rule $\calR\calS$ implies that when a root node (in state $R$) and an isolated node (in state $S$) interact, they form a parent-child connection, and the isolated node becomes a leaf (its state changes to $L$). The rest of the rules can be interpreted similarly.}
\label{TreeConstructor-rules}
\end{table}

\begin{figure}[ph!]
\centering

   \frame{\includegraphics[page=1,scale=1.1]{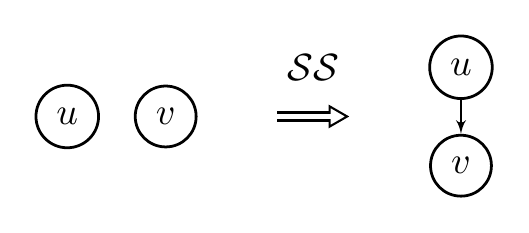}}
   %\vspace{5pt}
   \ \ \
   \frame{\includegraphics[page=2,scale=1.1]{tree_rules}}
   \vspace{8pt}
   
   \frame{\includegraphics[page=3,scale=1.1]{tree_rules}}
   %\vspace{5pt}
   \ \ \
   \frame{\includegraphics[page=4,scale=1.1]{tree_rules}}
   \vspace{8pt}
   
   \frame{\includegraphics[page=5,scale=1.1]{tree_rules}}
   \vspace{8pt}
   
   \frame{\includegraphics[page=6,scale=1.1]{tree_rules}}
   %\vspace{5pt}
   \ \ \
   \frame{\includegraphics[page=7,scale=1.1]{tree_rules}}

\caption{Graphical representation of the interaction rules used by the two protocols {\bf TreeConstructor} and {\bf BinaryTreeConstructor}.}
\label{fig:tree_rules}
\end{figure}

\begin{figure}[ph!]
\centering

   \frame{\includegraphics[page=1,scale=0.85]{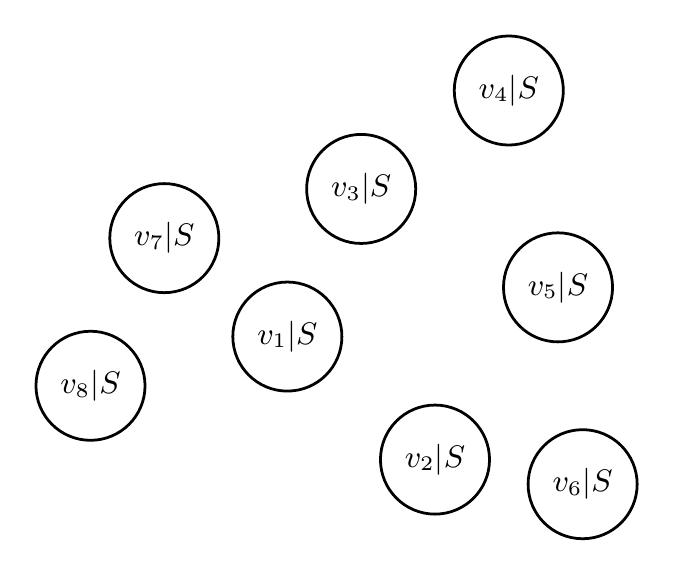}}
   \ \ \
   \frame{\includegraphics[page=2,scale=0.85]{example_arbitrary}}
   \vspace{8pt}
   
   \frame{\includegraphics[page=3,scale=0.85]{example_arbitrary}}
    \ \ \ 
   \frame{\includegraphics[page=4,scale=0.85]{example_arbitrary}}
   \vspace{8pt}
   
   \frame{\includegraphics[page=5,scale=0.85]{example_arbitrary}}
   \ \ \
   \frame{\includegraphics[page=6,scale=0.85]{example_arbitrary}}
   \vspace{8pt}
   
   \frame{\includegraphics[page=7,scale=0.85]{example_arbitrary}}
   \ \ \   
   \frame{\includegraphics[page=8,scale=0.85]{example_arbitrary}}

\caption{An example execution of the protocol {\bf TreeConstructor}. Each circle contains the name of the corresponding node and its current state. The ordering according to which the fair probabilistic scheduler selects the nodes that interact is as follows: $(v_1,v_2)$, $(v_3,v_4)$, $(v_1,v_4)$, $(v_3,v_5)$, $(v_1,v_2)$, $(v_1,v_3)$, $(v_2,v_6)$, $(v_7,v_8)$, $(v_1,v_7)$. Only the interactions that lead to new connections are depicted. See the relevant discussion in Example~\ref{example:abitrary}.}
\label{fig:example_arbitrary}
\end{figure}

Before we continue with the analysis of the protocol {\bf TreeConstructor}, we present an example of how it operates in order to construct an arbitrary tree network. 

\begin{example}\label{example:abitrary}
Consider an instance with $n=8$ nodes $\{v_1, ..., v_8\}$, and assume the following (incomplete) ordering over pairs of nodes, according to which the fair probabilistic scheduler selects the nodes that interact: $(v_1,v_2)$, $(v_3,v_4)$, $(v_1,v_4)$, $(v_3,v_5)$, $(v_1,v_2)$, $(v_1,v_3)$, $(v_2,v_6)$, $(v_7,v_8)$, $(v_1,v_7)$. Figure~\ref{fig:example_arbitrary} depicts how the arbitrary tree gets formed following the rules of {\bf TreeConstructor}. 
All nodes are initially isolated. We now describe the behavior of the protocol for each interaction:
\begin{itemize}
\item
When the nodes $(v_1,v_2)$ interact, they get connected according to rule $\calS\calS$: $v_2$ becomes a child of $v_1$, $v_1$ becomes a root node, and $v_2$ becomes a leaf node; since both nodes were isolated, the choice as to which of them becomes the parent is arbitrary in this case. The same happens at the second step for the pair $(v_3,v_4)$: $v_4$ becomes a child of $v_3$, $v_3$ becomes a root node, and $v_4$ becomes a leaf node. 

\item 
For the next pair $(v_3,v_5)$, since $v_3$ is a root node and $v_5$ is isolated, rule $\calR\calS$ gets invoked and the nodes get connected: $v_5$ becomes a child of $v_3$ and a leaf node.

\item  
When nodes $v_1$ and $v_4$ interact, nothing happens since $v_1$ is a root node and $v_4$ is a leaf node. {\bf TreeConstructor} has no rule for this case since such nodes should not get connected: $v_4$ already has a parent, and leaf nodes are not allowed to become parents of root nodes since such connections could create cycles when the interacting leaf and root nodes are already part of the same tree subnetwork. 

\item 
For the pair $(v_3,v_5)$ the rule $\calR\calS$ get invoked again and leads to $v_5$ becoming a child of $v_3$ and a leaf node, while for the pair $(v_1,v_2)$ nothing happens since they are already connected to each other. 

\item 
For the pair $(v_1,v_3)$, since both $v_1$ and $v_3$ are root nodes, the rule $\calR\calR$ get invoked and establishes a connection between them: $v_3$ becomes a child of $v_1$, $v_1$ remains a root node, and $v_3$ now becomes an internal node; again the choice as to which of them becomes the parent is arbitrary. 

\item
The remaining interactions and connections are similar: $(v_2,v_6)$ leads to $v_6$ becoming a child of $v_2$ and a leaf node; the interaction $(v_7,v_8)$ leads to $v_8$ becoming a child of $v_7$, $v_7$ becomes a root node, and $v_8$ becomes a leaf node; finally, the interaction $(v_1,v_7)$ leads to $v_7$ becoming a child of $v_1$ and an internal node.
\end{itemize}
Of course, with different choices for which node becomes the parent when possible, the tree network might end up having a different structure. Any other future interactions chosen by the scheduler cannot affect the structure since the formation process is already complete and there are no rules for the deletion of already established connections. 
\hfill \qed
\end{example}

Next, we show the correctness of the protocol and bound its time complexity. 

\begin{thm}
The protocol {\bf TreeConstructor} correctly constructs an arbitrary tree network, and its expected running time is $\Theta(n^2)$.
\end{thm}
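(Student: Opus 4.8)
\emph{Proof plan.} Write $T$ for the number of steps until the configuration stabilizes (no rule is applicable).

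\emph{Correctness.} I would prove by induction on the number of elapsed steps the invariant that the set of established parent--child edges always forms a directed rooted forest on $V$ in which each node's state records its role in its own tree: state $S$ for an isolated $1$-node tree, state $R$ for the root of a tree with at least two nodes, state $I$ for a non-root internal node, state $L$ for a non-root leaf. The base case is immediate since every node starts isolated in state $S$. For the inductive step I would check the five rules of Table~\ref{TreeConstructor-rules} one by one; the key point is that every rule merges two \emph{distinct} components --- $\calS\calS$ merges two singletons, $\calR\calS$, $\calI\calS$, $\calL\calS$ merge a singleton with a larger tree, and $\calR\calR$ merges two trees whose roots necessarily lie in different components since a tree has a unique root --- so no rule can create a cycle, and in each case the new states describe the new roles correctly by inspection (e.g.\ under $\calL\calS$ the former leaf acquires a child and becomes internal, while the former isolated node acquires a parent and becomes a leaf). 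For convergence, as long as at least two components remain there is always a rule-triggering pair (two singletons, a singleton with anything, or two distinct roots), so by fairness the scheduler eventually realizes one and the number of components strictly decreases; after $n-1$ such merges a single rooted spanning tree remains, and then no rule applies, so the configuration is stable.

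\emph{Upper bound $E[T]=O(n^2)$.} Exactly $n-1$ component-merging interactions occur. I claim that whenever the forest currently has $c$ components, the number of rule-triggering pairs is at least $\binom{c}{2}$: if $s$ of the components are isolated nodes and $m=c-s$ are non-trivial trees, this number equals $\binom{s}{2}+s(n-s)+\binom{m}{2}$, and a one-line computation shows this expression is non-decreasing in $s$, so (using that a non-trivial tree occupies at least two nodes, i.e.\ $s\ge\max\{0,2c-n\}$) it is at least its value at the smallest feasible $s$, which is $\binom{c}{2}$ when $s$ can be $0$ and $\binom{c}{2}+(2c-n)(n-c)\ge\binom{c}{2}$ otherwise. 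Hence at each step with $c$ components a merge occurs with probability at least $\binom{c}{2}/\binom{n}{2}$, so the expected number of steps to go from $c$ to $c-1$ components is at most $\binom{n}{2}/\binom{c}{2}$, and $E[T]\le\sum_{c=2}^{n}\binom{n}{2}/\binom{c}{2}=\binom{n}{2}\sum_{c=2}^{n}\frac{2}{c(c-1)}<2\binom{n}{2}=O(n^{2})$.

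\emph{Lower bound $E[T]=\Omega(n^2)$.} Let $\sigma$ be the first step at which no isolated node remains (this exists since the final configuration is a spanning tree). No rule ever creates an isolated node, so from step $\sigma$ on only rule $\calR\calR$ can fire; hence if $M$ is the number of tree components present at step $\sigma$, then $M-1$ further merges occur, and if $M\ge 2$ the process must later pass through a configuration with exactly two tree components, from which the only rule-triggering pair is the pair of the two roots, selected with probability $1/\binom{n}{2}$ per step. Thus $E[T-\sigma\mid M\ge 2]\ge\binom{n}{2}$, and therefore $E[T]\ge\Pr[M\ge 2]\cdot\binom{n}{2}$. It remains to show $\Pr[M\ge 2]=\Omega(1)$. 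The main obstacle is exactly this estimate: intuitively, isolated nodes are very fragile (each is absorbed at its first-ever selection, and whenever $s$ of them remain one is selected with probability $\Omega(s/n)$), so the overwhelming majority are absorbed --- typically forming several separate small trees through rule $\calS\calS$ --- long before any two trees happen to merge via $\calR\calR$, so at step $\sigma$ at least two trees usually survive; making this precise (ruling out, except with small probability, the run collapsing everything into a single tree before the last isolated node disappears) is the technical crux, and it also needs small values of $n$ handled separately (e.g.\ for $n=3$ one has $M=1$ surely, but then $\binom{n}{2}=\Theta(1)$ and the bound is trivial anyway).
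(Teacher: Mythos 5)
Your correctness argument is sound and is essentially the paper's: both hinge on the observation that every rule merges two \emph{distinct} components (the paper phrases its part (b) via the potential ``number of isolated nodes plus number of roots,'' which drops by one at each merging interaction, while you use a forest invariant plus fairness; these are interchangeable). The real divergence is in the running time, and there your upper bound is a genuine proof that the paper does not contain: the paper's justification is a single informal sentence --- every pair interacts within $\Theta(n^2)$ expected steps, so each node ``will definitely interact with the node that will end up becoming its parent'' --- which neither identifies the eventual parent in advance nor accounts for all $n-1$ such meetings having to occur. Your count of rule-triggering pairs is correct (the increment of $\binom{s}{2}+s(n-s)+\binom{c-s}{2}$ as $s$ grows by one is exactly $n-c\ge 0$, so the minimum over feasible $s$ is $\binom{c}{2}$), the configuration is frozen while no rule fires so the waiting time at $c$ components really is geometric with success probability at least $\binom{c}{2}/\binom{n}{2}$, and the telescoping sum gives $\mathbb{E}[T]<2\binom{n}{2}$. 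This is cleaner and stronger than what the paper does.

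The gap is exactly where you flagged it: the lower bound. Your reduction is correct --- after the last isolated node is absorbed only $\calR\calR$ can fire, so if $M\ge 2$ trees remain the process must pass through a two-tree configuration whose unique triggering pair costs $\binom{n}{2}$ expected steps, whence $\mathbb{E}[T]\ge\binom{n}{2}\Pr[M\ge 2]$ --- but $\Pr[M\ge 2]=\Omega(1)$ is left unproved, and without it the $\Omega(n^2)$ half of the $\Theta(n^2)$ claim is open. Note that $M=1$ forces some single node $v$ to remain unselected until $V\setminus\{v\}$ has already coalesced into one tree; coalescence requires every node of $V\setminus\{v\}$ to be selected at least once, a coupon-collector time of about $\tfrac12 n\ln n$ steps, during which a fixed $v$ avoids selection with probability roughly $(1-2/n)^{n\ln n/2}\approx 1/n$. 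A union bound over $v$ then lands exactly at threshold ($n\cdot\Theta(1/n)=\Theta(1)$), so this route needs either a sharper lower tail for the coalescence time or a conditioning argument; it is plausible and I believe true, but not a one-liner, and small $n$ must indeed be treated separately as you note. You have not missed an argument hiding in the paper: the paper offers no proof of the lower bound either, so completing this estimate would establish the $\Theta(n^2)$ claim more fully than the authors do.
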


\begin{proof}
To prove the correctness of {\bf TreeConstructor} we need to argue that 
\begin{itemize}
\item[(a)] no cycles will ever appear, and
\item[(b)] the network will eventually consist of a single connected component.
\end{itemize}
For (a), simply observe that the interaction rules of {\bf TreeConstructor} are such that {\em only different} components are allowed to get connected. Therefore, since initially all nodes are isolated (we can think of them as $1$-node trees), the only possible components that may form are trees. 
For (b), observe that at any step of time the network may consist of multiple components that are either isolated nodes or trees. Any interaction between an isolated node and any other node yields at least one less isolated node and at most one more root node, while any interaction between two root nodes yields one less root. We can think of the number of isolated nodes plus the number of roots as a potential which is initially equal to $n$ and decreases by exactly $1$ for every interaction involving isolated or root nodes; for any other type of interaction, its value remains unchanged. Hence, it eventually becomes equal to $1$, when there is only a single spanning tree left. 

For the running time, since we assume the existence of a fair probabilistic scheduler, every pair of nodes needs on average $\Theta(n^2)$ steps to interact and, therefore, during such a span of $\Theta(n^2)$ steps, each node will definitely interact with the node that will end up becoming its parent in the final tree network.
\end{proof}

\subsection{Binary trees}\label{sec:binary-tree}
For the formation of binary tree networks, we assume that each node $v$ is equipped with a register $w_v$ which initially stores a unique random number.\footnote{Assuming unique random numbers is only a simplification for the exposition of our results. In a real implementation, when an isolated node $v$ interacts with and connects as a child of another node $u$, we can set $w_u$ and $w_v$ equal to the current timestamp (which is unique) if $u$ is isolated as well, or we can set $w_v$ equal to $w_u$ in any other case like rule $\calU\calW$ suggests. This way, there is no randomness involved and everything is well-defined.} The values stored in these registers are used to define a monotonic merging of different components so that no cycles appear and the final outcome is a tree. 

\begin{table}[t]
\centering
\begin{tabular}{cc}
\noalign{\hrule height 1pt}\hline
rule & transition \\ 
\noalign{\hrule height 1pt}\hline
$\calS\calS$ & $(S,S) \rightarrow (R_1,L,\haschild)$ \\
$\calR\calS$ & $(R_1,S) \rightarrow (R_2,L,\haschild)$ \\
$\calI\calS$ & $(I_1,S) \rightarrow (I_2,L,\haschild)$ \\
$\calL\calS$ & $(L,S) \rightarrow (I_1,L,\haschild)$ \\
$\calR\calR$ & $(R_1,R_i,<) \rightarrow (R_2,I_i,\haschild)$  \\
$\calI\calR$ & $(I_1,R_i,<) \rightarrow (I_2,I_i,\haschild)$  \\
$\calL\calR$ & $(L,R_i,<) \rightarrow (I_1,I_i,\haschild)$ \\
$\calU\calW$ & $(u,v,\haschild) \rightarrow w_v := w_u$ \\
\noalign{\hrule height 1pt}\hline
\end{tabular}
\caption{The interaction rules of {\bf BinaryTreeConstructor}. For instance, the rule $\calI\calR$ implies when an internal node $u$ with one child (in state $I_1$) interacts with a root node $v$ with $i \in \{1,2\}$ children (in state $R_i$) and $w_u < w_v$, the form a parent-child connection, $u$ remains an internal node but now has two children (its state becomes $I_2$), and $v$ becomes an internal node with $i$ children (its state becomes $I_i)$. The other rules can be interpreted similarly.}
\label{BinaryTreeConstructor-rules}
\end{table}

Now, we consider the protocol {\bf BinaryTreeConstructor} which uses six states and the interaction rules of Table~\ref{BinaryTreeConstructor-rules}; see again Figure~\ref{fig:tree_rules} for a graphical description of the rules. Specifically, we say that a node $v$ is in state 
\begin{itemize}
\item $S$ if $v$ is isolated,
\item $L$ if $v$ is a leaf,
\item $I_i$ if $v$ is internal with $i\in \{1,2\}$ children, and 
\item $R_i$ if $v$ is a root with $i\in \{1,2\}$ children.
\end{itemize}
According to the interaction rules, when two nodes $u$ and $v$ interact with each other, node $v$ becomes a child of $u$ when $v$ is isolated, or when $v$ is a root and it holds that $w_u < w_v$. Essentially, by always respecting this inequality condition, we guarantee that, at any time, every distinct tree contains nodes that store values strictly greater than the value stored in the root of the tree. In this way, cycles that could appear due to the interactions between leaf or internal nodes with root nodes that are part of the same tree, are correctly avoided. For example, consider the interaction of a leaf node $v$ with the root $r$ of the tree that it belongs to. Then, since $w_v > w_r$, the condition of rule $\calL\calR$ is not satisfied and $r$ will not become a child of $v$. 

The interaction rule $\calU\calW$ is necessary to guarantee that the value stored in the root of each tree is spread to all other nodes of this tree so that a single spanning tree network can be formed. To see the necessity of this, consider two trees $T_1$ and $T_2$ with corresponding root nodes $r_1$ and $r_2$ (each with two children) such that $w_{r_1} < w_v$ for every $v \in T_2\setminus\{r_2\}$ and $w_{r_2} < w_v$ for every $v \in T_1\setminus\{r_1\}$. Then, without the interaction rule $\calU\calW$, these two trees will never get connected. However, with $\calU\calW$, it is now possible to spread the values $w_{r_1}$ and $w_{r_2}$ to all nodes in the two trees and, since one of them is strictly greater than the other, the connection of the trees will eventually become possible.

\begin{figure}[ph!]
\centering

   \frame{\includegraphics[page=1,scale=0.85]{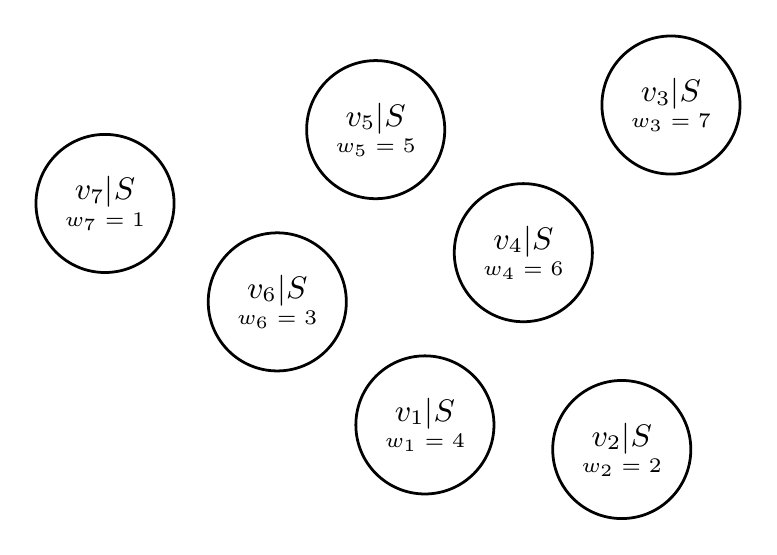}}
   \ \ \
   \frame{\includegraphics[page=2,scale=0.85]{example_binary}}
   \vspace{8pt}
   
   \frame{\includegraphics[page=3,scale=0.85]{example_binary}}
    \ \ \ 
   \frame{\includegraphics[page=4,scale=0.85]{example_binary}}
   \vspace{8pt}
   
   \frame{\includegraphics[page=5,scale=0.85]{example_binary}}
   \ \ \
   \frame{\includegraphics[page=6,scale=0.85]{example_binary}}
   \vspace{8pt}
   
   \frame{\includegraphics[page=7,scale=0.85]{example_binary}}
   \ \ \   
   \frame{\includegraphics[page=8,scale=0.85]{example_binary}}

\caption{An example execution of the protocol {\bf BinaryTreeConstructor}. Each circle contains the name of the corresponding node, its current state, and the value of the $w$-register. The ordering according to which the fair probabilistic scheduler selects the nodes that interact is as follows: $(v_1,v_2)$, $(v_3,v_4)$, $(v_7,v_6)$, $(v_1,v_4)$, $(v_5,v_6)$, $(v_2,v_7)$, $(v_3,v_4)$, $(v_1,v_6)$. Only the interactions that lead to new connections or change the value of the $w$-register are depicted. Notice that the interaction $(v_2,v_7)$ does not lead to $v_7$ becoming a child of $v_2$ since $w_2 > w_7$, while the second to last interaction $(v_3,v_4)$ only updates the value of $w_3$. See the relevant discussion in Example~\ref{example:binary}.}
\label{fig:example_binary}
\end{figure}

Let us present an example of how {\bf BinaryTreeConstructor} would operate in order to form a binary tree network.

\begin{example}\label{example:binary}
Consider an instance with $n=7$ nodes $\{v_1, ..., v_7\}$, and the following values for their corresponding $w$-registers: $w_1 = 4$, $w_2 = 2$, $w_3 = 7$, $w_4 = 6$, $w_5 = 5$, $w_6 = 3$, $w_7 = 1$.
Further, assume the following (incomplete) ordering over pairs of nodes, according to which the fair probabilistic scheduler selects the nodes that interact: $(v_1,v_2)$, $(v_3,v_4)$, $(v_7,v_6)$, $(v_1,v_4)$, $(v_5,v_6)$, $(v_2,v_7)$, $(v_3,v_4)$, $(v_1,v_6)$. Figure~\ref{fig:example_binary} depicts how the binary tree gets formed following the rules of {\bf BinaryTreeConstructor}. Initially all are isolated. Let us describe what happens in each of these interactions:
\begin{itemize}
\item
When the nodes $(v_1,v_2)$ interact, they get connected according to rule $\calS\calS$: $v_2$ becomes a child of $v_1$, $v_1$ becomes a root node with one child, $v_2$ becomes a leaf node, and $w_2$ is updated to be equal to $w_1$, that is, $w_2$ becomes equal to $1$; this is the result of rule $\calU\calW$, which we assume that is invoked immediately after a new connection has been established for the purposes of this example (in general, $\calU\calW$ can be invoked the next time the scheduler selects the same pair of nodes to interact).
Similar connections are established at the next two step for the pairs $(v_3,v_4)$ and $(v_7,v_6)$, and the $w$-registers are updated accordingly (see Figure~\ref{fig:example_binary}); recall that the choice as to which node becomes the parent when both nodes are isolated is arbitrary. 

\item 
Since both $v_1$ and $v_4$ are root nodes with one child at this point, the next interaction between them invokes rule $\calR\calR$, and the nodes get connected: $v_4$ becomes a child of $v_1$, $v_4$ becomes an internal node with one child, $v_1$ remains a root node with two children, and $w_4$ becomes equal to $w_1=4$; note that since both of them were root nodes with one child before their interaction and $w_1 = 4 < w_4=6$, $v_4$ can become a child of $v_1$, but not the other way around, even though they were both in the same state. 

\item 
For the next pair $(v_5,v_6)$, rule $\calL\calS$ get invoked and the nodes get connected: $v_5$ becomes a child of $v_6$, $v_5$ becomes a leaf nodes, $v_6$ becomes an internal node with one child (since it was already a leaf node), and $w_5$ becomes equal to $w_6=1$. 

\item
The next interaction between nodes $v_2$ and $v_7$ has no impact in the network structure: since $v_2$ is a leaf node at this point, it would only be possible for $v_7$ to become a child of $v_2$, but this does not happen since the $w$-registers do not respect the desired inequality (it should be $w_2 < w_7$, but $w_2 = 7$ and $w_7=1$).

\item
The interaction $(v_3,v_4)$ between two nodes that are already connected to each can only invoke rule $\calU\calW$ to update $w_3$ and set it equal to $w_4$; notice that $w_4$ was changed since $v_4$ became a child of $v_1$, but $w_3$ remained equal to the previous value it inherited from $w_4$ from the first interaction that led to their connection.

\item
Finally, the interaction $(v_1,v_6)$ invokes rule $\calI\calR$: $v_1$ becomes a child of $v_6$, both $v_1$ and $v_6$ now becomes internal nodes with two children, and $w_1$ becomes equal to $w_6$; observe here that $v_6$ could not become a child of $v_1$ since $v_1$ already has two children. 
\end{itemize}
Since the the binary tree network is now formed, any future interactions can only update the $w$-registers and at the end all of them will hold the same value, but this has not impact as the formation process has already been completed.
\hfill\qed
\end{example}

\begin{thm}\label{thm:binary}
The protocol {\bf BinaryTreeConstructor} correctly constructs a binary tree network, and its running time is $O(n^4)$.
\end{thm}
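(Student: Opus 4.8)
The plan is to mirror the proof given for \textbf{TreeConstructor}: establish (a) that no cycle is ever created, so that every connected component is always a tree, (b) that the network converges to a single spanning component, and, in addition, (c) that no node ever acquires a third child, so that the limiting tree is binary; the running-time bound then follows by quantifying how fast (b) happens. Everything rests on a single invariant that I would prove by induction on interactions: \emph{at every time step, every tree component $T$ with root $r$ satisfies $w_v\ge w_r$ for all $v\in T$}. The base rules $\calS\calS,\calR\calS,\calI\calS,\calL\calS$ attach an isolated node as a new leaf whose register is immediately overwritten, via $\calU\calW$, by that of its parent, which dominates $w_r$; the merging rules $\calR\calR,\calI\calR,\calL\calR$ fire only when the parent's register is strictly below the register of the root being absorbed, so after the merge (and the induced $\calU\calW$) the absorbed root and its whole subtree still dominate the new root; and $\calU\calW$ trivially preserves the inequality. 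Along the way I would record two bookkeeping facts proved the same way: no rule turns a leaf or an internal node back into a root, so a current root has been a root ever since it left state $S$ and therefore still carries its \emph{original} register value — hence distinct current roots carry distinct values; and a node's register, once the node is not a root, changes only through $\calU\calW$ copying the value of its (henceforth fixed) parent.

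Given the invariant, (a) is immediate: if a leaf or internal node $u$ of a tree $T$ meets the root $r$ of the same $T$, then $w_u\ge w_r$, so the strict inequality demanded by $\calL\calR$ or $\calI\calR$ fails and no edge forms; and an edge between two distinct components cannot close a cycle. Part (c) is pure state bookkeeping: the child count of a node is encoded in its state ($0$ for $S,L$; $1$ for $R_1,I_1$; $2$ for $R_2,I_2$), every rule adding a child moves the parent to the state with exactly one more child, and no rule is applicable to a node already in $R_2$ or $I_2$ that would give it a third. For (b) I would take the number of components as a potential: it starts at $n$, never increases, and drops by exactly one at each merge; so it suffices to show that from any configuration with at least two components the next merge occurs in expected finite time, which also yields the time bound.

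This last claim is the heart of the argument and the step I expect to be the most delicate, because one must argue against the asynchronous, concurrent behaviour of the scheduler. Fix a configuration with $\ge 2$ components. If some component is an isolated node, it attaches (through $\calR\calS,\calI\calS,\calL\calS$, or $\calS\calS$) to any node with a free child slot — one always exists, as any tree has a leaf — so the relevant pair interacts within expected $O(n^2)$ steps and a merge happens. Otherwise all components are trees; let $T$, rooted at $r$, be the one whose root register is globally smallest (well defined, as root registers are distinct). I would propagate $w_r$ down one root-to-leaf path $r=u_0\haschild u_1\haschild\cdots\haschild u_d=\ell$: once the pair $(u_0,u_1)$ is scheduled, $\calU\calW$ sets $w_{u_1}=w_r$, and this value is permanent because $u_1$ is not a root and so never changes its parent; inductively, the first scheduling of $(u_i,u_{i+1})$ after $w_{u_i}=w_r$ makes $w_{u_{i+1}}=w_r$ permanently. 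Each of these $d\le n$ events takes expected $O(n^2)$ steps, so after expected $O(n^3)$ steps $w_\ell=w_r$, which is strictly below every other root's register; then $\ell$ (a leaf, hence with a free slot) meets some other root $r'$ within a further expected $O(n^2)$ steps and $\calL\calR$ merges the two trees. The robustness point I must make carefully is that nothing here is spoiled by concurrent interactions: parent-child edges are never deleted and only a root node ever gains a parent, so the chosen path and the partial propagation along it persist; and the only events that could invalidate ``$r$ is still the root of $T$'' or ``$\ell$ is still a leaf'' are themselves merges — exactly the progress we want. Hence from any configuration with $\ge 2$ components a merge occurs within expected $O(n^3)$ steps; summing over the at most $n-1$ merges, by linearity of expectation the expected running time is $O(n^4)$.
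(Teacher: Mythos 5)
Your proof is correct and follows essentially the same route as the paper's (which is only a brief sketch): correctness via the register invariant that every node's value dominates its root's, and the $O(n^4)$ bound from $n-1$ merges, each costing $O(n^3)$ expected steps to propagate the root's value down a path plus $O(n^2)$ for the merging interaction itself. Your additions are exactly the details the paper glosses over --- the explicit choice of the component with globally minimal root register (so its leaf can absorb any other root), the robustness against concurrent interactions, and the assumption that a newly attached isolated node's register is overwritten at connection time, which is indeed needed for the invariant and is consistent with the paper's implementation footnote.
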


\begin{proof}
The correctness of the protocol follows by the above discussion: no cycles will ever appear, and all nodes will end up get connected in a single spanning binary tree.

For the running time, observe that for every possible merging of two different trees, it might be the case that we first need to update the values stored in all nodes of the two trees, and then wait for the right pair of nodes to interact in order for the merging to take place. This last step needs expected time $\Theta(n^2)$. For the update step, in the worst case, we must wait for all sequential parent-child interactions to happen, starting from the root and its children. Since there are $O(n)$ parent-child pairs in the tree and each such interaction takes average time $\Theta(n^2)$ to happen, the update of all values in a tree needs time $O(n^3)$. Therefore, every merging needs time $O(n^3) + O(n^2) = O(n^3)$, and since $n$ mergings must take place, the total time is $O(n^4)$.    
\end{proof}

\subsection{$k$-ary trees}\label{sec:k-ary-tree}
We now extend the ideas developed for binary tree networks in order to form $k$-ary tree networks, for any integer $k \geq 2$. 
Again we assume that each node $v$ is equipped with a register $w_v$ which initially stores a unique random number. 

Consider the protocol {\bf $k$-TreeConstructor} which uses $2k+2$ states and the $4k$ interaction rules of Table~\ref{kTreeConstructor-rules}. We say that a node $v$ is in state 
\begin{itemize}
\item $S$ if $v$ is isolated,
\item $L$ if $v$ is a leaf,
\item $I_i$ if $v$ is internal with $i\in \{1, ..., k\}$ children, and 
\item $R_i$ if $v$ is a root with $i\in \{1, ..., k\}$ children.
\end{itemize}
According to the interaction rules, when two nodes $u$ and $v$ interact with each other, node $v$ becomes a child of $u$ when $v$ is isolated, or when $v$ is a root, $u$ has strictly less than $k$ children already and it holds that $w_u < w_v$.  

\begin{table}[t]
\centering
\begin{tabular}{cl}
\noalign{\hrule height 1pt}\hline
rule & transition \\ 
\noalign{\hrule height 1pt}\hline
$\calS\calS$ & $(S,S) \rightarrow (R_1,L,\haschild)$ \\
$\calR\calS$ & $(R_1,S) \rightarrow (R_2,L,\haschild)$ \\
			 & ... \\
             & $(R_{k-1},S) \rightarrow (R_k,L,\haschild)$ \\			
$\calI\calS$ & $(I_1,S) \rightarrow (I_2,L,\haschild)$ \\
			 & ... \\
             & $(I_{k-1},S) \rightarrow (I_k,L,\haschild)$ \\			
$\calL\calS$ & $(L,S) \rightarrow (I_1,L,\haschild)$ \\
$\calR\calR$ & $(R_1,R_i,<) \rightarrow (R_2,I_i,\haschild)$  \\
			 & ... \\
             & $(R_{k-1},R_i,<) \rightarrow (R_k,I_i,\haschild)$ \\			
$\calI\calR$ & $(I_1,R_i,<) \rightarrow (I_2,I_i,\haschild)$  \\
			 & ... \\
             & $(I_{k-1},R_i,<) \rightarrow (I_k,I_i,\haschild)$ \\			
$\calL\calR$ & $(L,R_i,<) \rightarrow (I_1,I_i,\haschild)$ \\
$\calU\calW$ & $(u,v,\haschild) \rightarrow w_v := w_u$ \\
\noalign{\hrule height 1pt}\hline
\end{tabular}
\caption{The interaction rules of {\bf $k$-TreeConstructor}. The rules are partitioned into types depending on their function. For example, rules of type $\calI\calS$ deal with interactions among two nodes such that one node is internal and the other node is isolated; the different versions of rule take into account the different number of children the internal node may have.}
\label{kTreeConstructor-rules}
\end{table}

The proof of the next statement follows by arguments similar to those used for the proof of Theorem~\ref{thm:binary}. The running time remains asymptotically the same since again it might be the case that the values stored by roots of different trees may need to be diffused to all nodes of these trees (in order for their merging to happen); this scenario is clearly not affected by the number of children a node may have.  

\begin{thm}\label{thm:k}
Given $k\geq 2$, the protocol {\bf $k$-TreeConstructor} correctly constructs a $k$-ary tree network, and its running time is $O(n^4)$.
\end{thm}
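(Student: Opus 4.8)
The plan is to reprove Theorem~\ref{thm:k} by mirroring the proof of Theorem~\ref{thm:binary} almost line for line, isolating the one genuinely new feature — the cap of $k$ children per node — and checking that it interferes with neither the correctness argument nor the running-time bound. For correctness I would establish the same three facts as in the binary case. \textbf{(a) No cycles.} By a straightforward induction over interactions, at every time step the root of each tree stores a $w$-value strictly smaller than that of every other node of its tree: rule $\calU\calW$ only copies a parent's value downward, and a new edge is created only when the prospective parent's value is strictly smaller (or when the child is isolated). Hence, when a leaf or internal node $v$ meets the root $r$ of its own tree, the condition $w_v < w_r$ fails, no edge $v \haschild r$ is formed, and — since every edge-forming rule joins two distinct components — only trees are ever produced. \textbf{(b) Single spanning component.} I would reuse the potential ``number of isolated nodes plus number of roots'', which starts at $n$, never increases, and drops by exactly one at each merging; via $\calU\calW$ one shows, exactly as in Section~\ref{sec:binary-tree}, that whenever at least two components remain a merging is eventually available, so the potential reaches $1$. \textbf{(c) At most $k$ children.} This is the only new point, and it is immediate from the state set of Table~\ref{kTreeConstructor-rules}: the states $I_k$ and $R_k$ occur only on right-hand sides, and every rule that increases a node's number of children ($\calR\calS$, $\calI\calS$, $\calR\calR$, $\calI\calR$, $\calL\calS$, $\calL\calR$) requires the parent to be in state $S$, $L$, or $R_i$/$I_i$ with $i \le k-1$; thus no node can ever acquire more than $k$ children. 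Together (a)--(c) show the protocol stabilizes to a $k$-ary spanning tree.

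For the running time I would copy the estimate from Theorem~\ref{thm:binary} verbatim. Each of the $O(n)$ mergings may require first diffusing a root's $w$-value to all nodes of a tree — a chain of $O(n)$ parent-child interactions, each of expected duration $\Theta(n^2)$ under the fair probabilistic scheduler, hence $O(n^3)$ — followed by one further interaction, of expected duration $\Theta(n^2)$, between the right pair of nodes to perform the merge. So each merging costs $O(n^3)$, and the total is $O(n^4)$. The value of $k$ enters none of these estimates, which is exactly the remark made in the paragraph preceding the statement.

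The step I expect to need the most care is the ``a merging is eventually available'' claim inside (b), because {\bf $k$-TreeConstructor} imposes, in addition to the $w$-inequality, that the prospective parent already have strictly fewer than $k$ children — a condition absent in this form from the binary analysis. To handle it I would argue as follows: among all current roots, the register values are pairwise distinct original values that never change, so there is a unique root $r^*$ with the smallest such value $\mu^*$; assuming (for contradiction) that no merging ever occurs after some time $t$, the component structure is frozen and $\calU\calW$ keeps firing, so eventually every node of $r^*$'s tree has register value $\mu^*$. That tree then contains a leaf $\ell^*$ with $w_{\ell^*} = \mu^*$ and $0 < k$ children (or the tree is the single isolated node $r^*$, which merges on its first interaction via one of $\calS\calS$, $\calR\calS$, $\calI\calS$, $\calL\calS$). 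When $\ell^*$ meets the root $r'$ of any other tree we have $w_{\ell^*} = \mu^* < w_{r'}$, so rule $\calL\calR$ fires (or $\calL\calS$ if the other component is a single isolated node), producing a merging — a contradiction. Since a leaf always has room for a child, the $<k$ condition never blocks progress, and the potential argument closes precisely as in the binary case.
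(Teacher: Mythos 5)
Your proposal is correct and follows essentially the same route as the paper, whose entire proof of this theorem is a two-sentence deferral to the binary case (Theorem~\ref{thm:binary}): you reuse the cycle-freedom invariant on the $w$-registers, the isolated-plus-roots potential, and the $O(n^3)$-per-merging diffusion bound, and you correctly isolate the only genuinely new ingredient, namely that the cap of $k$ children never blocks a merging because every tree with at least two nodes contains a leaf with spare capacity. One small wording issue: after rule $\calU\calW$ fires, the child's register \emph{equals} the parent's, so the invariant you should carry is that every node's value is at least (not strictly greater than) its root's value --- which still suffices, since $\calL\calR$, $\calI\calR$ and $\calR\calR$ all require a strict inequality to create an edge.
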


\subsection{Depth and tree height estimation}\label{sec:depth-height}
As we will see later, information about the depth of each node and the height (max depth) of the tree can be very useful, especially since this information can be computed locally by each node through its interactions with other nodes of the network. We assume that each node $v$ is equipped with two more registers $d_v$ and $h_v$, which store the node's estimation about its own depth and the height of the tree, respectively. Both of them are initially equal to zero, and are updated according to the two interaction rules of Table~\ref{depth-height}.

\begin{table}[t]
\centering
\begin{tabular}{cc}
\noalign{\hrule height 1pt}\hline
rule & transition \\ 
\noalign{\hrule height 1pt}\hline
$\calU\calD$ & $(u,v,\haschild) \rightarrow d_v := d_u + 1 $ \\
$\calU\calH$ & $(u,v) \rightarrow h_u :=  h_v := \max\{h_u,h_v,d_u,d_v \}$ \\
\noalign{\hrule height 1pt}\hline
\end{tabular}
\caption{Depth and tree height estimation.}
\label{depth-height}
\end{table}

Of course, while the network is under construction, every node $v$ has a wrong estimation about these quantities, but eventually after the completion of the network structure, the values stored in the registers $d_v$ and $h_v$ will stabilize to the correct ones such that $d_r=0$ only for the root node $r$ of the tree, and $h_v = \max_u d_u = h$ for every node $v$. Next, we formally argue about the correctness of rules $\calU\calD$ and $\calU\calH$.

\begin{prop}
Rules $\calU\calD$ and $\calU\calH$ correctly compute the depth of each node and the height of the tree. 
\end{prop}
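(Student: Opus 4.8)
The plan is to show that, once the tree construction has completed (which happens in finite time by the correctness and running-time guarantees established for the three construction protocols above), the registers $d_v$ and $h_v$ stabilize to the correct values. Two ingredients do the work: a monotonicity invariant that prevents the $d$-registers from ever overshooting, and a fairness argument that drags them up to the correct values one layer at a time.

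First I would record a structural fact common to all three construction protocols: once a node acquires a parent (i.e. enters a state of type $L$ or $I_i$) it never loses it, so the node $r$ that is the root of the final tree was never a child. Consequently rule $\calU\calD$ is never invoked with $r$ in the child role, and $d_r = 0$ at all times. Next I would prove, by induction on time, the invariant $d_v(t) \le \mathrm{depth}_t(v)$ for every node $v$, where $\mathrm{depth}_t(v)$ is the depth of $v$ in the forest present at time $t$. The inductive step is immediate: rule $\calU\calD$ assigns $d_v := d_u + 1$ exactly when $u$ is the current parent of $v$, so $\mathrm{depth}_t(v) = \mathrm{depth}_t(u) + 1$, and by the induction hypothesis $d_u \le \mathrm{depth}_t(u)$; the other rules do not touch $d_v$. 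Since edges are never deleted, $\mathrm{depth}_t(v)$ is non-decreasing in $t$ and converges to the final depth $\mathrm{depth}(v)$; in particular $d_v(t) \le h$ for all $t$, where $h = \max_v \mathrm{depth}(v)$ is the final height.

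For the depth registers I would then do an induction on the layer index $i = 0, 1, \dots, h$, showing that there is an almost surely finite time after which every node at depth $i$ has $d_v = i$ permanently. The base case $i = 0$ is the observation about the root. For the inductive step, fix a node $v$ at depth $i \ge 1$ with (now fixed) parent $p$; after the time guaranteed for layer $i-1$ we have $d_p = i-1$ forever, and by fairness $p$ and $v$ interact again, after which $\calU\calD$ sets $d_v = i$; any later invocation of $\calU\calD$ on $v$ uses the same parent $p$ and the value $d_p = i-1$, so $d_v$ stays at $i$. Taking the maximum of these times over the finitely many nodes in layers $0, \dots, h$ yields a finite time $T^\star$ after which every $d_v$ equals the true depth of $v$.

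For the height registers I would split into an upper and a lower bound. Upper bound: rule $\calU\calH$ only ever assigns to an $h$-register the maximum of two $h$-registers and two $d$-registers; since all $d$-registers are $\le h$ by the invariant and all $h$-registers start at $0$, an induction on time gives $h_v(t) \le h$ for all $t$. Lower bound: pick a node $v^\star$ with $\mathrm{depth}(v^\star) = h$; after time $T^\star$ we have $d_{v^\star} = h$, and by fairness every node $u$ interacts with $v^\star$ at some later time, at which point $\calU\calH$ forces $h_u \ge d_{v^\star} = h$; since $\calU\calH$ can only raise an $h$-register and no other rule modifies it, this is permanent. Hence eventually $h_v = h$ for every $v$. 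The step I expect to be the main obstacle is precisely the upper bound on the $h$-registers: a priori, transient large depths recorded while the forest is still being rearranged could get frozen into some $h_v$, and the invariant $d_v(t) \le \mathrm{depth}_t(v)$ together with the non-deletion of edges (hence monotonicity of true depths) is exactly what rules this out, so getting that invariant stated and verified correctly is the crux.
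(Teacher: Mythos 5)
Your proposal is correct and follows essentially the same route as the paper's proof: the root never acquires a parent so $d_r=0$ permanently, depths then propagate correctly top-down layer by layer via $\calU\calD$, and once depths stabilize $\calU\calH$ spreads the maximum depth to all nodes. Your version is considerably more rigorous --- in particular the invariant $d_v(t)\le\mathrm{depth}_t(v)$ together with the monotonicity of true depths, which rules out transient overshoots being frozen into the $h$-registers, is a genuine gap-filler that the paper's two-sentence sketch glosses over.
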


\begin{proof}
Since the register $d_v$ is initially equal to zero for every node $v$, once the network is constructed, the root node $r$ of the tree will definitely have the correct estimation about its own depth, that is, $d_r=0$; notice that $r$ changes its state only once (from isolated to root) and never updates the register value of $d_r$. Therefore, rule $\calU\calD$ will correctly compute the depth of the root's children when these nodes interact with $r$, and the information will sequentially be spread in an up-bottom manner to all nodes of the network. Then, once every node correctly knows its own depth, rule $\calU\calH$ will also spread the information about the maximum depth in a bottom-up manner to all nodes. 
\end{proof}

\section{Energy redistribution protocols}\label{sec:energy-protocols}
In this section, we present several energy redistribution protocols and discuss some of their theoretical properties, mainly focusing on the lossless case. In our simulations in the next section, we will also consider the more general lossy case, and evaluate the redistribution protocols presented in the following. 

To study whether it is possible to achieve good energy distributions, we present protocols which make a plethora of different assumptions about the information of the network that the nodes can exploit. We start by assuming that the nodes have complete knowledge about the structure, the total energy, and the number of nodes in the network. This allows us to design a very powerful protocol that can achieve an exact distribution in the lossless case. Then, we consider the other extreme case in which the nodes have no global information, and show that such protocols do not  convergence to good distributions in general, but do so in many fundamental special cases, such as when the network happens to be a line. Finally, we also consider an intermediate scenario according to which the nodes have partial global information about the total initial energy in the network, but know nothing about its structure. 

\subsection{Exploiting global network information}\label{sec:ideal-target}
We start with the presentation of a protocol that exploits the whole network structure; we refer to it as {\bf ideal-target}. In our simulations in the next section, the performance of {\bf ideal-target} will serve as an upper bound on the performance of the rest of our protocols, which we will define next. 

Given a tree structure $T$ and the total energy of the network $E_\total$, it is easy to compute the {\em ideal} energy $\gamma_v(T,E_\total)$ that each node $v$ must have so that the energy distribution of the network is exact. Let $h$ denote the height (maximum depth) of the tree, $d_v \in \{0, ..., h\}$ denotes the depth of node $v$, and $n_d$ is the number of nodes at depth $d \in \{0, ..., h\}$. By the definition of the exact energy distribution, all nodes at depth $d < h$  must have equal ideal energy that is twice the energy of every node at depth $d+1$. Let $x$ denote the ideal energy of any node at depth $h$. Then, the ideal energy of node $v$ is $\gamma_v(T,E_\total) = 2^{h-d_v} \cdot x$, where 
$$x = \frac{E_\total}{\sum_{d=0}^h n_d \cdot 2^{h-d}}$$
so that $\sum_v \gamma_v(T,E_\total) = E_\total$.\footnote{Observe that if the height of the tree is large, then nodes of higher depth will have an ideal energy that is close to zero. The worst case is when the tree ends up being a line.}

Now, the protocol {\bf ideal-target} is defined as follows: given the tree network structure $T$ (that is, the height of the tree $h$, the depth $d_v$ of every node $v \in V$, and the number of nodes $n_d$ at every depth $d \in \{0, ..., h\}$) and the initial total energy $E_\total$ of the network, simply compute the ideal energy $\gamma_v(T,E_\total)$ for every node $v$, and then use this as the {\em target} energy that $v$ must end up storing. So, node $v$ asks for energy if its current energy is below the target, and gives away energy when its current energy is exceeding the target; see Protocol~\ref{protocol:ideal-target}.

\begin{algorithm}[t]
\SetAlgoLined
\KwIn{interacting pair of nodes $(u,v)$ at time step $t$, energy loss $\beta$}
\KwOut{updated energies $E_u(t+1)$ and $E_v(t+1)$}

    \uIf {$E_u(t) > T_u(t)$ and $E_v(t) < T_v(t)$}{
  		$x := \min\{ E_u(t)-T_u(t), T_v(t) - E_v(t) \}$ \\
		$E_u(t+1) := E_u(t) - x$ \\
		$E_v(t+1) := E_v(t) + (1-\beta) x$ 
   	}	
   	\ElseIf {$E_u(t) < T_u(t)$ and $E_v(t) > T_v(t)$}{
   	    $x := \min\{ T_u(t) - E_u(t), E_v(t) - T_v(t) \}$ \\
		$E_u(t+1) := E_u(t) + (1-\beta) x$ \\
		$E_v(t+1) := E_v(t) - x$ 
    }
   	
\caption{{\bf ideal-target}}
\label{protocol:ideal-target}
\end{algorithm}

\begin{figure}[t!]
\centering
   \frame{\includegraphics[scale=1.1]{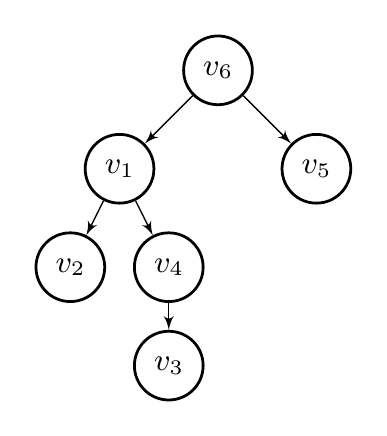}}
\caption{The tree network used in Examples~\ref{example:ideal}--\ref{example:depth-target}, which showcase the operation of the energy redistribution protocols.}
\label{fig:example-distribution}
\end{figure}

\begin{example}\label{example:ideal}
Let $\beta=0$ and consider an instance with $n=6$ nodes, which are connected according to the binary tree network $T$ that is depicted in Figure~\ref{fig:example-distribution}. The total energy in the network is $E_\total=2100$, and the amount of energy that each node currently holds is as follows: $E_1 = 500$, $E_2 = 100$, $E_3 = 150$, $E_4 = 400$, $E_5 = 350$, $E_6 = 600$. The tree has height $h=3$, and the number of nodes per depth level is as follows: $n_0=1$, $n_1=2$, $n_2=2$, $n_3=1$. Therefore, we can compute the ideal energy of every node. Since $\sum_{d=0}^h n_d \cdot 2^{h-d} = 1 \cdot 2^{3-0} + 2 \cdot 2^{3-1} + 2 \cdot 2^{3-2} + 1 \cdot 2^{3-3} = 21$, we have that $x = 100$. Consequently, the ideal energies of the nodes are as follows: $\gamma_{v_6}(T,E_\total) = 2^{3-0} \cdot 100 = 800$, $\gamma_{v_1}(T,E_\total) = \gamma_{v_5}(T,E_\total) = 2^{3-1} \cdot 100 = 400$, $\gamma_{v_2}(T,E_\total) = \gamma_{v_4}(T,E_\total) = 2^{3-2} \cdot 100 = 200$, and $\gamma_{v_3}(T,E_\total) = 2^{3-3} \cdot 100 =100$.

Now assume that the next interaction is between $v_1$ and $v_2$, which form a parent-child pair. Since $v_1$ has $500$ units of energy and its ideal energy is $400$, it would like to give away the $100$ extra units of energy that it has. On the other hand, since $v_2$ currently has $150$ units of energy and its ideal is $200$, it would like to get $50$ more units of energy to reach its goal. Hence, it is possible for $v_1$ to give these $50$ units to $v_2$. As a result, $v_2$ has now reached its ideal energy and $v_1$ is by $50$ units closer to its own ideal. The protocol continues similarly with other parent-child interactions until every node has reached its ideal energy. 
\hfill\qed
\end{example}

\begin{thm}
The protocol {\bf ideal-target} converges to an exact energy distribution for $\beta=0$.
\end{thm}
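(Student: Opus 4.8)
The plan is to track, over the course of the execution, the number of nodes that are \emph{not} yet at their target energy and to show that this quantity drops to zero after only finitely many energy transfers. Throughout, write $T_v := \gamma_v(T,E_\total)$ for the (time-independent) target of node $v$ used by \textbf{ideal-target}. Two invariants drive the argument when $\beta=0$. First, every interaction handled by Protocol~\ref{protocol:ideal-target} merely moves energy from one participant to the other with no loss, so the total energy equals $E_\total$ at all times; combined with the identity $\sum_v \gamma_v(T,E_\total)=E_\total$ established above, this yields the balance identity $\sum_v\big(E_v(t)-T_v\big)=0$ for every $t$. Second — and this is the key structural fact — once $E_v(t)=T_v$ for some node $v$, neither branch of the protocol ever transfers energy to or from $v$ again, since both guards require a \emph{strict} inequality between $E_v$ and $T_v$; hence $E_v(t')=T_v$ for all $t'\ge t$ (a ``freezing'' property). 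I would also note explicitly that \textbf{ideal-target} places no adjacency requirement on the interacting pair $(u,v)$, which is what makes the claim attainable at all.

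Next I would analyze $U(t):=|\{v : E_v(t)\neq T_v\}|$. By the freezing property $U$ is non-increasing, and it can change only at a step where a genuine transfer occurs, i.e.\ where one participant is strictly above and the other strictly below its target. Whenever $U(t)\ge 1$, the balance identity forces the simultaneous existence of a node $u$ with $E_u(t)>T_u$ and a node $v$ with $E_v(t)<T_v$ (if all unbalanced nodes lay on one side of their targets, the signed sum could not vanish); in particular $U(t)\ge 2$, so $U=1$ is impossible. Until a transfer takes place no energy moves, so such a pair $\{u,v\}$ persists; by fairness of the scheduler it is eventually selected, and then a transfer of amount $x=\min\{E_u(t)-T_u,\,T_v(t)-E_v(t)\}>0$ occurs, bringing at least one of $u,v$ exactly to its target and so strictly decreasing $U$. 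Since $U\in\{0,1,\dots,n\}$, cannot equal $1$, and strictly decreases whenever it is positive, after at most $n$ energy-transferring interactions we reach $U=0$; that is, the configuration stabilizes in finitely many steps with $E_v=\gamma_v(T,E_\total)$ for every node $v$.

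Finally I would verify that $U=0$ is exactly an exact energy distribution. For any parent–child pair $(p,c)$ we have $d_c=d_p+1$, so $E_p=\gamma_p(T,E_\total)=2^{h-d_p}x=2\cdot 2^{h-d_c}x=2\,\gamma_c(T,E_\total)=2E_c$, i.e.\ every parent–child pair is in an exact energy equilibrium, which is the definition of convergence to the exact distribution. The only genuinely delicate point is the interplay between the conservation identity (used both to guarantee that an above-target and a below-target node coexist as long as the distribution is not yet exact, and to rule out $U=1$) and the freezing property (which makes progress monotone and irreversible); the remaining steps are routine. I would additionally remark that the argument actually establishes stabilization after $O(n)$ energy exchanges, which is strictly stronger than convergence in the limit.
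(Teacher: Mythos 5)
Your proposal is correct, and it is substantially more rigorous than the paper's own proof, which consists of a single observation: for $\beta=0$ the exact energy distribution is unique, the targets $\gamma_v(T,E_\total)$ are by construction exactly that distribution, and therefore the network ``will converge to exactly this energy distribution.'' The paper never actually argues \emph{why} convergence occurs; it treats it as immediate from the definition of the protocol. You supply precisely the missing piece: the freezing property (a node that reaches its target never trades again, because both guards are strict), conservation of total energy (which guarantees that as long as some node is off target, an above-target node and a below-target node coexist, and incidentally rules out exactly one unbalanced node), and fairness of the scheduler (which guarantees such a pair eventually interacts, bringing at least one of them exactly to its target since $x$ is the minimum of the two gaps). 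This yields termination after at most $n$ energy-transferring interactions, a quantitative statement the paper does not make. Your remark that \textbf{ideal-target} imposes no parent--child restriction on the interacting pair is also well taken and worth keeping: it is what makes the existence-of-a-complementary-pair argument sufficient, whereas the oblivious protocols, which only act on parent--child pairs, would require energy to be routed along the tree. The final verification that $E_v=\gamma_v(T,E_\total)$ for all $v$ implies $E_p=2E_c$ for every parent--child pair matches the paper's underlying observation about uniqueness. In short: same destination, but you prove the convergence step that the paper only asserts.
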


\begin{proof}
Observe that for $\beta=0$, a given tree structure $T$ and total network energy $E_\total$, the exact energy distribution is {\em unique}. Hence, by the definition of {\bf ideal-target}, where each node has a target energy that is equal to the energy it has in the unique exact energy distribution, the network will converge to exactly this energy distribution.
\end{proof}

Unfortunately, for $\beta \neq 0$, the total energy of the network is continuously decreasing from interaction to interaction. Therefore, we cannot guarantee that the network will converge to an exact energy distribution. In fact, it might be the case that the outcome energy distribution is not even relaxed; for instance, the root may end up with less than twice the energy of its children. However, as we will see later, {\bf ideal-target} converges to a stable energy distribution fast enough so that the total energy loss is small and the final energy distribution is close to the exact energy distribution (as computed using the initial total energy of the network).

\subsection{Oblivious protocols}
Even though {\bf ideal-target} looks like the perfect protocol, it assumes that the nodes can infer global information about the network structure. Since the nodes are computationally weak devices with limited memory, they cannot ``remember'' how many nodes there are in every different depth level nor have a correct estimation of the total energy in the network (let alone the initial total energy). So, we now turn our attention in defining two protocols that are oblivious to the structure of the network and need much less information to operate. These protocols redistribute energy only when parent-child pairs of nodes interact.

For any real $\lambda \geq 2$, the protocol {\bf $\lambda$-exchange} requires that when a parent-child pair of nodes $(p,c)$ interacts, their energy is redistributed so that $p$ has exactly $\lambda$ times the energy of $c$ {\em only if} $p$ has initially strictly less than $\lambda$ times the energy of $c$; see Protocol~\ref{protocol:lambda-exchange}. For any real $\kappa \in (0,1)$, we also consider the protocol {\bf $\kappa$-transfer}, which requires that when a parent-child pair $(p,c)$ interacts and $p$ has less than twice the energy of $c$, then $c$ transfers a $\kappa$-fraction of its energy to $p$; see Protocol~\ref{protocol:lambda-transfer}. 

\begin{example}\label{example:oblivious}
Consider again the instance with $n=6$ nodes that are connected according to the network that is depicted in Figure~\ref{fig:example-distribution}, and $\beta=0$. Recall that the amounts of energy that the nodes have are $E_1 = 500$, $E_2 = 100$, $E_3 = 150$, $E_4 = 400$, $E_5 = 350$, $E_6 = 600$. Let us briefly examine how {\bf $2$-exchange} and {\bf $0.5$-transfer} operate for two parent-child interactions $(v_1,v_2)$ and $(v_1,v_4)$. 

Since $E_1 > 2 E_2$, both protocols will not change anything when $v_1$ interacts with $v_2$. However, since $E_1 < 2 E_4$, when $v_1$ interacts with $v_4$, {\bf $2$-exchange} gets $x = \frac{1}{3}(2 \cdot 400 - 500) = 100$ units of energy from $v_4$ and gives them to $v_1$ so that $v_1$ now has $600$ units of energy, while $v_4$ has $300$; this way $v_1$ who is the parent node has exactly twice the energy of its child $v_4$. On the other hand, {\bf $0.5$-transfer} will transfer exactly have of $v_4$'s energy to $v_1$ so that $v_1$ has $700$ units of energy and $v_4$ has $200$. 
\hfill\qed
\end{example}

One can easily observe that these protocols are designed to achieve a relaxed energy distribution rather than an exact one. In order to have any chance to converge to an exact energy distribution, we need to change the condition under which energy exchanges take place from $p$ having energy strictly less than $\lambda$ times (for {\bf $\lambda$-exchange}) or twice (for {\bf $\kappa$-transfer}) the energy of $c$ to $p$ not having {\em exactly} twice the energy of $c$ (for both protocols); notice that {\bf $\lambda$-exchange} makes sense only for $\lambda=2$ in this case. We refer to the latter condition as the exact equilibrium condition. 

\begin{algorithm}[t]
\SetAlgoLined
\KwIn{interacting parent-child pair $(p,c)$ at time step $t$, energy loss $\beta$}
\KwOut{updated energies $E_p(t+1)$ and $E_c(t+1)$}
	\If {$E_p(t) < \lambda E_c(t)$} {	
	    $x := \frac{1}{\lambda+1}\bigg( \lambda E_c(t) - E_p(t) \bigg)$ \\
	    $E_p(t+1) := E_p(t) + (1-\beta) x$ \\
	    $E_c(t+1) := E_c(t) - x$	 
	}
\caption{{\bf $\lambda$-exchange}}
\label{protocol:lambda-exchange}
\end{algorithm}

\begin{algorithm}[t]
\SetAlgoLined
\KwIn{interacting parent-child pair $(p,c)$ at time step $t$, energy loss $\beta$}
\KwOut{updated energies $E_p(t+1)$ and $E_c(t+1)$}
	\If {$E_p(t) < 2E_c(t)$} {	
	    $E_p(t+1) := E_p(t) + (1-\beta) \kappa E_c(t)$ \\
	    $E_c(t+1) := (1-\kappa)E_c(t)$	 
	}
\caption{{\bf $\kappa$-transfer}}
\label{protocol:lambda-transfer}
\end{algorithm} 

\begin{thm}
The protocols {\bf $2$-exchange} and {\bf $\kappa$-transfer} may not converge to the exact energy distribution, even when they are defined using the exact equilibrium condition and $\beta=0$.
\end{thm}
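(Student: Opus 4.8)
The plan is to give two explicit minimal counterexamples, one per protocol, and in each case to establish non-convergence through a simple schedule-independent invariant rather than by tracking the full dynamics.

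For {\bf $2$-exchange} (with the exact equilibrium condition and $\beta=0$) I would take the tree consisting of a root $r$ with exactly two leaf children $a$ and $b$, normalize the total energy to $4$ — so that the unique exact distribution, guaranteed by the argument used for {\bf ideal-target}, is $E_r=2$, $E_a=E_b=1$ — and start from $(E_r,E_a,E_b)=(0,4,0)$. The only interactions that change any energy are those of the parent-child pairs $(r,a)$ and $(r,b)$, since the redistribution rule does nothing when siblings meet. When a redistribution on $(r,a)$ is triggered, conservation of the pair's energy together with the rule force $E_a\mapsto\tfrac13(E_r+E_a)=\tfrac13(4-E_b)$ and leave $E_b$ untouched, and symmetrically for $(r,b)$. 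I would then prove by induction over interactions the invariant $E_a>1>E_b$: it holds initially; a triggered redistribution on $(r,a)$ keeps $E_b$ fixed (still $<1$) and sends $E_a$ to $\tfrac13(4-E_b)>1$; a triggered redistribution on $(r,b)$ keeps $E_a$ fixed (still $>1$) and sends $E_b$ to $\tfrac13(4-E_a)<1$; all other interactions leave the configuration unchanged. Since the exact distribution demands $E_a=1$, it is never reached, under any fair schedule. The point worth stressing is structural: a node with two children can be put in exact equilibrium with one of them only by disturbing its balance with the other, so iterating local equilibrations cannot stabilize.

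For {\bf $\kappa$-transfer} (again with the exact equilibrium condition and $\beta=0$) I would use the two-node line, a root $r$ with its single child $c$, normalize the total energy to $1$ (unique exact distribution $E_r=\tfrac23$, $E_c=\tfrac13$), and start from $(E_r,E_c)=(0,1)$. Here the rule only ever moves energy upward: whenever $E_r\neq 2E_c$, an interaction of $(r,c)$ replaces $E_c$ by $(1-\kappa)E_c$. Since $E_r=2E_c$ is equivalent to $E_c=\tfrac13$ on this instance, as long as the exact distribution has not been reached the triggering condition remains satisfied, so after $n$ triggered interactions $E_c=(1-\kappa)^n$ — a strictly decreasing sequence converging to $0$. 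This sequence equals $\tfrac13$ for at most one integer $n$, hence for all but countably many $\kappa\in(0,1)$ the exact distribution is never attained; for the remaining values of $\kappa$ one perturbs the start to $(\varepsilon,1-\varepsilon)$ with $\varepsilon$ chosen so that $\tfrac{1}{3(1-\varepsilon)}$ misses the countable set $\{(1-\kappa)^n:n\ge 0\}$. Alternatively, one can simply observe that energy flows monotonically toward the root, so $E_c\to 0$, while the exact distribution requires $E_c=\tfrac13>0$.

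The two dynamics computations and the monotonicity in the {\bf $\kappa$-transfer} case are routine; the one mildly delicate point is ensuring the {\bf $2$-exchange} argument excludes \emph{every} fair schedule, not merely the one that alternates $(r,a)$ and $(r,b)$, which is precisely why I package it as the invariant $E_a>1>E_b$ rather than exhibiting the explicit orbit.
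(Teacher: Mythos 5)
Your proposal is correct, but it proves the theorem by a genuinely different route than the paper. For \textbf{$2$-exchange}, the paper works on the three-node line $a \haschild b \haschild c$ with equal initial energies and a fixed round-robin schedule, and argues \emph{backwards}: if both pairs are in exact equilibrium at time $t$, then (using energy conservation and the fact that the non-participating node's energy is unchanged) they must also have been in exact equilibrium at $t-1$, hence at time $0$, contradicting the initial condition. Your argument instead uses a root with two leaf children and a \emph{forward} invariant $E_a>1>E_b$; this is arguably cleaner, it isolates the structural obstruction you name (equilibrating one child necessarily disturbs the other), and it rules out convergence under \emph{every} schedule rather than one particular round-robin ordering, which sits more comfortably with the fair probabilistic scheduler of the model. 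For \textbf{$\kappa$-transfer}, the paper again uses the three-node line but with $E_a(0)>2[E_b(0)+E_c(0)]$, so that the root (which never gives energy away under this protocol) permanently violates $E_a=2E_b$; that is a one-line monotonicity argument that works uniformly for all $\kappa$, whereas your two-node construction needs the countability/perturbation step to handle the measure-zero set of $\kappa$ for which $(1-\kappa)^n$ hits $1/3$ exactly (e.g.\ $\kappa=2/3$ from the start $(0,1)$), a case you correctly anticipate and repair. In short: your $2$-exchange argument is stronger and more robust than the paper's; your $\kappa$-transfer argument is slightly more delicate than necessary, and the paper's choice of an initial condition where the root is already over-supplied avoids the perturbation entirely. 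Both of your constructions are valid and complete.
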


\begin{proof}
To show the theorem it suffices to show explicit instances for which the protocols do not converge to the exact energy distribution. To this end, consider an instance with three nodes $\{a, b, c\}$ and assume that the probabilistic fair scheduler that processes all possible pairs of nodes in the order $\{(a,b)$, $(b,c)$, $(a,c)\}$ in a round-robin manner. Following this order of interactions, any of the tree formation protocols presented in the previous section may form the line-tree network $a \haschild b \haschild c$, where $a$ is the root, $b$ is an internal node, and $c$ is a leaf. Since interactions between nodes $a$ and $c$ do not affect the energy distribution ($a$ and $c$ do not have a parent-child connection, which is required by both {\bf $2$-exchange} and {\bf $\kappa$-transfer}), we can assume without loss of generality that during any step of time the only interactions that take place are between the pairs $(a,b)$ and $(b,c)$. 

For {\bf $2$-exchange}, consider the case where all nodes have the same initial energy $E_a(0) = E_b(0) = E_c(0)$. Let $t \geq 1$ be a step of time such that both parent-child pairs $(a,b)$ and $(b,c)$ are in exact energy equilibrium, and therefore the network has converged to the exact energy distribution. We will examine the case where the interaction at time $t$ is between the parent-child pair $(b,c)$; the other case of $(a,b)$ is obviously symmetric. During the previous time step $t-1$, due to the interaction between $a$ and $b$, we have that $(a,b)$ must be in exact energy equilibrium: $E_a(t-1) = 2 E_b(t-1)$. At time $t$, since the pair $(a,b)$ remains in equilibrium, we have that
$E_a(t) = 2E_b(t)$. Also, since node $a$ does not participate in the scheduled interaction at time $t$, we obtain that $E_a(t) = E_a(t-1)$. Combining these three relations, yields that $E_b(t) = E_b(t-1)$.
Since there is no energy loss ($\beta=0$), the condition $E_a(t) + E_b(t) + E_c(t) = E_a(t-1) + E_b(t-1) + E_c(t-1)$ yields that $E_c(t) =  E_c(t-1)$. Therefore, if there exists a time step $t$ during which the energy distribution of the network is exact, it must be the case that at time $t-1$ the energy distribution of the network is also exact. Recursively, this requires that the energy distribution of the network is exact from the beginning, which is not true. Hence, there cannot be any such $t$.

For {\bf $\kappa$-transfer}, consider the case where the nodes have initial energy such that $E_a(0) > 2[E_b(0) + E_c(0)]$. By the definition of the protocol, node $a$ will never transfer energy to the other nodes. Hence, due to the particular initial energy, there exists no time step $t$ such that $E_a(t)=2E_b(t)$. The theorem follows.
\end{proof}

Actually, we believe that no oblivious protocol can converge to an exact energy distribution, even for $\beta=0$. We were not able to prove such a statement, but our intuition is that fixing a parent-child pair of nodes $(p,c)$ in exact energy equilibrium will inevitably break other pairs of nodes that are already in equilibrium.

Next, we will prove that if there is no energy loss ($\beta=0$) and the tree network is a line (each node has at most one child), then {\bf $\lambda$-exchange} always converges to some relaxed distribution, for any $\lambda \geq 2$. Our simulations in the next section indicates that this is true for any kind of tree network, and that this is true for {\bf $\kappa$-exchange} as well. However, as we will see, there is a huge difference between these protocols in terms of the quality of the final energy distribution.  

\begin{thm}
When $\beta=0$ and the tree network is a line, the protocol {\bf $\lambda$-exchange} always converges to a relaxed energy distribution for any $\lambda \geq 2$.
\end{thm}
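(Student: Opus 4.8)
The plan is to track, for every prefix of the line, the total amount of energy it contains, show this quantity is monotone, and then pin down the limit with a fairness argument. Write the line as $v_0 \haschild v_1 \haschild \cdots \haschild v_{n-1}$, with $v_0$ the root and $v_{n-1}$ the leaf, so that $v_i$ is the parent of $v_{i+1}$. Since $\beta=0$, the total energy $E_\total=\sum_i E_{v_i}(t)$ is conserved at every step. Moreover, directly from the update rule of {\bf $\lambda$-exchange}, a transfer on the edge $(v_i,v_{i+1})$ replaces $E_{v_i},E_{v_{i+1}}$ by $\tfrac{\lambda}{\lambda+1}(E_{v_i}+E_{v_{i+1}})$ and $\tfrac{1}{\lambda+1}(E_{v_i}+E_{v_{i+1}})$ respectively, both nonnegative; hence every energy stays in $[0,E_\total]$ at all times, and energy always flows from a child to its parent (towards the root).

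The key step is monotonicity of prefixes. For each $i\in\{0,\dots,n-1\}$ set $S_i(t)=\sum_{j=0}^i E_{v_j}(t)$. An interaction on edge $(v_j,v_{j+1})$ with $j<i$ moves energy between two nodes both lying in $\{v_0,\dots,v_i\}$, and an interaction with $j>i$ moves energy between two nodes both outside it, so in either case $S_i$ is unchanged; an interaction on $(v_i,v_{i+1})$ moves a nonnegative amount of energy from $v_{i+1}$ (outside the prefix) into $v_i$ (inside it), so it weakly increases $S_i$. Therefore each $S_i(t)$ is nondecreasing in $t$ and bounded above by $E_\total$, hence $S_i(t)\to S_i^\star$ for some limit, and consequently every individual energy converges: $E_{v_i}(t)=S_i(t)-S_{i-1}(t)\to E_{v_i}^\star:=S_i^\star-S_{i-1}^\star$ (with $S_{-1}\equiv 0$).

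It remains to show the limiting configuration $(E_{v_i}^\star)$ is relaxed, i.e.\ $E_{v_i}^\star\ge 2E_{v_{i+1}}^\star$ for all $i$; in fact I will show the stronger $E_{v_i}^\star\ge \lambda E_{v_{i+1}}^\star$. Suppose not: $\lambda E_{v_{i+1}}^\star-E_{v_i}^\star=\delta>0$ for some $i$. By the convergence just established there is a time $T_0$ after which $\lambda E_{v_{i+1}}(t)-E_{v_i}(t)>\delta/2$ for all $t\ge T_0$. By fairness, the pair $(v_i,v_{i+1})$ is selected at infinitely many times $t\ge T_0$, and at each such time {\bf $\lambda$-exchange} moves $x=\tfrac{1}{\lambda+1}\bigl(\lambda E_{v_{i+1}}(t)-E_{v_i}(t)\bigr)>\tfrac{\delta}{2(\lambda+1)}$ units of energy into the prefix, i.e.\ increases $S_i$ by more than the fixed positive constant $\tfrac{\delta}{2(\lambda+1)}$. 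Since this happens infinitely often, $S_i$ is unbounded, a contradiction. Hence $E_{v_i}^\star\ge\lambda E_{v_{i+1}}^\star\ge 2E_{v_{i+1}}^\star$ for every $i$, so the network converges to a relaxed energy distribution.

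The main obstacle I anticipate is this last step. One must be careful that the statement is about the $t\to\infty$ limit rather than finite-time stabilization: already for three nodes with equal initial energies and $\lambda=2$ the configuration keeps oscillating and is never relaxed at any finite time, while the energies do converge (to the exact distribution in that case). So the argument has to correctly marry the almost-sure fairness of the scheduler with the deterministic monotone-convergence argument for the prefix sums; the monotonicity of the prefix sums is the engine that renders the rest routine.
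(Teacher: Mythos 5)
Your proof is correct, and it takes a genuinely different route from the paper's. The paper works with the single scalar potential $\Phi(t)=\sum_i\big(\lambda E_{i+1}(t)-E_i(t)\big)\cdot\one\{E_i(t)<\lambda E_{i+1}(t)\}$, shows by a case analysis on which neighbouring equilibria can break that $\Phi$ is non-increasing, and then argues that chains of interactions keep forcing strict decreases at the two ends of the line until $\Phi$ vanishes. You instead isolate the one structural fact that makes the line special under \textbf{$\lambda$-exchange} --- energy only ever flows rootward --- so every prefix sum $S_i$ is monotone and bounded, hence convergent; this yields convergence of each individual energy, and fairness then excludes any limiting violation of $E_i^\star\ge\lambda E_{i+1}^\star$. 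Your approach buys several things: it proves convergence of the actual energy vector rather than of a surrogate potential, it gives the stronger conclusion with factor $\lambda$ rather than $2$, and it avoids the delicate worst-case bookkeeping in the paper's property (P2) and the somewhat informal chain argument in (P3). Your three-node observation is also well taken: with equal initial energies and $\lambda=2$ no finite time step is ever relaxed, so ``converges'' can only mean convergence in the limit, and the paper's claim that $\Phi$ ``reaches $0$'' must likewise be read asymptotically; your treatment handles this cleanly where the paper is loose. Finally, your argument generalizes verbatim to arbitrary trees by replacing prefix sums with subtree sums $W_u(t)=\sum_{v\in T_u}E_v(t)$, which are non-increasing and bounded below; this would establish, for $\beta=0$, the convergence that the paper only verifies by simulation beyond the line case. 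The one thing the potential-function route could in principle offer that yours does not is a quantitative convergence rate, but the paper does not pursue that either.
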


\begin{proof}
Let $v_1 \haschild v_2 \haschild ... \haschild v_n$ be the line-tree network: $v_1$ is the root, nodes $v_i$ for $i\in \{2, ..., n-1\}$ are internal, and $v_n$ is the leaf. To simplify our notation in what follows, we denote by $E_i(t)$ the energy that node $v_i$ has at time $t$. We will define a function $\Phi: \mathbb{N}_{\geq 0} \rightarrow \mathbb{R}_{\geq 0}$ with the following three properties:
\begin{itemize}
\item[(P1)] $\Phi(t)=0$ means that the network converges to a relaxed energy distribution at time $t$,
\item[(P2)] $\Phi(t)$ is non-increasing for every $t \geq 0$, and
\item[(P3)] there exists a number of time steps during which $\Phi$ strictly decreases until it reaches $0$.
\end{itemize}
Then, $\Phi$ is a potential function. Essentially, properties (P2) and (P3) guarantee that $\Phi$ will reach its global minimum value of zero, and property (P1) guarantees that when $\Phi$ has reached this global minimum value, the network has converged to a relaxed energy distribution.

To this end, for each $i\in \{1, ..., n-1\}$, let $x_i(t) = \one\{E_i(t) < \lambda E_{i+1}(t)\}$ be the binary variable indicating whether at time step $t$ the energy $E_i(t)$ of node $v_i$ is strictly less than $\lambda$ times the energy $E_{i+1}(t)$ of its child node $v_{i+1}$ or not, and consider the function 
\begin{align}\label{eq:potential}
\Phi(t) = \sum_{i=1}^{n-1} \bigg( \lambda E_{i+1}(t) - E_i(t) \bigg) \cdot x_i(t).
\end{align}
Observe that $\Phi(t) \geq 0$ for every $t$. Moreover, property (P1) is satisfied since if $\Phi(t)=0$ for some $t$, then $E_i(t) \geq \lambda E_{i+1}(t) \geq 2E_{i+1}(t)$ for every parent-child pair $(v_i,v_{i+1})$. So, it remains to prove that $\Phi$ satisfies properties (P2) and (P3) as well.

For property (P2), let $t \geq 1$ be a time step during which a parent-child pair $(v_{i^*},v_{i^*+1})$ with $x_{i^*}(t-1)=1$ interacts; otherwise, obviously there is no change during step $t$ and $\Phi(t)=\Phi(t-1)$. 
Then, the {\bf $\lambda$-exchange} protocol will set 
$$
E_{i^*}(t) = \frac{\lambda}{\lambda+1}\bigg(E_{i^*}(t-1) + E_{i^*+1}(t-1) \bigg)
$$
and 
$$
E_{i^*+1}(t) = \frac{1}{\lambda+1}\bigg(E_{i^*}(t-1) + E_{i^*+1}(t-1) \bigg)
$$
so that $E_{i^*}(t) = \lambda E_{i^*+1}(t)$, and therefore $x_{i^*}(t)=0$. 
To prove that $\Phi$ is non-increasing, we have to show that $\Phi(t) - \Phi(t-1) \leq 0$.
We distinguish between cases, depending on the states of $v_{i^*}$ and $v_{i^*+1}$. 

First, we consider the case $i^*=1$: $v_1$ is the root of the tree and $v_2$ is internal; the case $i^*=n-1$ is symmetric since we can think of node $v_n$ as the root node instead (it is only a matter of perspective). 
Since $x_i(t)=x_i(t-1)$ for every $i \not\in \{1,2\}$, by \eqref{eq:potential} (definition of the function $\Phi$) we have that
\begin{align}\nonumber
\Phi(t) - \Phi(t-1) &= \sum_{i=1}^{n-1} \bigg( \lambda E_{i+1}(t) - E_i(t) \bigg) \cdot x_i(t)
- \sum_{i=1}^{n-1} \bigg( \lambda E_{i+1}(t-1) - E_i(t-1) \bigg) \cdot x_i(t-1)  \\ \nonumber
&= \bigg( \lambda E_2(t) - E_1(t) \bigg) x_1(t) - \bigg( \lambda E_2(t-1) - E_1(t-1) \bigg) x_1(t-1) \\ \label{eq:potential-1}
&\quad + \bigg( \lambda E_3(t) - E_2(t) \bigg) x_2(t) - \bigg( \lambda E_3(t-1) - E_2(t-1) \bigg) x_2(t-1).
\end{align}
Since $x_1(t-1) = 1$ and $x_1(t) = 0$, observe that the worst case occurs when $x_2(t-1) = 0$ and $x_2(t) = 1$ so that the last expression in the above sequence of equalities is maximized; that is, the relaxed energy equilibrium of nodes $(v_1,v_2)$ breaks the equilibrium of nodes $(v_2,v_3)$. Therefore, since $v_3$ does not participate in the interaction of time $t$, we have that $E_2(t-1) \geq \lambda E_3(t-1) = \lambda E_3(t)$, and \eqref{eq:potential-1} becomes
\begin{align*}
\Phi(t) - \Phi(t-1) &= E_1(t-1) - \lambda E_2(t-1) + \lambda E_3(t) - E_2(t) \\
&\leq E_1(t-1) - (\lambda - 1) E_2(t-1) - \frac{1}{\lambda+1} \bigg(E_1(t-1) + E_2(t-1) \bigg) \\
&= \frac{\lambda}{\lambda+1} \bigg(E_1(t-1) - \lambda E_2(t-1) \bigg) \\
&< 0,
\end{align*}
as desired. The last inequality follows since $x_1(t-1)=1$. 

Next, we consider the case $1 < i^* < n-1$.
Similarly to before, $x_i(t)=x_i(t-1)$ for each $i \not\in \{i^*-1,i^*,i^*+1\}$. 
Hence, by \eqref{eq:potential}, we have that
\begin{align} \nonumber
\Phi(t) - \Phi(t-1) 
&= \bigg( \lambda E_{i^*}(t) - E_{i^*-1}(t) \bigg) x_{i^*-1}(t) - \bigg( \lambda E_{i^*}(t-1) - E_{i^*-1}(t-1) \bigg) x_{i^*-1}(t-1) \\ \nonumber
& \quad + \bigg( \lambda E_{i^*+1}(t) - E_{i^*}(t) \bigg) x_{i^*}(t) - \bigg( \lambda E_{i^*+1}(t-1) - E_{i^*}(t-1) \bigg) x_{i^*}(t-1) \\ \label{eq:potential-2}
& \quad + \bigg( \lambda E_{i^*+2}(t) - E_{i^*+1}(t) \bigg) x_{i^*+1}(t) - \bigg( \lambda E_{i^*+2}(t-1) - E_{i^*+1}(t-1) \bigg) x_{i^*+1}(t-1). 
\end{align}
Given that $x_{i^*}(t-1) = 1$ and $x_{i^*}(t) = 0$, the worst case occurs when $x_{i^*-1}(t-1) = 0$, $x_{i^*-1}(t) = 1$, $x_{i^*+1}(t-1) = 0$ and $x_{i^*+1}(t) = 1$ so that the last expression is maximized; essentially, in the worst case the relaxed energy equilibrium of nodes $(v_{i^*},v_{i^*+1})$  leads to breaking the equilibrium of both pairs of nodes $(v_{i^*-1},v_{i^*})$ and $(v_{i^*+1},v_{i^*+2})$. Therefore, since nodes $v_{i^*-1}$ and $v_{i^*+2}$ do not participate in the interaction of time $t$, we have that $E_{i^*-1}(t) = E_{i^*-1}(t-1) \geq \lambda E_{i^*}(t-1)$ and $E_{i^*+1}(t-1) \geq \lambda E_{i^*+2}(t-1) = \lambda E_{i^*+2}(t)$. Also, since $\lambda E_{i^*}(t) - E_{i^*+1}(t) = (\lambda-1) \bigg( E_{i^*}(t-1) + E_{i^*+1}(t-1) \bigg)$, \eqref{eq:potential-2} becomes 
\begin{align*}
\Phi(t) - \Phi(t-1) = \lambda E_{i^*}(t) - E_{i^*-1}(t) -  \lambda E_{i^*+1}(t-1) + E_{i^*}(t-1) + \lambda E_{i^*+2}(t) - E_{i^*+1}(t) \leq 0.
\end{align*}

For property (P3), it suffices to show that if $\Phi(t)\neq 0$ for some time step $t$ and $x_1(t)=x_{n-1}(t)=0$, then there exists a sequence of interactions between internal nodes that leads to $x_1(\tau)=1$ or $x_{n-1}(\tau)=1$ for some time step $\tau > t$. If this is true, then the function $\Phi$ will eventually (after multiple interactions at the two endpoints of the line) reach its minimum value of $0$, as it strictly decreases when interactions between nodes $(v_1,v_2)$ and $(v_{n-1},v_n)$ take place and change the energy of the participating nodes; any other parent-child pair interaction does not increase the value of $\Phi$. 

Now, assume a time step with the above specifications. Since $\Phi(t)\neq 0$, there exists an $i \in \{1, ..., n-1\}$  such that $x_i(t)=1$ and $x_\ell(t)=0$ for $\ell < i$, as well as a $j \in \{1, ..., n-1\}$ such that $x_j(t)=1$ and $x_\ell(t)=0$ for $\ell > j$. In the worst case, after the interaction of nodes $(v_i,v_{i+1})$ or $(v_j,v_{j+1})$ at some time step $t_i > t$ or $t_j > t$, it will be $x_{i-1}(t_i)=1$ or $x_{j+1}(t_j)=1$. Inductively, since the energy is in general flowing upwards in the network, such interactions will define chains of future interactions that can lead to at least one of the pairs $(v_1,v_2)$ or $(v_{n-1},v_n)$ to no longer satisfy the condition of {\bf $\lambda$-exchange}, i.e., there exists a time step $\tau$ such that $x_1(\tau)=1$ or $x_{n-1}(\tau)=1$. The proof is now complete.
\end{proof}

We remark that the above theorem guarantees {\em only} convergence to a relaxed energy distribution. However, it is not obvious that this convergence will happen quickly, or that the final relaxed distribution will be close to an exact one. We expect that the convergence time decreases as $\lambda$ increases since the latter increases the energy that flows upwards. On the other hand, the distance of the outcome energy distribution from the exact energy distribution (which can be thought of as the ``optimal'' distribution) increases as $\lambda$ increases since nodes at lower depth will end up with substantially more energy than those in higher depth; in fact, for high enough $\lambda$, the root might concentrate almost all of the network energy. Hence, it is important to fine-tune the parameter $\lambda$ and balance these trade-offs; we do this in the next section where we present our simulations.

\subsection{Mixing locally inferred and global information}\label{sec:apx-distribution}
So far, we have made two extreme informational assumptions: the nodes either have complete global information, or no information at all. We complement these two by further considering an intermediate setting according to which the nodes have access to only to {\em some} information about the network (such as the total initial energy, but not the network structure). For $k$-ary tree networks (with integer $k \geq 2$), we define the protocol {\bf $k$-depth-target} which sets a target energy for each node by exploiting some network information that can be locally estimated as well as some global network information that is provided as input to the nodes. Specifically, during each time step $t$, every non-root node $v$ sets a target energy 
\begin{align}\label{eq:zeta}
\zeta_v(t) = \frac{E_\total}{k^{d_v(t)}(h_v(t) + 1)},
\end{align}
where $d_v(t)$ is the estimation of node $v$ for its own depth at time $t$, $h_v(t)$ is the estimation of $v$ for the height of the tree at time $t$, and $E_\total$ is the initial total network energy. As already discussed in Section~\ref{sec:depth-height}, once the network is formed, the estimations $d_v(t)$ and $h_v(t)$ stabilize to the correct values. Then, the target is set to the correct value as well, and eventually all non-root nodes will end up storing exactly this much energy, while the root $r$ will collect the remaining $E_\total - \sum_{v \neq r}\zeta_v(t)$ units of energy. A description of {\bf $k$-depth-target} is given as Protocol~\ref{protocol:depth-target}.

Before we continue, we show that the target energy of each node is feasible when $\beta=0$ in the sense that the total target energy of non-root nodes is at most the total energy of the network. To see this, observe that when the depth and height estimations of the nodes stabilize to the correct ones, then all nodes of the same depth $d$ will have the same target energy $\frac{E_\total}{k^{d}(h + 1)}$, where $h$ is the true height of the $k$-ary tree. Hence, if $n_d$ denotes the number of nodes at depth $d$, then $n_d \leq k^d$ (since each node has at most $k$ children) and the total target energy of non-root nodes is 
\begin{align*}
\sum_{d=1}^h n_d \frac{E_\total}{k^{d}(h + 1)} \leq \sum_{d=1}^h \frac{E_\total}{h + 1} = \frac{h \cdot E_\total}{h+1} < E_\total. 
\end{align*}

\begin{algorithm}[t]
\SetAlgoLined
\KwIn{interacting pair of nodes $(u,v)$ at time step $t$, energy loss $\beta$}
\KwOut{updated energies $E_u(t+1)$ and $E_v(t+1)$}

    \uIf {$u,v \not\in R$ and $E_u(t) > \zeta_u(t)$ and $E_v(t) < \zeta_v(t)$}{
  		$x := \min\{ E_u(t)-\zeta_u(t), \zeta_v(t)-E_v(t) \}$ \\
		$E_u(t+1) := E_u(t) - x$ \\
		$E_v(t+1) := E_v(t) + (1-\beta)x$ \\
   	}
   	\uElseIf {$u \in R$}{
   	    $y:= \zeta_v(t) - E_v(t)$ \\
        $x := \min\{ |y|, E_u(t) \}$ \\
        $E_u(t+1) := E_u(t) + \sgn\{y\} x - \beta x \cdot \one\{y>0\}$ \\
        $E_v(t+1) := E_v(t) - \sgn\{y\} x - \beta x \cdot \one\{y<0\}$ \\
   	}	
   	\ElseIf {$v \in R$}{
   	    $y:= \zeta_u(t) - E_u(t)$ \\
        $x := \min\{ |y|, E_v(t) \}$ \\
        $E_u(t+1) := E_u(t) - \sgn\{y\} x - \beta x \cdot \one\{y<0\}$ \\
        $E_v(t+1) := E_v(t) + \sgn\{y\} x - \beta x \cdot \one\{y>0\}$ \\
   	}
  
\caption{{\bf $k$-depth-target}}
\label{protocol:depth-target}
\end{algorithm} 

\begin{example}\label{example:depth-target}
Consider yet again the instance with $n=6$ nodes that are connected according to the network that is depicted in Figure~\ref{fig:example-distribution}, and $\beta=0$. Recall that the total energy in the network is $E_\total$ and is distributed to the nodes as follows: $E_1 = 500$, $E_2 = 100$, $E_3 = 150$, $E_4 = 400$, $E_5 = 350$, $E_6 = 600$. We will focus on the protocol {\bf $2$-depth-target}. Since the height of the tree is $h=3$, we can compute the target energy of each non-root node using \eqref{eq:zeta} as follows: 
$\zeta_{v_1} = \zeta_{v_5} = \frac{2100}{2^1(3+1)} = 262.5$, 
$\zeta_{v_2} = \zeta_{v_4} = \frac{2100}{2^2(3+1)} = 131.25$, 
and $\zeta_{v_3} = \frac{2100}{2^3(3+1)} = 65.625$.
Observe that since the tree network has been formed, these target energies are not just estimations, they are the final targets that the nodes have, which is why they do not depend on the current time step.  
Let us see how the protocol operates for two parent-child interactions $(v_1,v_2)$ and $(v_1,v_6)$.

Since $v_1$ has $500$ units of energy and its target is $265.5$, it can give the $31.25$ units of energy that $v_2$ would like to get in order to reach its own target of $131.25$ from the $100$ that it currently has. Hence, after the interaction $(v_1,v_2)$, the energies of these nodes are updated to $E_1=468.75$ and $E_2 = 131.25$. After that, when $v_1$ and $v_6$ interact, since $v_6$ is the root of the tree and it does not have any target, $v_1$ can just transfer the extra $206.25$ units to $v_6$ in order to reach its target of $265.5$. 
\hfill
\qed
\end{example}

Next, we show that for $\beta=0$ and $k$-ary tree networks, {\bf $k$-depth-target} converges to relaxed energy distribution. For the special case binary trees ($k=2$), the protocol actually converges to an exact up to the root energy distribution, which is also relaxed: the root has energy that is at least twice the energy of each of its children, while any other node has exactly twice the energy of each of its children. 

\begin{thm}
For $\beta=0$ and $k$-ary tree networks, the protocol {\bf $k$-depth-target} converges to relaxed energy distribution. For $k=2$, the energy distribution is also exact up to the root.
\end{thm}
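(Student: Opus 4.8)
The plan is to analyze {\bf $k$-depth-target} only from the moment $T_0$ at which the tree has been fully formed and, by the proposition of Section~\ref{sec:depth-height}, every register $d_v$ and $h_v$ has settled to its correct value; by Theorems~\ref{thm:binary} and~\ref{thm:k} this happens after finitely many steps almost surely, and from $T_0$ onwards each target is the fixed number $\zeta_v = E_\total/(k^{d_v}(h+1))$, while the total network energy stays equal to $E_\total$ since $\beta=0$. Set $S = E_\total - \sum_{v\neq r}\zeta_v$; the feasibility computation already carried out above gives $S > E_\total/(h+1) > 0$. I would track $\Phi(t) = \sum_{v\neq r}|E_v(t)-\zeta_v|$. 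A short inspection of the three branches of Protocol~\ref{protocol:depth-target} with $\beta=0$ shows that no interaction ever carries a non-root node past its target: a node currently above its target only ever transfers energy away, and never more than its surplus; a node currently below only ever receives energy, and never more than its deficit; a node exactly at its target is never modified. Hence each non-root energy is monotone and bounded, so it converges; consequently $\Phi$ is non-increasing and converges to some $\Phi^\infty\geq 0$, and by conservation $E_r(t)=E_\total-\sum_{v\neq r}E_v(t)$ converges as well. Because $\Phi(t)\geq\bigl|\sum_{v\neq r}(E_v(t)-\zeta_v)\bigr| = |E_r(t)-S|$, proving $\Phi^\infty=0$ automatically gives $E_r^\infty=S$.

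The core of the argument is to show $\Phi^\infty=0$, i.e.\ $E_v^\infty=\zeta_v$ for every non-root $v$. Suppose not, and let $A$ and $B$ be the non-root nodes whose limit is strictly above, respectively strictly below, the target; by the monotonicity just noted, a node in $A$ keeps a surplus bounded below by a positive constant for all $t\geq T_0$, and a node in $B$ keeps such a deficit. If both sets are nonempty, fix $u\in A$ and $v\in B$; they are non-root, so by fairness they interact, in the orientation triggering the first branch, infinitely often, and each such interaction moves a fixed positive amount of energy (at least the smaller of the two limiting gaps) from $u$ to $v$, contradicting convergence of $E_v$. If only $A$ is nonempty, I would first rule out any non-root node staying strictly below its target forever: interacting infinitely often with a fixed $u\in A$, such a node would eventually (once its shrinking deficit drops below $u$'s bounded-away-from-zero surplus) receive its whole remaining deficit in one step and land exactly on target, a contradiction; hence after a finite time no node is below target, so the root never again has to give energy and $E_r$ becomes non-decreasing, while conservation together with $u\in A$ forces $E_r(t)\leq S-c$ for a constant $c>0$; but the root meets $u$ infinitely often and absorbs its (bounded-away-from-zero) surplus each time, so $E_r$ would grow without bound, a contradiction. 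The remaining case, only $B$ nonempty, is symmetric: eventually no node is above target, conservation gives $E_r(t)\geq S+c'>S>0$, and each of the infinitely many root--$v$ interactions with a fixed $v\in B$ delivers a fixed positive amount to $v$, contradicting convergence of $E_v$. So $\Phi^\infty=0$.

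It then remains to see that the equilibrium is reached after finitely many steps and to check the equilibrium conditions. For the first point, by the no-overshoot property a non-root node above its target lands on it exactly the first time it interacts with the root (which absorbs its surplus), and one below lands on it the first time it interacts with the root once $E_r$ (which tends to $S>0$) exceeds that node's vanishing deficit; both events occur in finite time by fairness. For the second point, take a parent--child pair $(p,c)$: if $p\neq r$ then $d_c=d_p+1$, so $E_p=\zeta_p=k\,\zeta_c=k\,E_c\geq 2E_c$ since $k\geq 2$; if $p=r$ then $E_r=S>E_\total/(h+1)\geq 2E_\total/(k(h+1))=2\zeta_c=2E_c$. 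Hence every parent--child pair is in relaxed equilibrium, and when $k=2$ the first identity reads $E_p=2E_c$ exactly for every non-root $p$, so the distribution is additionally exact up to the root.

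I expect the delicate step to be the ``$A$ nonempty, $B$ empty'' case, that is, ruling out a deadlock in which the root sits with too little energy to absorb surplus still trapped in the subtree. The smooth version reads Protocol~\ref{protocol:depth-target} with the natural semantics that the root's absorption of a child's surplus is limited only by that surplus; if one instead insists that absorption is throttled by the root's own energy, one must additionally exclude the configuration ``root empty, no node below target, some node above target'', and this is precisely where the strict feasibility inequality $\sum_{v\neq r}\zeta_v<E_\total$ does the work, since an empty root then forces strictly positive surplus among the non-root nodes, which can be routed directly to a deficit node rather than through the root.
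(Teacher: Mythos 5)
Your proof is correct and follows the same overall route as the paper's: wait until the depth and height registers stabilize so that each non-root target $\zeta_v$ is fixed, observe that non-root energies move monotonically toward their targets without overshooting, conclude that each non-root node ends at $\zeta_v$ with the root holding the residual $S$, and finish with the ratio computation $\zeta_p/\zeta_c=k\geq 2$ together with the bound $E_r\geq E_\total/(h+1)\geq k\zeta_c\geq 2\zeta_c$. Where you genuinely go beyond the paper is the convergence step: the paper simply asserts that monotone bounded energies ``reach the target,'' whereas you track $\Phi(t)=\sum_{v\neq r}|E_v(t)-\zeta_v|$, invoke fairness of the scheduler, and rule out the limit configurations via the $A$/$B$ case analysis, including the delicate deadlock scenario in which surplus is trapped in the subtree while the root is too depleted to absorb it under the literal $\min\{|y|,E_u(t)\}$ cap of Protocol~\ref{protocol:depth-target}. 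That last observation identifies a real ambiguity in the protocol as written that the paper's proof does not confront; your resolution (distinguishing the two semantics and noting where feasibility is used) is a worthwhile addition rather than a digression. Two small slips: the feasibility computation gives $S\geq E_\total/(h+1)$ with equality for a complete $k$-ary tree, not a strict inequality, so your final root check should read $E_r=S\geq E_\total/(h+1)\geq 2\zeta_c$ (the conclusion is unaffected); and in the ``only $A$ nonempty'' subcase the claim that $E_r$ ``grows without bound'' is better phrased as $E_r$ exceeding $S$, contradicting conservation, since the total energy is fixed at $E_\total$.
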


\begin{proof}
Let $\tau \geq 1$ be the step of time after which all nodes correctly estimate their own depths and the height of the tree. Then, for every $t > \tau$ and non-root node $v$, we have that $d_v(t) = d_v(\tau)=d_v$, $h_v(t) = h_v(\tau)=h$, and $\zeta_v(t) = \zeta_v(\tau)=\zeta_v$. During every interaction that a non-root node $v$ participates in, it requests for energy when $E_v(t) < \zeta_v$ and gives away energy when $E_v(t) > \zeta_v$; the root node $r$ gives away and receives energy based on the demands of the other nodes with which it interacts. Therefore, in the first case the energy of a non-root node can only increase until it reaches the target, while in the second case it can only decrease until it again reaches the target. Hence, the protocol converges to a distribution where each non-root node has its target energy, and the root stores the remaining energy of the network. This distribution is exact up to the root. To see this, consider a parent-child pair $(p,c)$ for which it holds that $d_c = d_p + 1$. Then, by \eqref{eq:zeta}, we have
\begin{align*}
\frac{\zeta_p}{\zeta_c} = \left. \left( \frac{E_\total}{k^{d_p}(h + 1)} \right) \middle/ \left( \frac{E_\total}{k^{d_p+1}(h + 1)} \right) \right. = k.
\end{align*}
Hence, the energy distribution is exact up to the root for binary trees ($k=2$) and relaxed for non-root nodes for any other kind of $k$-ary trees ($k \geq 3$). 

To complete the proof and show that the energy distribution is also relaxed, we need to show that the root node also has at least twice the energy of its children. Let $n_d$ be the number of nodes at depth $d$. Since the tree is $k$-ary, we have that $n^d \leq k^d$. Then, the energy of the root node $r$ is 
\begin{align*}
E_r &= E_\total - \sum_{v\neq r}\zeta_v
= E_\total -  \sum_{d=1}^{h} n_d \frac{E_\total}{k^d(h+1)} \\
&= \bigg( h + 1 - \sum_{d=1}^{h}\frac{n_d}{k^d} \bigg) \frac{E_\total}{h+1} 
\geq \frac{E_\total}{h+1}.
\end{align*}
Since any child node $v$ of the root has $\zeta_v = \frac{E_\total}{k(h+1)}$, we obtain that
$E_r\geq k \cdot  \zeta_v \geq 2 \cdot \zeta_v$,
and the theorem follows.
\end{proof}

In addition to the above theorem, we expect that for most instances with binary tree networks, the protocol {\bf $2$-depth-target} will converge to an exact up to the root energy distribution, even in the lossy case: each node will aim to store its target energy, and any energy loss will be accumulated in the energy of the root. However, this means that the distribution will no longer be relaxed since the root might end up having less than twice the energy of its children, for any $k$-ary tree ($k \geq 2$). Of course, converging to an exact up to the root distribution cannot be guaranteed in general, especially in cases where the energy loss is so much that there is not enough energy for non-root nodes to satisfy their targets. 

Further, observe that when $\beta=0$ and the network is a complete binary tree (the root and all internal nodes have exactly two children), the target energy of any node will be exactly equal to its ideal energy. Hence, {\bf $2$-depth-target} converges to the unique exact distribution in this case. However, if the tree structure is incomplete (with the worst case being a line), then the target energy of any non-root node will be far away from ideal. As we will see in our simulations for binary trees, the structure of the network plays a huge factor in the quality of the final distribution when compared to the distribution produced by {\bf ideal-target}.

\section{Simulations}\label{sec:experiments}

In this section, we evaluate the energy redistribution protocols defined in Section~\ref{sec:energy-protocols}, via indicative simulations implemented in Matlab 2018a. 

\subsection{Simulation Setup}
We study both the lossless case where $\beta=0$, and the lossy case where $\beta$ is a random variable following the Normal Distribution $N(0.2, 0.05)$. We investigate two scenarios regarding the initial energy of the nodes. The total initial network energy is analogous to the number $n \in \{10, 30, 50\}$ of nodes in the network (in particular, $n \cdot 10^3$), and it can be split amongst them either {\em uniformly} leading to all nodes having the same initial energy, or {\em randomly} leading to the nodes having possibly different initial energy. The second scenario (of unequal initial energy per node) is more realistic due to the different characteristic and needs of the various portable devices there exist (for example, smartphones). 

For simplicity and in order to have meaningful results about the protocol {\bf $k$-depth-target}, we only focus on binary tree networks (with $k=2$); in what follows we refer to {\bf $2$-depth-target} as {\bf depth-target}. 
For statistical smoothness we have repeated each simulation $100$ times. The statistical analysis of our findings (average, median, lower and upper quartiles, outliers) demonstrate very high concentration around the mean and, thus, in the following we depict only the average values of our simulation results.

\subsection{Metrics}
We use two metrics to measure the performance of our protocols. The first one is called {\em distribution distance} and is designed to show how fast a protocol converges to a relaxed energy distribution if this is possible, while the second one is called {\em energy distance} and is a measure of the quality of the final energy distribution compared to the ideal one (as computed by {\bf ideal-target} given the true total initial network energy; see Section~\ref{sec:ideal-target} for the definition of the protocol).

For any step of time $t \geq 0$, the distribution distance is defined as 
\begin{align}\label{eq:DD}
\text{DD}(t) = \sum_u \sum_{v: u \haschild v} \bigg( 2E_v(t) - E_u(t) \bigg) \one\{E_u(t) < 2E_v(t)\}.
\end{align}
Essentially, the distribution distance counts the total energy that must be redistributed at time $t$ in order to achieve a relaxed energy distribution. The {\em convergence time} of {\bf $\lambda$-exchange} and {\bf $\kappa$-transfer} is the first time step $\tau$ for which $\text{DD}(\tau) = 0$, when a relaxed distribution has been reached. For the targeted protocols {\bf ideal-target} and {\bf depth-target}, the convergence time is the first time step $\tau$ after which the distribution distance remains constant: $\text{DD}(\tau) = \text{DD}(t)$ for every $t > \tau$. The value of $\text{DD}(\tau)$ will be $0$ in the lossless case since both protocols achieve a relaxed distribution then. In the lossy case however, $\text{DD}(\tau)$ can be positive, since the energy that is lost may prevent the protocols to achieve a relaxed distribution, especially {\bf ideal-target} which sets the targets so as to achieve an exact distribution based on the initial total network energy.

The energy distance of the final distribution from the ideal one is defined as
\begin{align}\label{eq:ED}
\text{ED} = \frac{1}{2} \sum_v |E_v(\tau) - \gamma_v(T,E_\total)|,
\end{align}
where $\gamma_v(T,E_\total)$ is the target energy of $v$ as set by {\bf ideal-target}, i.e., it is the ideal energy based on the underlying tree structure $T$ of the network and the total initial network energy. Essentially, the energy distance counts the total energy that has been misplaced in the final distribution. This metric also allows us to compare our protocols even when there is energy loss; in this case, to compute the ideal energy of each node, we consider the total energy in the network at the time when the network has been formed. Since even a ``perfect'' protocol like {\bf ideal-target} will inevitably lose some energy in the lossy case, we also measure the energy that has been lost until the convergence to a steady energy distribution.

\subsection{Fine-tuning of {\bf $\lambda$-exchange} and {\bf $\kappa$-transfer}}\label{sec:finetuning}
To optimize the performance of {\bf $\lambda$-exchange} and {\bf $\kappa$-transfer}, 
we study the cases where $\lambda \in \{2, 3, 4, 5, 6\}$ and $\kappa \in \{0.3, 0.4, 0.5, 0.6, 0.7\}$.

\begin{figure}[t!]
\centering
\begin{subfigure}{0.49\textwidth}
	\includegraphics[scale=0.4]{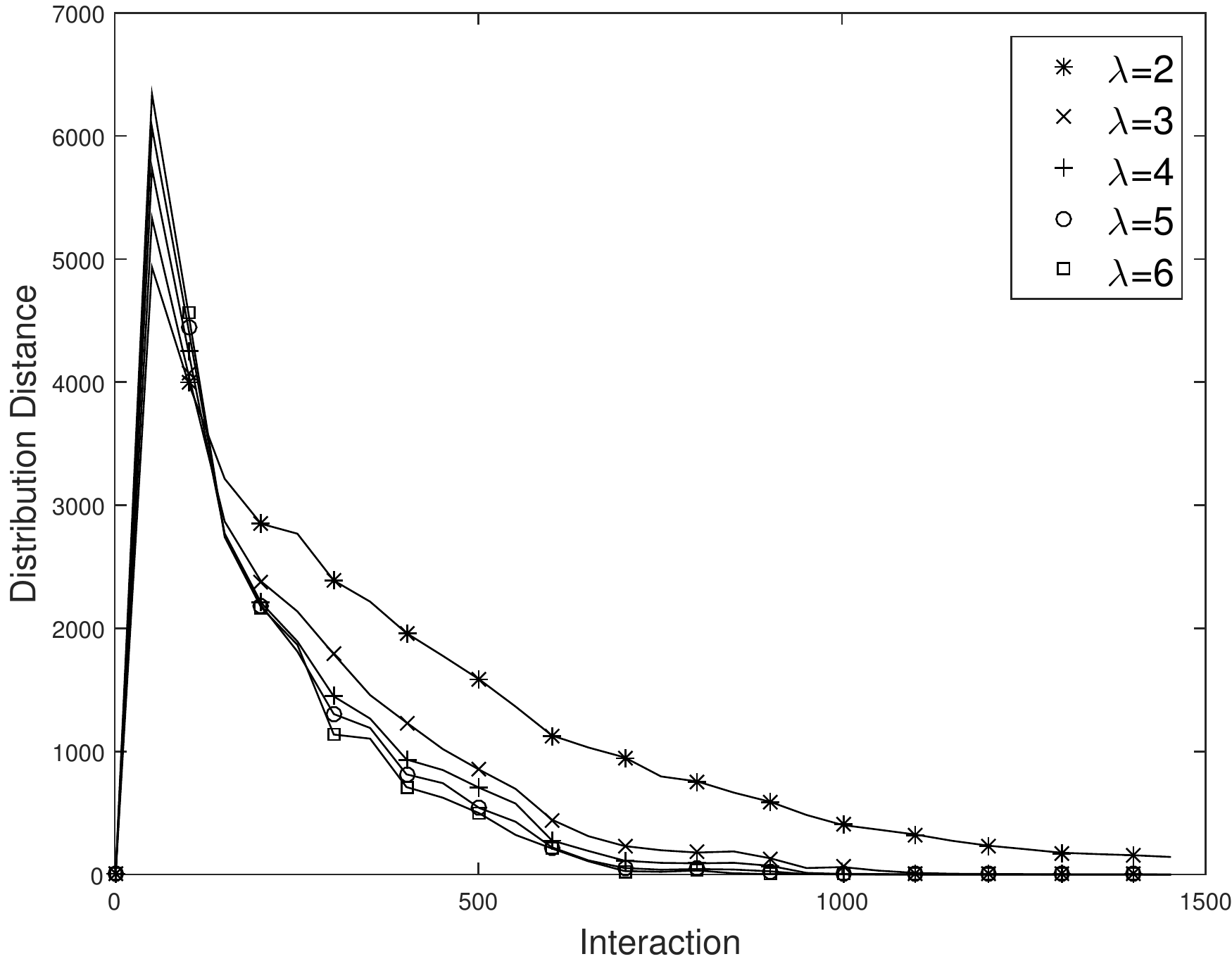}
	\caption{{\bf $\lambda$-exchange}}
   \label{subfig:exchange}
\end{subfigure}
\begin{subfigure}{0.49\textwidth}
	\includegraphics[scale=0.4]{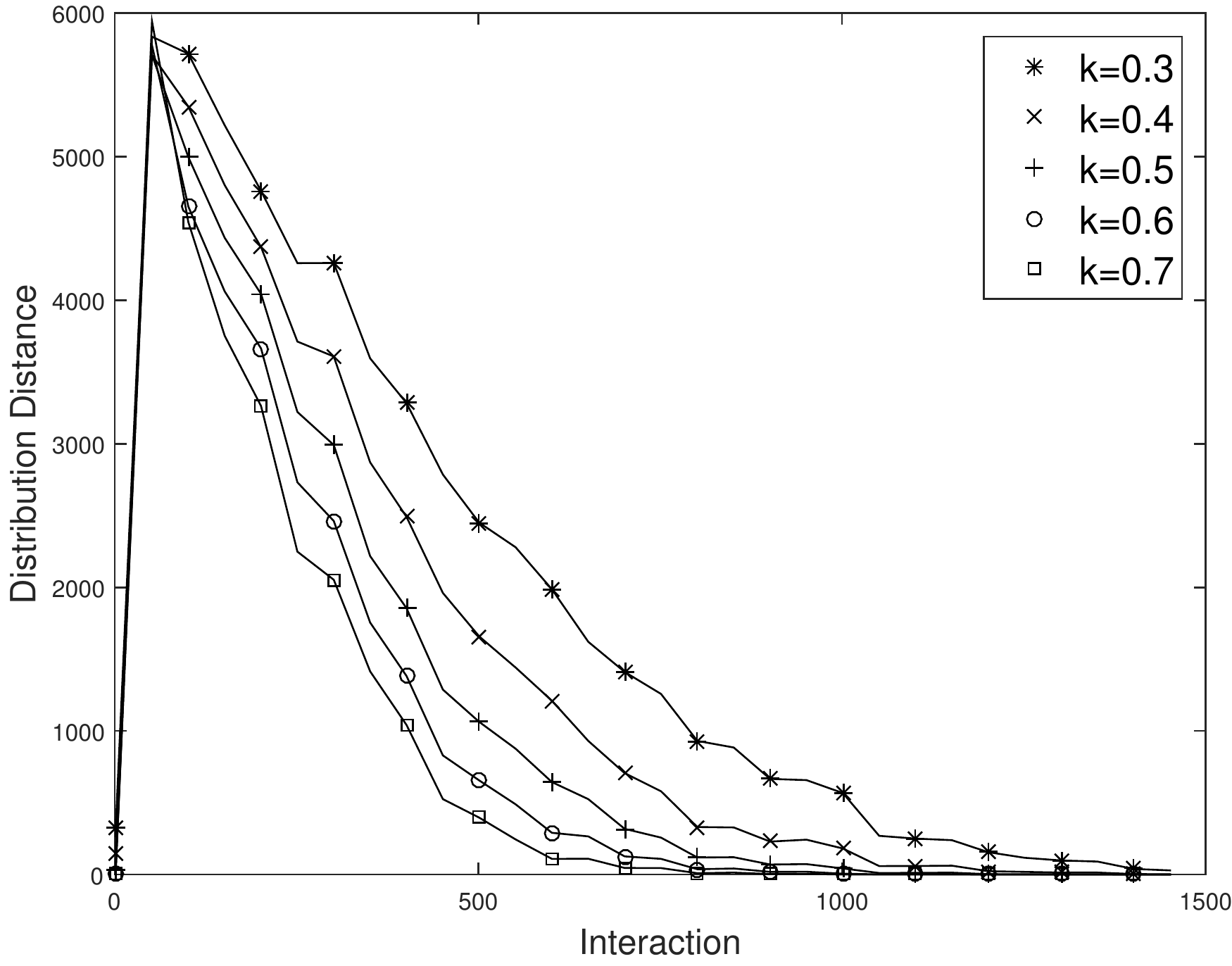}
	\caption{{\bf $\kappa$-transfer}}
   \label{subfig:transfer}
\end{subfigure}
\caption{Distribution distance of {\bf $\lambda$-exchange} and {\bf $\kappa$-exchange} for various values of $\lambda \geq 2$ and $\kappa \in [0,1]$. It is easy to observe that as $\lambda$ and $\kappa$ increase, the distance decreases faster.}
\label{fig:finetune}
\end{figure}

\begin{table}[t!]
\centering
\begin{tabular}{|c|c|c|c|c|}
\noalign{\hrule height 1pt}\hline
\multicolumn{5}{|c|} {{\bf $\lambda$-exchange}} \\ 
\noalign{\hrule height 1pt}\hline
$\lambda=2$ & $\lambda=3$ & $\lambda=4$ & $\lambda=5$ & $\lambda=6$  \\
\hline
5,98\% & 12,76\% &  18,03\% & 21,03\% & 23,92\% \\
\noalign{\hrule height 1pt}\hline
\end{tabular}
\ \ \ \ \
\begin{tabular}{|c|c|c|c|c|}
\noalign{\hrule height 1pt}\hline
\multicolumn{5}{|c|} {{\bf $\kappa$-transfer}} \\ 
\noalign{\hrule height 1pt}\hline
$\kappa=0.3$ & $\kappa=0.4$ & $\kappa=0.5$ & $\kappa=0.6$ & $\kappa=0.7$  \\
\hline
57,80\% & 59,44\% &  59,65\% & 59,67\% & 59,67\% \\
\noalign{\hrule height 1pt}\hline
\end{tabular}
\caption{Energy distance of {\bf $\lambda$-exchange} and {\bf $\kappa$-transfer} for various values of $\lambda \geq 2$ and $\kappa \in [0,1]$. The depicted percentages are the result of diving the energy distance by the total network energy. It is easy to observe that for {\bf $\lambda$-exchange} as $\lambda$ increases, the distance increases. In contrast, for {\bf $\kappa$-transfer} all values of $\kappa$ exhibit similar high energy distance.}
\label{tab:finetune}
\end{table}

Figure~\ref{fig:finetune} depicts the performance of the two protocols for these values of the parameters with respect to the distribution distance metric, for $n=10$ nodes in the lossless case and different initial energy supplies; we have experimented with many more different settings, but the conclusions are similar. 
One can easily observe that as the values of $\lambda$ and $\kappa$ increase, the distribution distance and, consequently, the convergence time decreases. 
Specifically, for {\bf $\lambda$-exchange} there is a huge improvement from $\lambda=2$ to $\lambda=3$, but then the improvement is only minor as we consider higher values of $\lambda$; this seems to be a consequence of the energy distribution definition, where we aim for nodes to have twice the energy of each of their children, and hence there is a significant improvement when we move from $\lambda=2$ to $\lambda=3$. 
In contrast, for {\bf $\kappa$-transfer} the improvement in convergence time is steady as the value of $\kappa$ increases, which is also expected since the energy transfers from child to parent nodes increase with  rate analogous to $\kappa$. 

Table~\ref{tab:finetune} contains the values of the energy distance induced by the two protocols for the values of the parameters considered. We can observe that higher values of $\lambda$ incur higher and higher energy distance, while the energy distance remains pretty much at the same level for all values of $\kappa$.

Given these observations, we conclude that the values $\lambda=2$ and $\lambda=3$ nicely balance the distribution distance and the energy distance for {\bf $\lambda$-exchange}. Hence, we choose $\lambda=2$, but also implement another {\em randomized} protocol, which we call {\bf rand-exchange}. According to this protocol, $\lambda$ varies from interaction to interaction, as a random variable following a uniform distribution taking values in the interval $[2,3]$. For {\bf $\kappa$-transfer} we choose $k=0.5$ as the middle ground.

\subsection{Comparison of the protocols}
Figures~\ref{fig:distribution_distance_lossless} and~\ref{fig:distribution_distance_lossy} depict the performance of our energy redistribution protocols in terms of the distribution distance metric in the lossless and lossy case, respectively. The corresponding performance values in terms of the energy distance metric are given in Tables~\ref{tab:EnergyDistance-lossless} and \ref{tab:EnergyDistance-lossy}, while Table~\ref{tab:Energy-loss} contains the average percentages of energy that has been lost until the protocols converge to a steady energy distribution. See Figure~\ref{fig:tree-heatmap} for an example of a tree structure created via our simulations and the corresponding final distributions achieved by the protocols.

First, we can easily observe that as the number of nodes increases, the protocols require more time to converge (see Figures~\ref{fig:distribution_distance_lossless} and~\ref{fig:distribution_distance_lossy}), the quality of the final distribution decreases (see Tables~\ref{tab:EnergyDistance-lossless} and \ref{tab:EnergyDistance-lossy}), and more energy is lost (see Table~\ref{tab:Energy-loss}). Second, whether the nodes all have equal or different energy does not seem to affect the performance of the protocols (for instance, compare Figures \ref{fig:lossless_uniform_10}, \ref{fig:lossless_uniform_30} and \ref{fig:lossless_uniform_50} to \ref{fig:lossless_non_uniform_10}, \ref{fig:lossless_non_uniform_30} and \ref{fig:lossless_non_uniform_50}). Finally, in the lossy case the protocols converge faster than in the lossless case (for example, compare Figures \ref{fig:lossless_uniform_10}--\ref{fig:lossless_uniform_50} to Figures~\ref{fig:loss_uniform_10}--\ref{fig:loss_uniform_50}). This is expected since in the lossy case the total network energy is decreasing in time, and therefore there is less energy that can be transferred from node to node.

In all cases, {\bf $2$-exchange} seems to be the slowest protocol, which is natural since it is not a targeted protocol and may require many exchanges where the energy may go either upwards (to parent nodes) or downwards (to child nodes), depending on the needs of the interacting nodes. In contrast, for the {\bf $0.5$-transfer} protocol the energy can only go towards the root of the tree, since at every parent-child interaction where the energy equilibrium is not satisfied, the child transfers half of its energy to the parent. The protocol {\bf rand-exchange} is of course faster than {\bf $2$-exchange}, since it also randomly considers higher values of $\lambda$; see the previous subsection for a comparison of different values of $\lambda$. On the other hand, {\bf depth-target} seems to be the fastest protocol, outperforming even {\bf ideal-target}. This is due to its definition, where the root plays the role of an auxiliary node helping the other nodes to reach their target, as opposed to the case of {\bf ideal-target} where the root also has a target that must be reached. 

Even though {\bf $2$-exchange} needs more time to converge to a stable energy distribution, as we can see in Tables~\ref{tab:EnergyDistance-lossless} and \ref{tab:EnergyDistance-lossy}, it outperforms all protocols in terms of energy distance, except {\bf ideal-target} in the lossless case, which of course has zero energy distance by its definition. The low energy distance that {\bf $2$-exchange} achieves in the lossless case is a due to the energy equilibrium condition that it exploits, which is tailor-made so that when a parent-child pair of nodes interact, the parent ends up with exactly twice the energy of the child. Consequently, we should expect that the final energy distribution will be quite close to the ideal one, especially when no energy is lost.
Even though the energy distance of {\bf $2$-exchange} is considerably higher in the lossy case, it is still lower than that of the other protocols, including {\bf ideal-target}, which is quite surprising if one also accounts the fact that the energy loss is much smaller for {\bf ideal-target} and {\bf depth-target}; see Table~\ref{tab:Energy-loss}. 
Of course, the {\bf rand-exchange} has a comparable performance in this perspective since it is similarly defined.

The protocols {\bf $0.5$-transfer} and {\bf depth-target} have high energy distance in both the lossless and the lossy case, while {\bf ideal-target} also has high energy distance in the lossy case. This is naturally expected for {\bf $0.5$-transfer} since it only transfers energy towards the root and therefore creates an extremely unbalanced energy distribution, which might end up being relaxed, but it is much different than the ideal one.  
On the other hand, the reason why {\bf depth-target} has such a high energy distance might be due to the fact that the target energy of each node depends on the network structure, and if the structure is unbalanced, then almost all of the network energy ends at the root node, which plays an auxiliary role. For {\bf ideal-target} this is not a problem since the root also has a particular target energy, but in the lossy case the energy that is lost creates the following problem: for an energy transfer between two nodes, one of them must need to give away excess energy and the other must need to receive energy; since energy is lost during the exchanges, all nodes might end up needing either to give away energy or to receive energy.

\begin{figure}[t]
\centering
\includegraphics[scale=0.55]{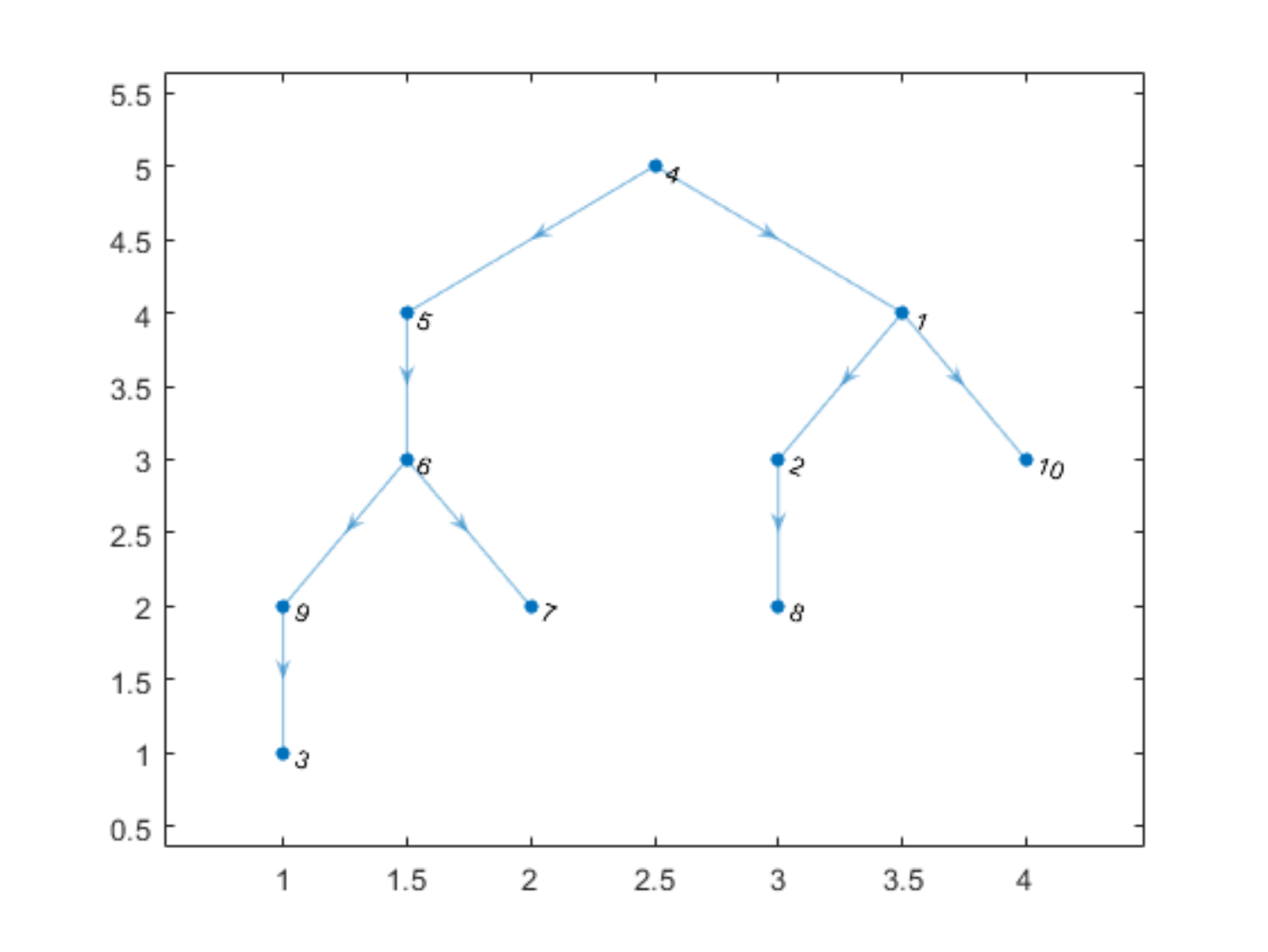}
\includegraphics[scale=0.55]{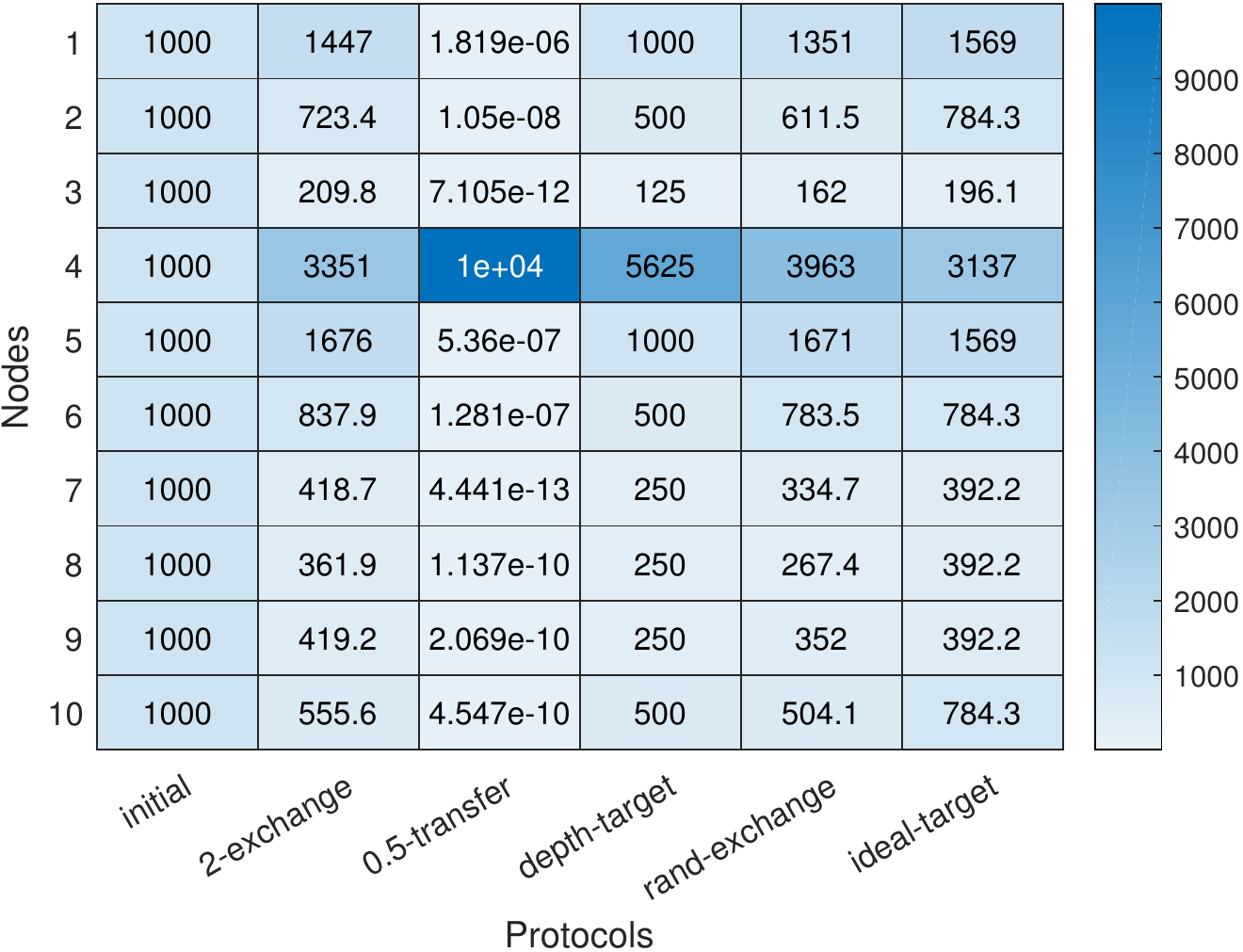}
\caption{An example of a binary tree network for $n=10$ nodes and the corresponding final energy distribution that the protocols compute for the lossless case with equal initial energy for the nodes.}
\label{fig:tree-heatmap}
\end{figure}

Given the above observations, the protocols that best balance all of the trade-offs we have discussed (fast convergence and energy distribution of good quality, with low energy loss until convergence) seem to be {\bf $2$-exchange} and {\bf rand-exchange}, which are also oblivious and do not depend on global knowledge; {\bf depth-target} is of course a good candidate, but it is vulnerable when the tree network ends up being even just a bit unbalanced, and requires knowledge that may not be available a priori, which is a hard constraint when dealing with computationally weak devices with limited power and memory, as we assume in this work.

\begin{figure*}[t]
\centering
\begin{subfigure}{0.49\textwidth}
   \includegraphics[width=0.9\columnwidth]{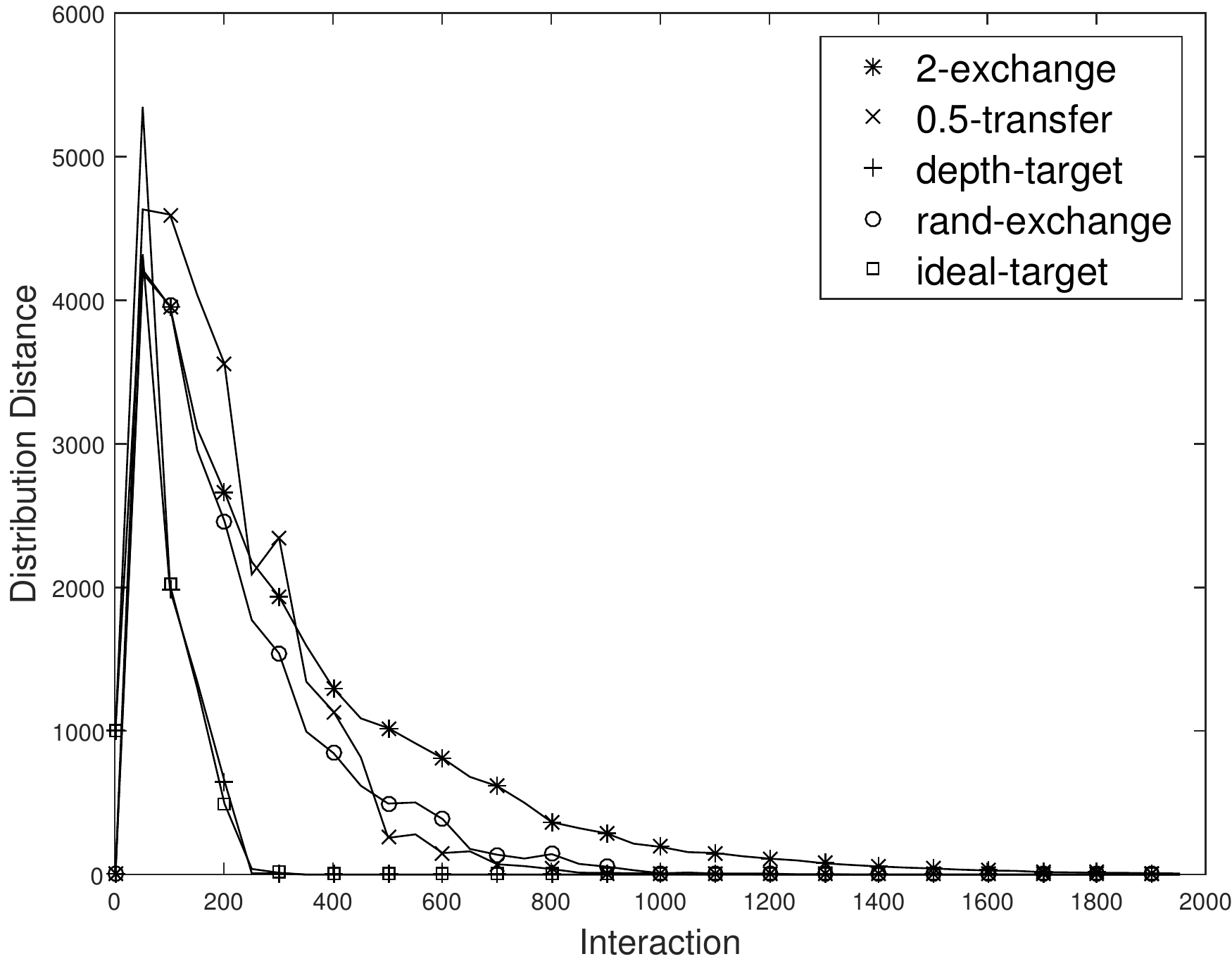}
   \caption{$n=10$}
   \label{fig:lossless_uniform_10}
\end{subfigure}
\begin{subfigure}{0.49\textwidth}
   \includegraphics[width=0.9\columnwidth]{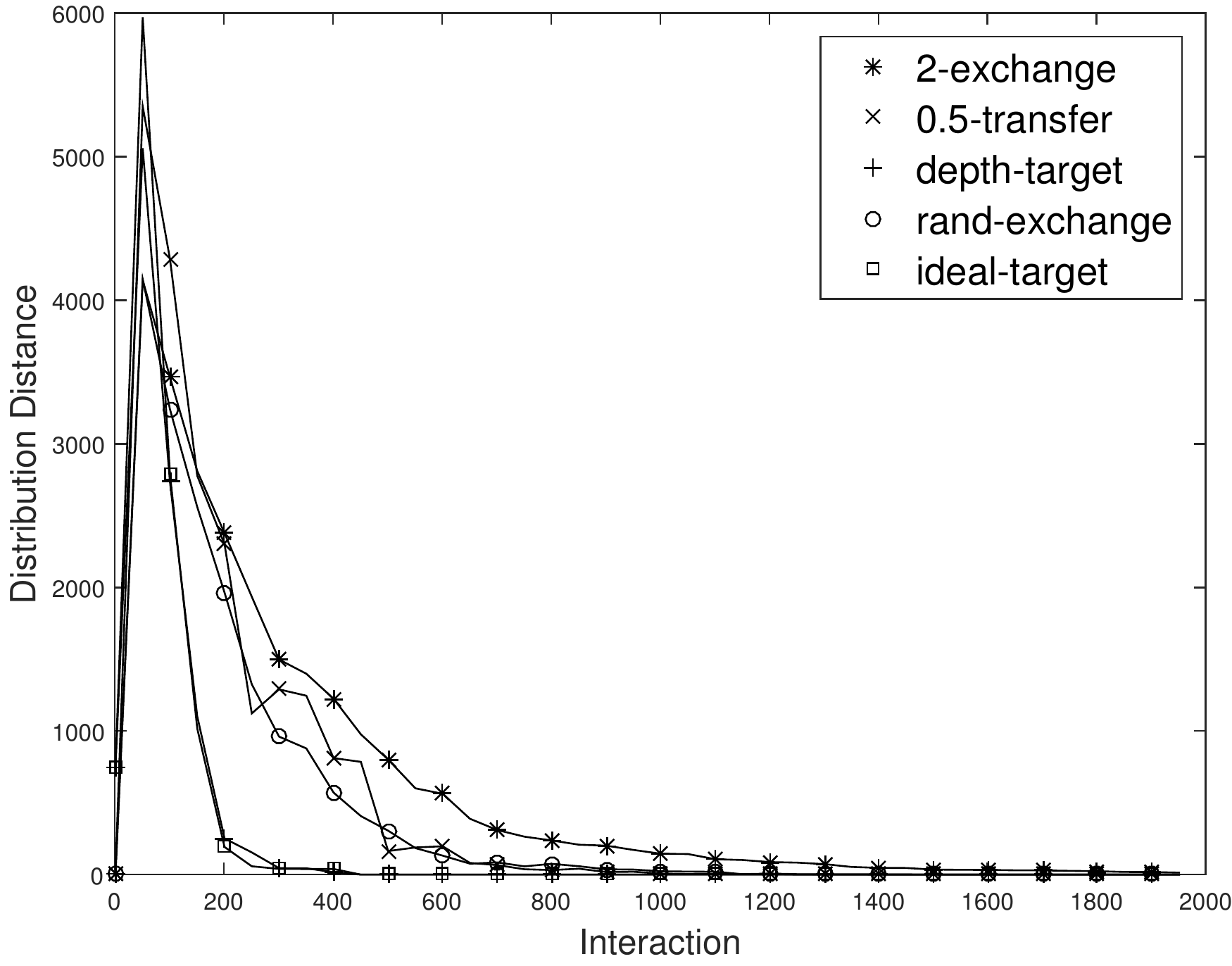}
   \caption{$n=10$}
   \label{fig:lossless_non_uniform_10}
\end{subfigure}

\begin{subfigure}{0.49\textwidth}
   \includegraphics[width=0.9\columnwidth]{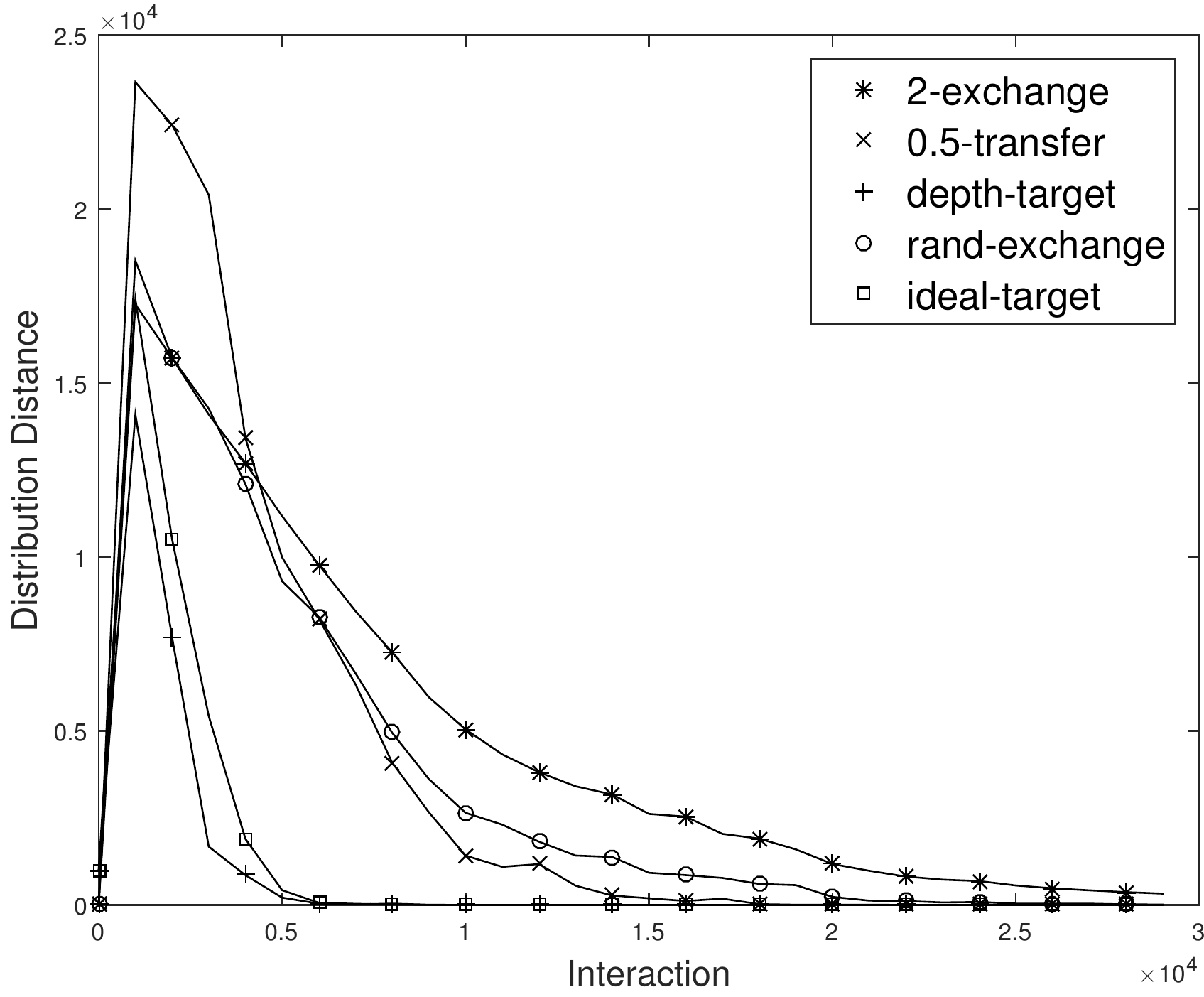}
   \caption{$n=30$}
   \label{fig:lossless_uniform_30}
\end{subfigure}
\begin{subfigure}{0.49\textwidth}
   \includegraphics[width=0.9\columnwidth]{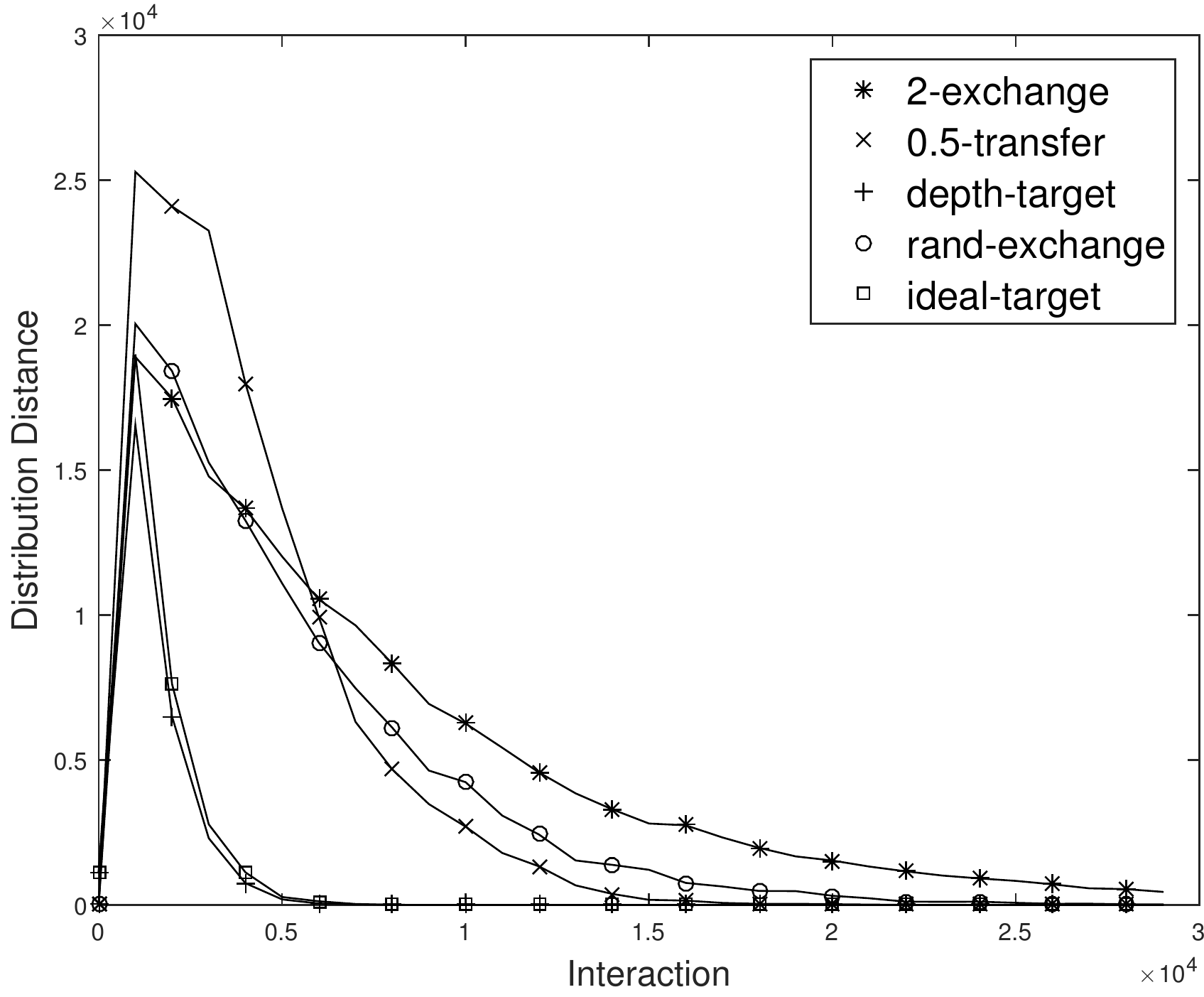}
   \caption{$n=30$}
    \label{fig:lossless_non_uniform_30}
\end{subfigure}

\begin{subfigure}{0.49\textwidth}
   \includegraphics[width=0.9\columnwidth]{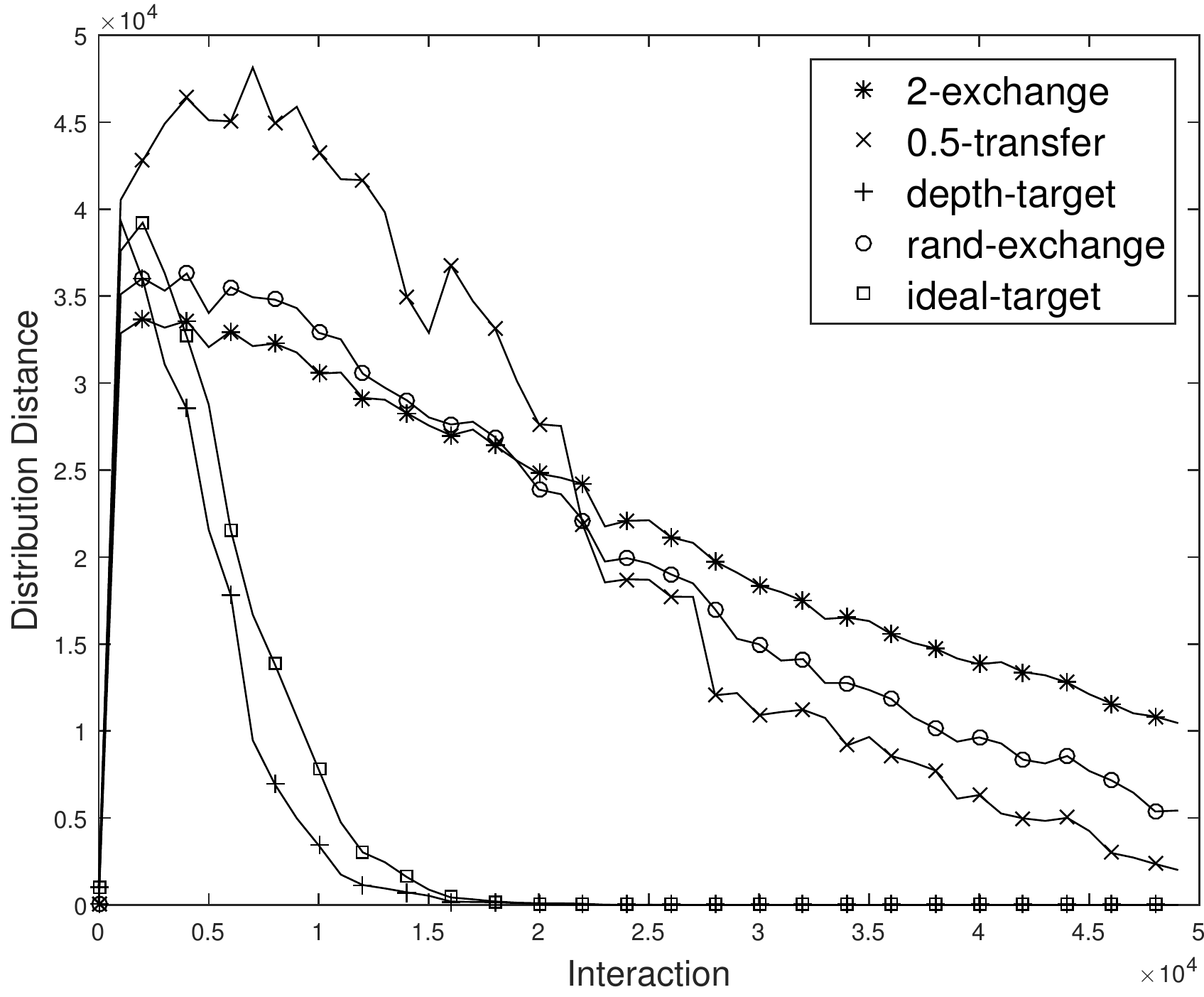}
   \caption{$n=50$}
    \label{fig:lossless_uniform_50}
\end{subfigure}
\begin{subfigure}{0.49\textwidth}
   \includegraphics[width=0.9\columnwidth]{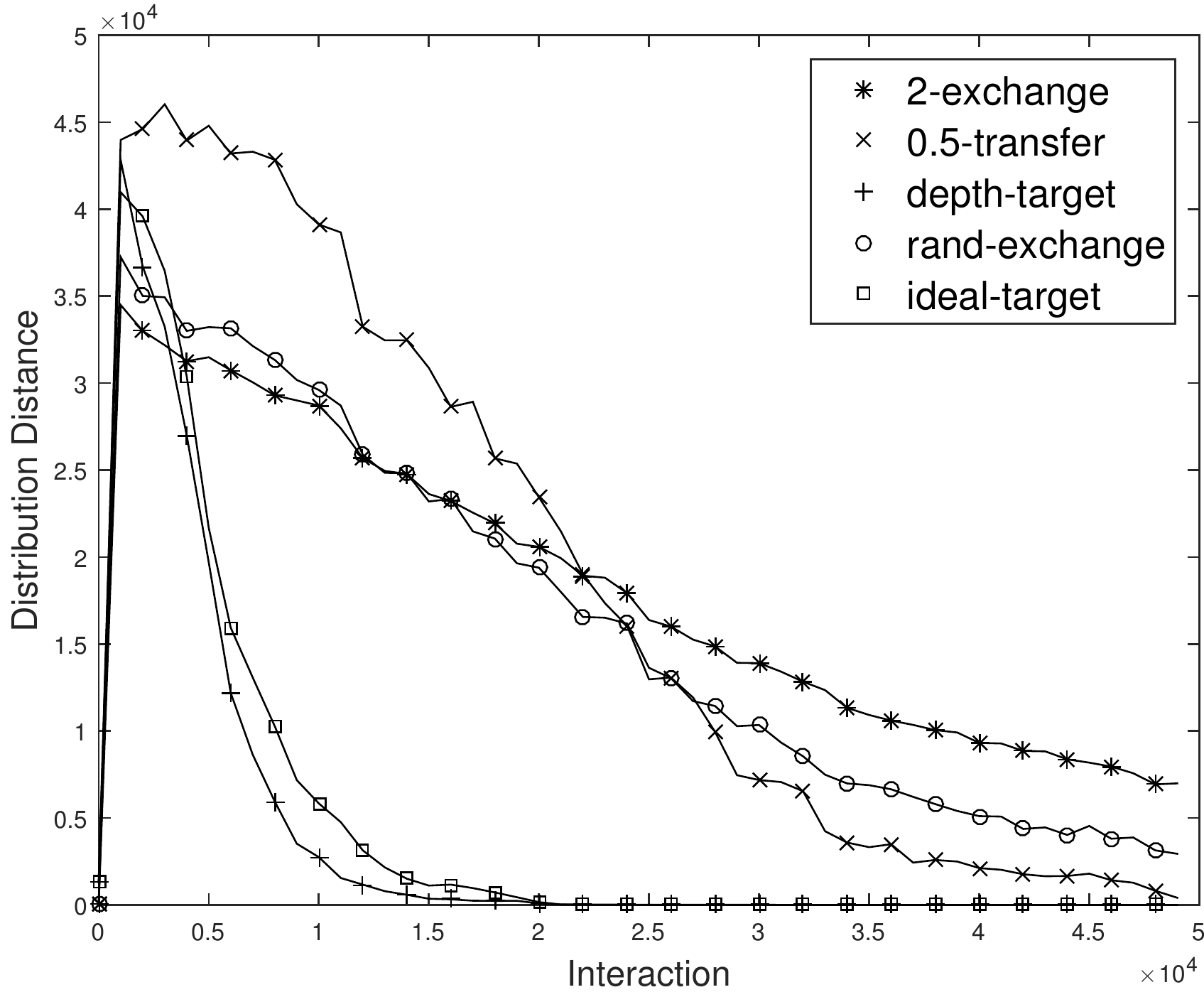}
   \caption{$n=50$}
   \label{fig:lossless_non_uniform_50}
\end{subfigure}
\caption{Distribution distance of the five energy redistribution protocols ({\bf $2$-exchange}, {\bf $0.5$-transfer}, {\bf depth-target}, {\bf rand-exchange} and {\bf ideal-target}) in the {\em lossless} case for various number of nodes $n\in \{10,30,50\}$. Figures (a), (c) and (e) depict the performance of the protocols in the case where the nodes have the same initial energy, while Figures (b), (d) and (f) depict their performance in the case of different initial energy supplies.}
\label{fig:distribution_distance_lossless}
\end{figure*}

\begin{figure*}[t]
\centering
\begin{subfigure}{0.49\textwidth}
   \includegraphics[width=0.9\columnwidth]{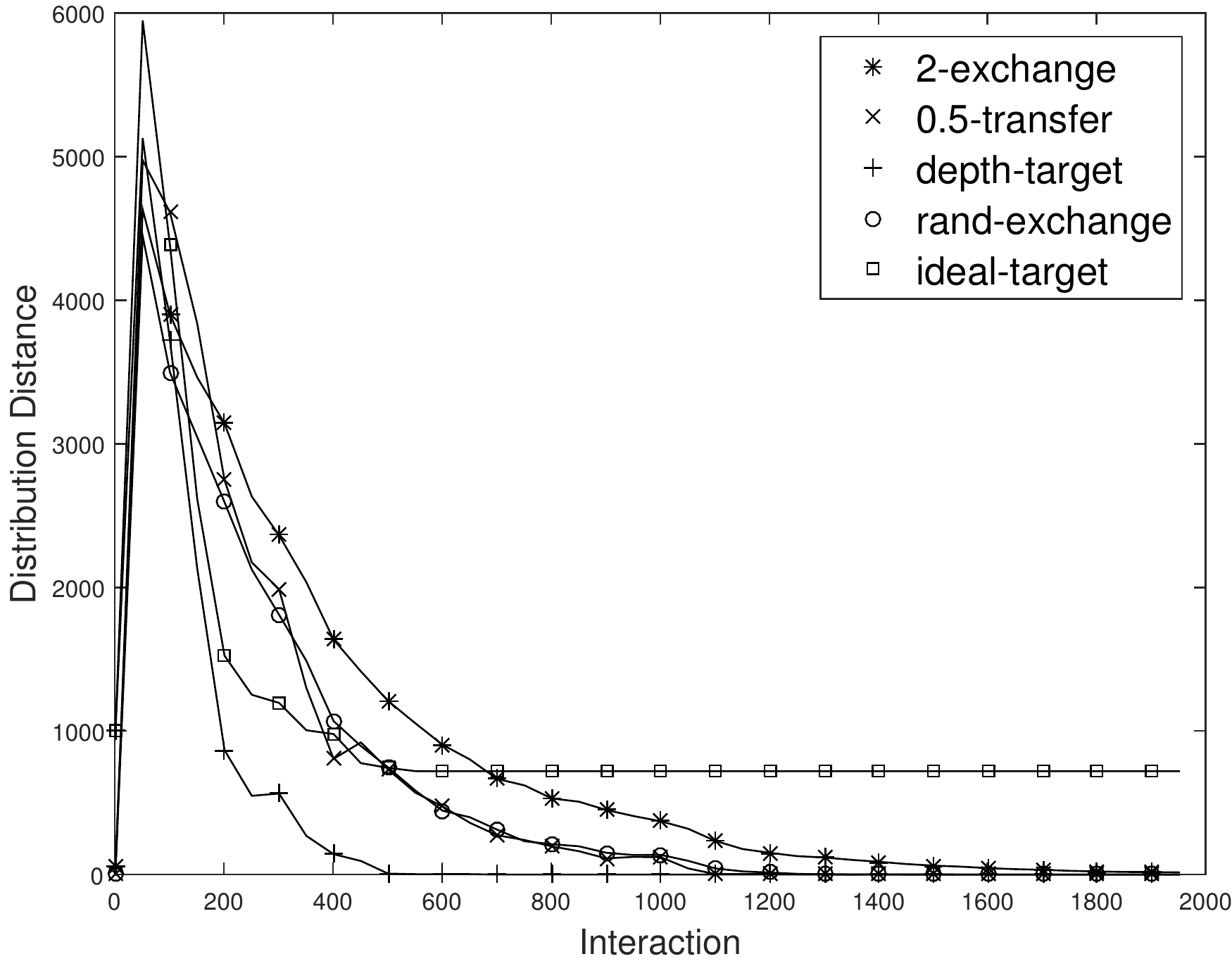}
   \caption{$n=10$}
   \label{fig:loss_uniform_10}
\end{subfigure}
\begin{subfigure}{0.49\textwidth}
   \includegraphics[width=0.9\columnwidth]{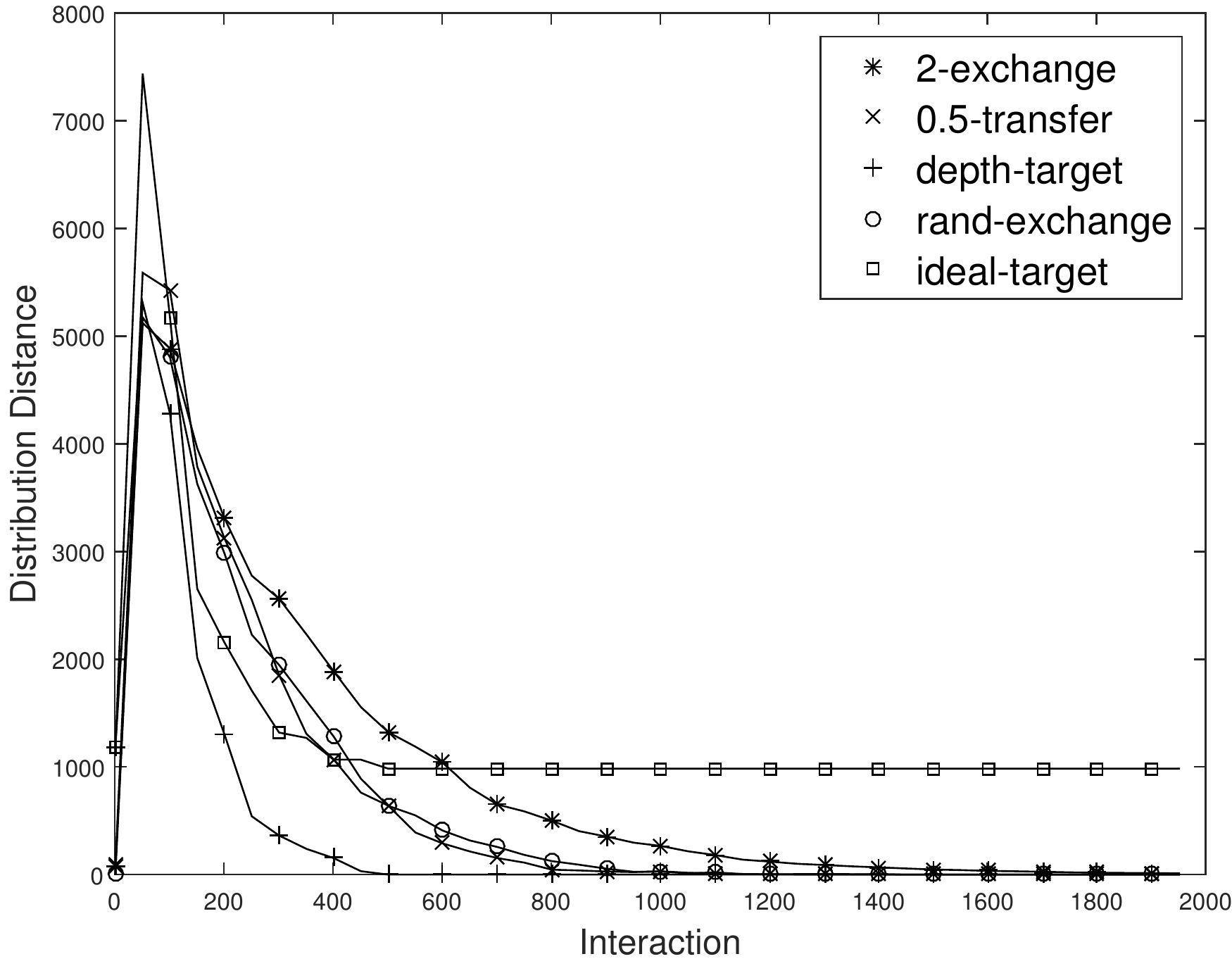}
   \caption{$n=10$}
   \label{fig:loss_non_uniform_10}
\end{subfigure}

\begin{subfigure}{0.49\textwidth}
   \includegraphics[width=0.9\columnwidth]{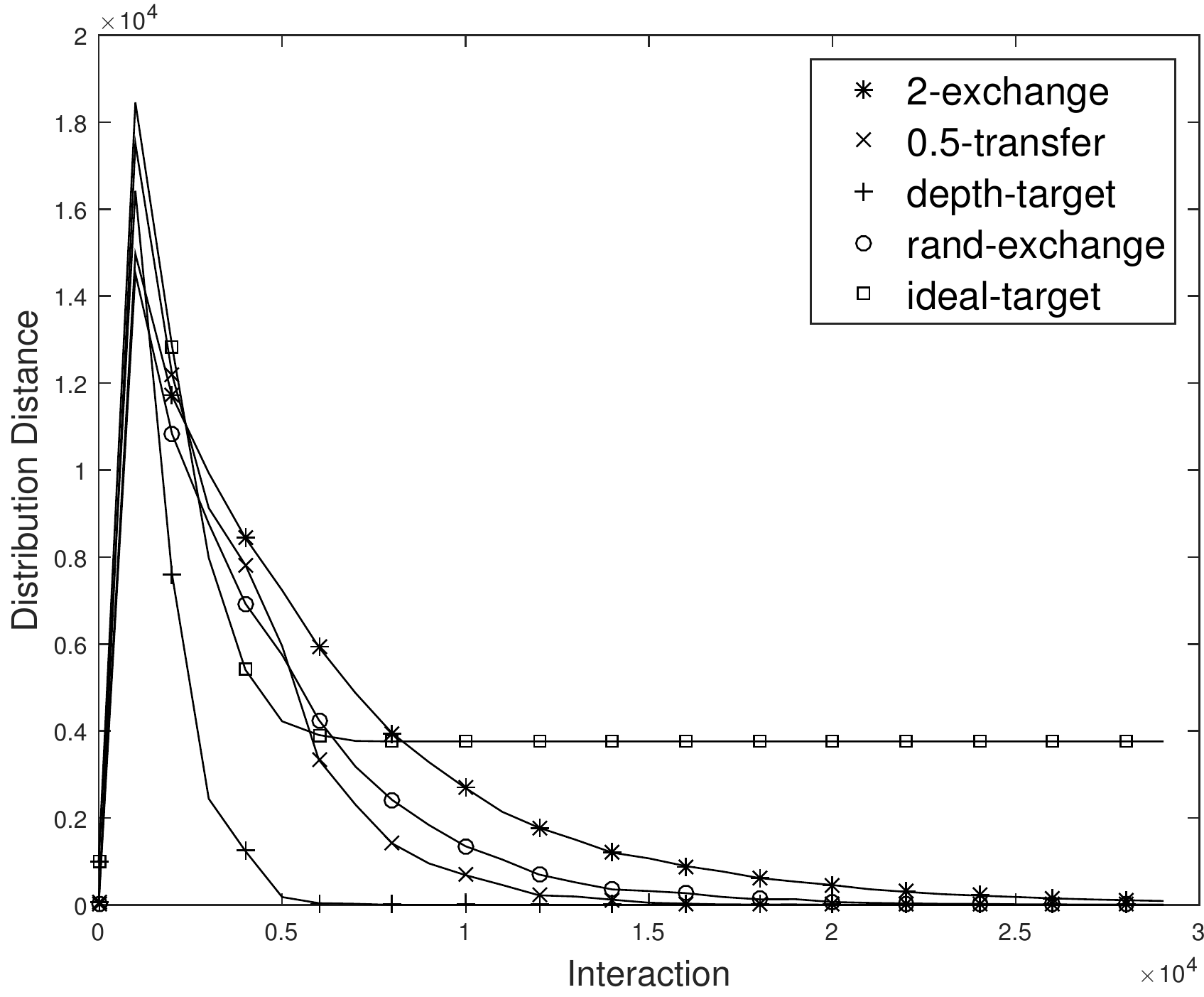}
   \caption{$n=30$}
   \label{fig:loss_uniform_30}
\end{subfigure}
\begin{subfigure}{0.49\textwidth}
   \includegraphics[width=0.9\columnwidth]{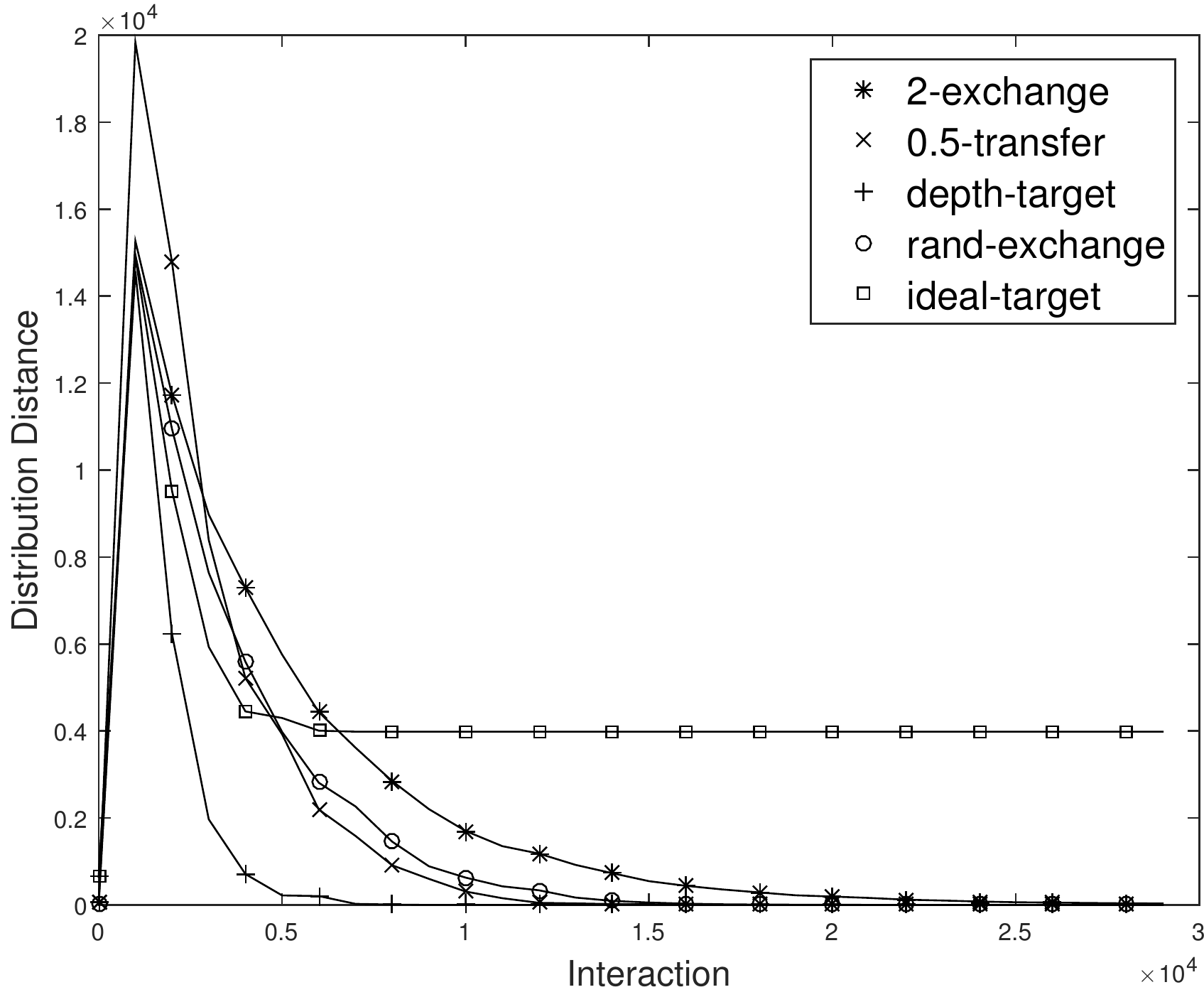}
   \caption{$n=30$}
    \label{fig:loss_non_uniform_30}
\end{subfigure}

\begin{subfigure}{0.49\textwidth}
   \includegraphics[width=0.9\columnwidth]{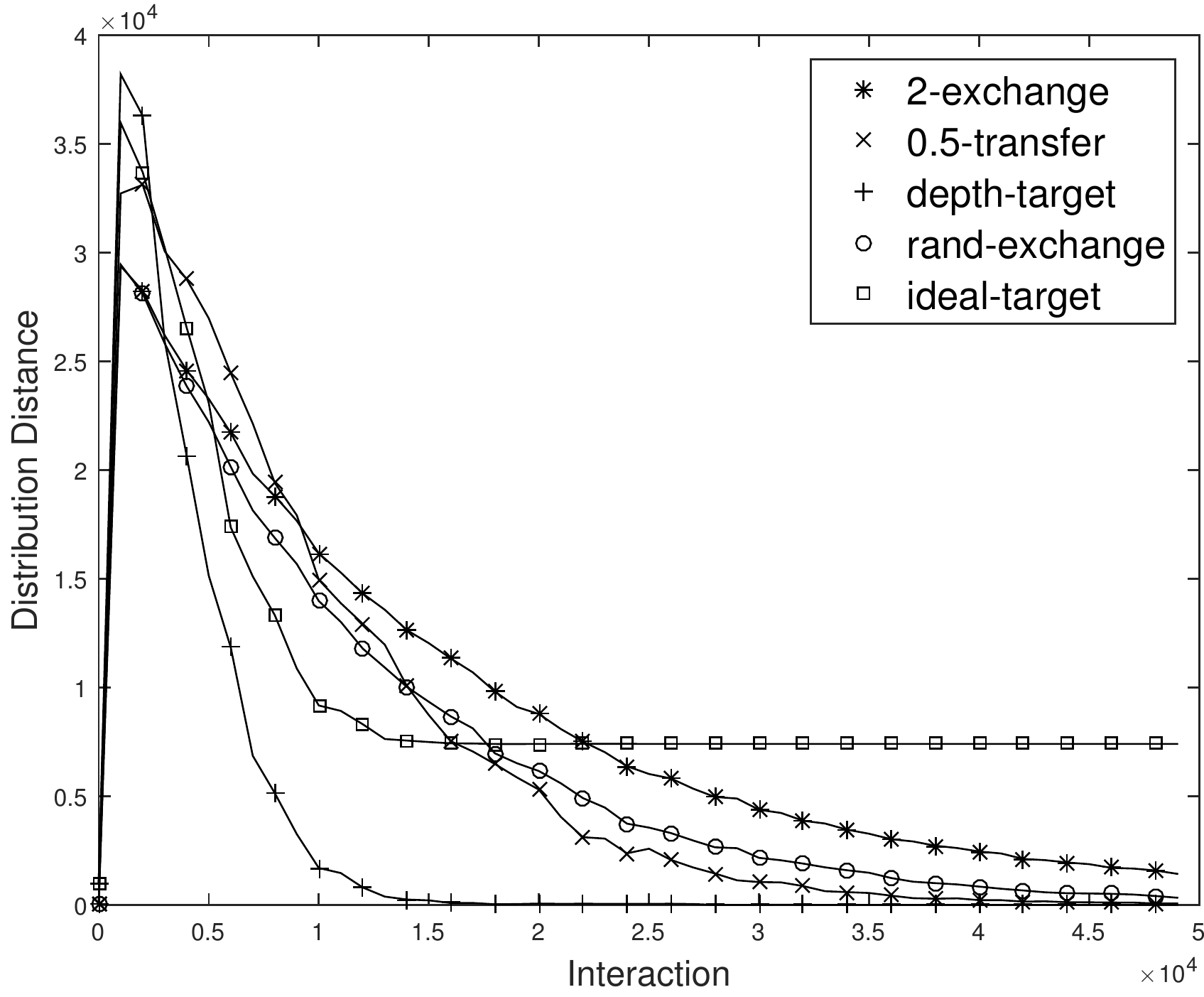}
   \caption{$n=50$}
    \label{fig:loss_uniform_50}
\end{subfigure}
\begin{subfigure}{0.49\textwidth}
   \includegraphics[width=0.9\columnwidth]{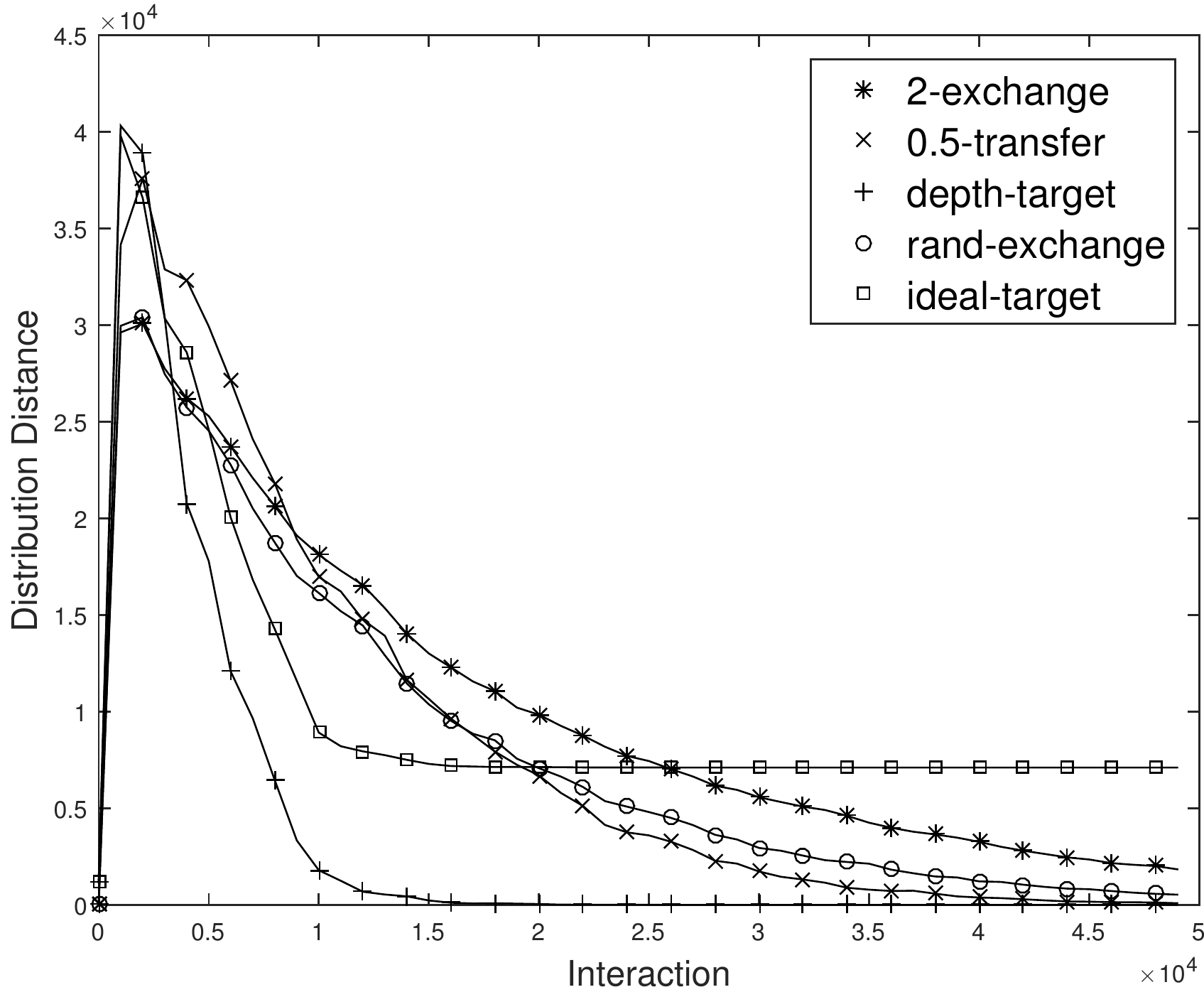}
   \caption{$n=50$}
   \label{fig:loss_non_uniform_50}
\end{subfigure}
\caption{Distribution distance of the five energy redistribution protocols ({\bf $2$-exchange}, {\bf $0.5$-transfer}, {\bf depth-target}, {\bf rand-exchange} and {\bf ideal-target}) in the {\em lossy} case for various number of nodes $n\in \{10,30,50\}$. Figures (a), (c) and (e) depict the performance of the protocols in the case where the nodes have the same initial energy, while Figures (b), (d) and (f) depict their performance in the case of different initial energy supplies.}
\label{fig:distribution_distance_lossy}
\end{figure*}

\begin{table*}[t]
\begin{center}
\begin{tabular}{|c|cccccccccccc|}
\noalign{\hrule height 1pt} \cline{1-13}
{\multirow{2}{*}{protocol} } & \multicolumn{6}{c|}{Same initial energy}   &  \multicolumn{6}{c|}{Different initial energy} \\ 
\cline{2-13}

& \multicolumn{2}{c|}{$n=10$} 
& \multicolumn{2}{c|}{$n=30$} 
& \multicolumn{2}{c|}{$n=50$} 
& \multicolumn{2}{c|}{$n=10$} 
& \multicolumn{2}{c|}{$n=30$} 
& \multicolumn{2}{c|}{$n=50$} \\
\noalign{\hrule height 1pt}\hline

{\bf $2$-exchange} & 
  \multicolumn{2}{c|}{6.93\%} & \multicolumn{2}{c|}{6.92\% } &  \multicolumn{2}{c|}{24.32\%} & \multicolumn{2}{c|}{6.15\%} & \multicolumn{2}{c|}{2.27\% } &  \multicolumn{2}{c|}{16.70\% }\\
\hline

{\bf $0.5$-transfer} & 
\multicolumn{2}{c|}{63.71\% } & \multicolumn{2}{c|}{66.32\% } &  \multicolumn{2}{c|}{56.73\% } & 
\multicolumn{2}{c|}{62.97\% } & \multicolumn{2}{c|}{66.94\%} &  \multicolumn{2}{c|}{64.02\% }\\
\hline      

{\bf depth-target} & 
\multicolumn{2}{c|}{33.81\%} & \multicolumn{2}{c|}{47.60\% } &  \multicolumn{2}{c|}{51.24\%} & 
\multicolumn{2}{c|}{33.14\% } & \multicolumn{2}{c|}{49.91\% } &  \multicolumn{2}{c|}{51.59\% }\\
\hline 

{\bf rand-exchange} & 
\multicolumn{2}{c|}{10.96\%} & \multicolumn{2}{c|}{11.97\%  } &  \multicolumn{2}{c|}{16.41\%} & 
\multicolumn{2}{c|}{11.21\%} & \multicolumn{2}{c|}{15.21\%} &  \multicolumn{2}{c|}{15.31\%}\\
\hline 

{\bf ideal-target} & 
\multicolumn{2}{c|}{0\%} & \multicolumn{2}{c|}{0\%} &  \multicolumn{2}{c|}{0\%} & 
\multicolumn{2}{c|}{0\%} & \multicolumn{2}{c|}{0\%} &  \multicolumn{2}{c|}{0\%}\\
      
\noalign{\hrule height 1pt}\hline
\end{tabular}  
\end{center}
\caption{Energy distance of the five energy redistribution protocols in the lossless case for various number of nodes.}
\label{tab:EnergyDistance-lossless}
\end{table*}  

\begin{table*}[t]
\begin{center}
\begin{tabular}{|c|cccccccccccc|}
\noalign{\hrule height 1pt} \cline{1-13}
{\multirow{2}{*}{protocol} } & \multicolumn{6}{c|}{Same initial energy}   &  \multicolumn{6}{c|}{Different initial energy} \\ 
\cline{2-13}

& \multicolumn{2}{c|}{$n=10$} 
& \multicolumn{2}{c|}{$n=30$} 
& \multicolumn{2}{c|}{$n=50$} 
& \multicolumn{2}{c|}{$n=10$} 
& \multicolumn{2}{c|}{$n=30$} 
& \multicolumn{2}{c|}{$n=50$} \\
\noalign{\hrule height 1pt}\hline

{\bf $2$-exchange} 
& \multicolumn{2}{c|}{18.73\% } 
& \multicolumn{2}{c|}{37.26\% } 
& \multicolumn{2}{c|}{38.11\%} 
& \multicolumn{2}{c|}{18.33\% } 
& \multicolumn{2}{c|}{33.56\% } 
& \multicolumn{2}{c|}{35.90\% } \\
\hline

{\bf $0.5$-transfer} 
& \multicolumn{2}{c|}{52.81\% } 
& \multicolumn{2}{c|}{57.27\% } 
& \multicolumn{2}{c|}{51.51\% } 
& \multicolumn{2}{c|}{53.83\% } 
& \multicolumn{2}{c|}{51.44\% } 
& \multicolumn{2}{c|}{49.71\% } \\
\hline      

{\bf depth-target} 
& \multicolumn{2}{c|}{38.43\% } 
& \multicolumn{2}{c|}{60.86\% } 
& \multicolumn{2}{c|}{65.47\% } 
& \multicolumn{2}{c|}{40.29\% } 
& \multicolumn{2}{c|}{59.69\% } 
& \multicolumn{2}{c|}{63.42\% } \\
\hline 

{\bf rand-exchange} 
& \multicolumn{2}{c|}{20.91\% } 
& \multicolumn{2}{c|}{39.57\% } 
& \multicolumn{2}{c|}{40.34\% } 
& \multicolumn{2}{c|}{21.11\% } 
& \multicolumn{2}{c|}{34.53\% } 
& \multicolumn{2}{c|}{37.19\% } \\
\hline 

{\bf ideal-target}
& \multicolumn{2}{c|}{20.53\%} 
& \multicolumn{2}{c|}{46.91\%} 
& \multicolumn{2}{c|}{49.59\%} 
& \multicolumn{2}{c|}{20.55\%} 
& \multicolumn{2}{c|}{41.44\%} 
& \multicolumn{2}{c|}{44.50\%} \\
      
\noalign{\hrule height 1pt}\hline
\end{tabular}  
\end{center}
\caption{Energy distance of the five energy redistribution protocols in the lossy case for various number of nodes.}
\label{tab:EnergyDistance-lossy}
\end{table*}  

\begin{table*}[t]
\begin{center}
\begin{tabular}{|c|cccccccccccc|}
\noalign{\hrule height 1pt} \cline{1-13}
{\multirow{2}{*}{protocol} } & \multicolumn{6}{c|}{Same initial energy}   &  \multicolumn{6}{c|}{Different initial energy} \\ 
\cline{2-13}

& \multicolumn{2}{c|}{$n=10$} 
& \multicolumn{2}{c|}{$n=30$} 
& \multicolumn{2}{c|}{$n=50$} 
& \multicolumn{2}{c|}{$n=10$} 
& \multicolumn{2}{c|}{$n=30$} 
& \multicolumn{2}{c|}{$n=50$} \\
\noalign{\hrule height 1pt}\hline

{\bf $2$-exchange} 
& \multicolumn{2}{c|}{34.30\% } 
& \multicolumn{2}{c|}{56.76\% } 
& \multicolumn{2}{c|}{65.08\%} 
& \multicolumn{2}{c|}{31.62\% } 
& \multicolumn{2}{c|}{64.01\% } 
& \multicolumn{2}{c|}{68.22\% } \\
\hline

{\bf $0.5$-transfer} 
& \multicolumn{2}{c|}{47.35\% } 
& \multicolumn{2}{c|}{66.59\% } 
& \multicolumn{2}{c|}{77.33\% } 
& \multicolumn{2}{c|}{43.86\% } 
& \multicolumn{2}{c|}{72.53\% } 
& \multicolumn{2}{c|}{79.93\% } \\
\hline      

{\bf depth-target} 
& \multicolumn{2}{c|}{17.66\% } 
& \multicolumn{2}{c|}{23.59\% } 
& \multicolumn{2}{c|}{29.01\% } 
& \multicolumn{2}{c|}{18.64\% } 
& \multicolumn{2}{c|}{30.04\% } 
& \multicolumn{2}{c|}{32.70\% } \\
\hline 

{\bf rand-exchange} 
& \multicolumn{2}{c|}{36.22\% } 
& \multicolumn{2}{c|}{58.93\% } 
& \multicolumn{2}{c|}{69.97\% } 
& \multicolumn{2}{c|}{33.48\% } 
& \multicolumn{2}{c|}{65.77\% } 
& \multicolumn{2}{c|}{72.76\% } \\
\hline 

{\bf ideal-target}
& \multicolumn{2}{c|}{12.16\%} 
& \multicolumn{2}{c|}{21.83\%} 
& \multicolumn{2}{c|}{27.16\%} 
& \multicolumn{2}{c|}{11.20\%} 
& \multicolumn{2}{c|}{24.53\%} 
& \multicolumn{2}{c|}{27.32\%} \\
      
\noalign{\hrule height 1pt}\hline
\end{tabular}  
\end{center}
\caption{Energy loss until convergence time for the five energy redistribution protocols in the lossy case.}
\label{tab:Energy-loss}
\end{table*}

\section{Conclusions and possible extensions}\label{sec:conclusion}
In this paper, we have thoroughly studied the problem of energy-aware tree network formation, both theoretically as well as via computer simulations. We proposed simple interaction protocols for the formation of arbitrary and $k$-ary tree networks, as well as energy redistribution protocols that exploit different levels of information regarding the network structure and achieve different energy distributions. Still, our work leaves open several problems and reveals new ones.

\paragraph{Energy distributions}
Aiming for an energy distribution that requires every node to have exactly or at least twice the energy of each of its children as we did in this work seems like an intuitive choice when we deal with tree networks and, of course, one can define several natural generalizations of the energy distributions we considered. For instance, given a parameter $\alpha \geq 1$, one can define the $\alpha$-exact energy distribution requiring that all nodes have exactly $\alpha$ times the energy of each of their children; similarly one can define parameterized versions of relaxed and the exact up to the root energy distributions. Then, the energy redistribution protocols presented in Section~\ref{sec:energy-protocols} can be adapted (using corresponding parameters) in order to aim for such generalized energy distributions. 
Moreover, one can take it a step further and consider different parameters for different nodes: node $i$ can be associated with a parameter $\alpha_i \geq 1$ and the requirement is that its energy is (at least) $\alpha_i$ times the energy of each of its children. This extension is probably the most realistic one and can be used to model data transmission scenarios with diverse energy consumption for the nodes that are closer to the root.

As we observed in our simulations in Section~\ref{sec:experiments}, the main problem with the definition of such multiplicative energy distributions is that almost all of the network energy gets concentrated close to the root of the tree, even for exact distributions (for example, consider the case of complete binary trees). Therefore, considering other interesting energy distributions, possibly with {\em additive} terms instead of multiplicative ones, is an important future direction. For instance, is it possible to achieve an energy distribution where every node has exactly or at least $\gamma$ units more energy than each of its children, for some parameter $\gamma >0$?
More generally, what happens if every node is associated with a different such additive parameter?

\paragraph{Network structures}
In the paper of \citet{madhja_mswim2016}, the star network structure is very well defined: one of the nodes is the center, while all others all peripherals. In contrast, our tree network formation protocols may construct so many different tree structures, ranging from totally balanced trees to lines or even stars. This is one of the reasons why the final energy distribution may end up be so different than the ideal one, especially when the number of nodes is large. Can we design protocols that are able to construct predictable tree networks? If so, then it may be possible to achieve energy distributions of better quality.  

Of course, it is also very interesting to consider totally different network structures and corresponding tailor-made energy distributions. For instance, can we design distributed protocols for the formation of grid networks and energy distributions that require more central nodes to have more energy than outer ones (thinking the grid as an approximate three-dimensional pyramid)? Since the nodes interact only in pairs, one of the several bottlenecks in forming grid networks is that the nodes cannot understand if they are at a corner of the structure or somewhere in the middle. Taking care of issues like these is non-trivial, and one may need to assume some sort of global information.   

\paragraph{Battery constraints}
Another issue that has not been addressed in the current paper or in previous related work, is that the nodes actually have battery limits. Therefore, it is not possible for a node (e.g. the root of a tree network) to store a really high amount of energy. This gives rise to many interesting algorithmic challenges. For instance, suppose that a parent-child pair of nodes interacts, and in order to satisfy the required energy equilibrium, the parent must receive energy from the child. However, what happens if the parent has reached its battery limit? The child then has to dispose parts of its energy instead by transmitting it to some other unrelated node that can carry it. But then, other interaction may take place and the state of the energy distribution over the network may be completely different, meaning that the child node possibly has outdated information what can lead to ``confusion.'' Resolving such problems seems to be highly non-trivial and definitely deserves more investigation in the future.

\paragraph{Some practical issues}
Finally let us discuss some practical issues that are not captured by our model and methods. Since the nodes are mobile, building a fixed network might be problematic. Many nodes that are considered important (like the root of the tree) might rarely enter the communication range of other nodes, or nodes that are considered to be less important might physically move very close to many other nodes. In such cases, the roles of the nodes are clearly not well chosen and the network should be rebuilt. Taking care of this issue in a distributive manner seems to be a very challenging task, especially for nodes with limited computational power and memory.

Furthermore, we have assumed perfect conditions during the communication between pairs of nodes. However, in practice, the total time during which two nodes are in close proximity and critically affects the performance of energy transmission (and the total amount of energy that can be exchanged), while communication data may be lost due to faults in the network. Even though our protocols can still be applied in such cases (some interactions might not take place or some energy exchanges may be incomplete), they might be highly inefficient, and coming up with new protocols that can adapt (or are tailor-made) to such faults is necessary.

\bibliographystyle{named}
\bibliography{bibliography}

\end{document}